\newtheorem{definition}{Definition}
\newtheorem{lemma}{Lemma}
\newtheorem{theorem}{Theorem}
\newtheorem{remark}{Remark}
\newcommand\numberthis{\addtocounter{equation}{1}\tag{\theequation}}
\def\la{\left\langle}
\def\ra{\right\rangle}
\def\lb{\left(}
\def\rb{\right)}
\def\lcb{\left\{}
\def\rcb{\right\}}
\def\ln{\left\|}
\def\rn{\right\|}
\def\lsb{\left[}
\def\rsb{\right]}
\def\C{\mathbb{C}}
\def\S{\mathcal{S}}
\def\P{\mathcal{P}}
\def\T{\mathcal{T}}
\def\H{\mathcal{H}}
\def\D{\mathcal{D}}
\def\G{\mathcal{G}}
\def\R{\mathcal{R}}
\def\E{\mathbb{E}}
\def\I{\mathcal{I}}
\def\BZ{\bm{Z}}
\def\BH{\bm{H}}
\def\BL{\bm{L}}
\def\BU{\bm{U}}
\def\BV{\bm{V}}
\def\BC{\bm{C}}
\def\BD{\bm{D}}
\def\BE{\bm{E}}
\def\BW{\bm{W}}
\def\BQ{\bm{Q}}
\def\BA{\bm{A}}
\def\BX{\bm{X}}
\def\BY{\bm{Y}}
\def\BQ{\bm{Q}}
\def\BR{\bm{R}}
\def\BS{\bm{\Sigma}}
\def\bx{\bm{x}}
\def\be{\bm{e}}
\def\bz{\bm{z}}
\def\by{\bm{y}}
\def\bb{\bm{b}}
\def\bf{\bm{f}}
\def\bt{\bm{\tau}}
\def\whm{{\widehat{m}}}
\def\vep{\varepsilon}
\DeclareMathOperator{\rank}{rank}
\DeclareMathOperator{\diag}{diag}
\DeclareMathOperator{\trace}{trace}
\begin{document}

\title{Fast and Provable Algorithms for Spectrally Sparse Signal Reconstruction via Low-Rank Hankel Matrix Completion}

\author{Jian-Feng Cai
\thanks{Department of Mathematics, Hong Kong University of Science and Technology, Clear Water Bay, Kowloon, Hong Kong SAR, China. Email: \texttt{jfcai@ust.hk}} 
\and Tianming Wang 
\thanks{Department of Mathematics, University of Iowa, Iowa City, Iowa, USA. Email: \texttt{tianming-wang@uiowa.edu}}
\and Ke Wei
\thanks{Department of Mathematics, University of California, Davis, California, USA. Email: \texttt{kewei@math.ucdavis.edu}}
}

\maketitle

\begin{abstract}
A spectrally sparse signal of order $r$ is a mixture of $r$ damped or undamped complex sinusoids. This paper investigates the problem of reconstructing spectrally sparse signals from a random subset of $n$ regular time domain samples, which can be reformulated as a low rank Hankel matrix completion problem. We introduce an iterative hard thresholding (IHT) algorithm and a fast iterative hard thresholding (FIHT) algorithm for efficient reconstruction of spectrally sparse signals via low rank Hankel matrix completion. Theoretical recovery guarantees have been established for FIHT, showing that $O(r^2\log^2(n))$ number of samples are sufficient for exact recovery with high probability. Empirical performance comparisons establish significant computational advantages for IHT and FIHT. In particular, 
numerical simulations on $3$D arrays demonstrate the capability of FIHT on handling large and high-dimensional real data.
\end{abstract}

{\textbf{Keywords.}}  Spectrally sparse signal,   low rank Hankel matrix completion,  iterative hard thresholding, composite hard thresholding operator
\section{Introduction}
Spectrally sparse signals arise frequently from various applications, ranging from magnetic resonance imaging \cite{MRI}, fluorescence microscopy \cite{Microscopy}, radar imaging \cite{Radar}, nuclear magnetic resonance (NMR) spectroscopy \cite{QMCCO:ACIE:15}, to analog-to-digital conversion \cite{Analog}. 
For ease of presentation, consider a one-dimensional ($1$-D) signal which is a weighted superposition of $r$ 
complex sinusoids with or without damping factors
\begin{equation}\label{eq:1D_cont_model}
x(t)=\sum_{k=1}^{r}d_ke^{\left(2\pi \imath f_k-\tau_k\right)t},
\end{equation}
where $\imath=\sqrt{-1}$,  $f_k\in[0,1)$ is the normalized frequency, $d_k\in\mathbb{C}$ is the corresponding complex amplitude, and $\tau_k\geq 0$ is the damping factor. Let $\bx=\begin{bmatrix}x_0,&\cdots,&x_{n-1}\end{bmatrix}^T\in\C^{n}$ be the discrete samples of $x(t)$ at $t\in\{0,\cdots,n-1\}$; that is,
\begin{equation}\label{eq:sig1D}
\bx=\begin{bmatrix}x(0), &\cdots,& x(n-1)\end{bmatrix}^T.
\end{equation} 
Under many circumstances of practical interests, $x(t)$ can  only be  sampled at a subset of times in $\{0,\cdots,n-1\}$ due to costly experiments \cite{QMCCO:ACIE:15}, hardware limitation \cite{Analog}, or other inevitable reasons. Consequently only partial entries of $\bx$ are known. Thus we need to reconstruct $\bx$ from its observed entries in these applications. Let $\Omega\subset\{0,\cdots,n-1\}$ with $|\Omega|=m$ be the collection of indices of the observed entries.  The reconstruction problem can be expressed as 
\begin{equation}\label{eq:sample}
\mbox{find}\quad \bx\quad\mbox{subject to}\quad\P_\Omega(\bx)=\sum_{a\in\Omega}x_a\be_a,
\end{equation}
where $\bm{e}_a$ is the $a$-th canonical basis of $\mathbb{C}^n$, and $\P_\Omega$ is a 
projection operator defined as 
\begin{align*}
\P_\Omega(\bz) = \sum_{a\in\Omega}\la \bz,\be_a\ra\be_a.
\end{align*}

Generally it is  impossible to reconstruct a vector from its partial entries since the unknown entries can take any values without violating the equality constraint in \eqref{eq:sample}. However, the theory of compressed sensing \cite{donoho2006cs,CS} and matrix completion \cite{candesrecht2009mc,rechtfazelparrilo2010nnm} suggests that signals with inherent simple structures can be uniquely determined from   a number of measurements  that is less than the size of the signal.  
In a spectrally sparse signal, the number of unknowns is at most $3r$, which is smaller than the length of the signal if $r\ll n$.
 Therefore it is possible to reconstruct $\bx$ from $\P_\Omega\bx$.

This paper exploits the low rank structure of the Hankel matrix constructed from $\bx$. 
Let $\H$ be a linear operator which maps a vector $\bz\in\C^{n}$ to a Hankel  matrix $\H\bz\in\C^{n_1\times n_2}$ with $n_1+n_2=n+1$ as follows
$$
[\mathcal{H}\bm{z}]_{ij}=z_{i+j}, \quad \forall ~i\in\{0,\ldots,n_1-1\},~j\in\{0,\ldots,n_2-1\}, 
$$
where vectors and matrices are indexed starting with zero, and $[\cdot]_{ij}$ denotes the $(i,j)$-th entry of a matrix.  
Define $y_k=e^{(2\pi\imath f_k-\tau_k)}$ for $k=1,\ldots,r$. Since $\bx$ is a spectrally sparse signal, the  Hankel matrix $\H\bx$ admits a Vandermonde decomposition 
$$
\mathcal{H}\bm{x}=\bm{E}_L\bm{D}\bm{E}_R^T,
$$
where
$$
\bm{E}_L=
\left[\begin{array}{cccc}
1         & 1         & \cdots & 1 \\
y_1       & y_2       & \cdots & y_r \\
\vdots    & \vdots    & \vdots & \vdots \\
y_1^{n_1-1} & y_2^{n_1-1} & \cdots & y_r^{n_1-1} \\
\end{array}\right],~
\bm{E}_R=
\left[\begin{array}{cccc}
1         & 1         & \cdots & 1 \\
y_1       & y_2       & \cdots & y_r \\
\vdots    & \vdots    & \vdots & \vdots \\
y_1^{n_2-1} & y_2^{n_2-1} & \cdots & y_r^{n_2-1} \\
\end{array}\right]
$$
and
$\bm{D}$ is a diagonal matrix whose diagonal entries are  $d_1,\dots,d_r$. If all $y_k$'s are distinct  and $r\leq\min(n_1,n_2)$, $\BE_L$ and $\BE_R$ are both full rank matrices.
Therefore $\rank(\H\bx)=r$ when all $d_k$'s are non-zeros. Since $\H$ is injective, the reconstruction of $\bx$ from $\P_\Omega(\bx)$ is equivalent to the reconstruction of $\H\bx$ from partial revealed anti-diagonals that corresponds to the known entries of $\bx$. 
With a slight abuse of notation we also use $\P_\Omega$ to denote the projection of a matrix $\BZ\in\C^{n_1\times n_2}$ onto the subspace  determined by a subset of an orthonormal basis of Hankel matrices; that is,
\begin{align*}
\P_\Omega(\BZ) = \sum_{a\in\Omega}\la\BZ,\BH_a\ra \BH_a,
\end{align*}
where the set of matrices
\begin{align*}\lcb\BH_a=\frac{1}{\sqrt{w_a}}\H\be_a~|~ w_a=\#\lcb (i,j)~|~ i+j=a,~ 0\leq i\leq n_1-1,~0\leq j\leq n_2-1\rcb\rcb_{a=0}^{n-1}\numberthis\label{eq:Ha}
\end{align*} 
 forms an orthonormal basis of $n_1\times n_2$ Hankel matrices.
To reconstruct $\H\bx$, we seek the lowest rank Hankel matrix consistent with the revealed anti-diagonals by solving the
following {\em low rank Hankel matrix completion} problem
\begin{align}\label{eq:min_hankel_rank}
\min_{\bz}\rank(\H\bz)\quad\mbox{subject to}\quad \P_\Omega(\H\bz)=\P_\Omega(\H\bx).
\end{align}

{\em In this paper, we first develop an iterative hard thresholding  (IHT) algorithm to reconstruct spectrally sparse signals via  low rank Hankel matrix completion. Then the algorithm is further accelerated by  applying subspace projections to reduce the high per iteration computational complexity of the singular value decomposition, which leads to a fast iterative hard thresholding (FIHT) algorithm. Moreover, FIHT has been proved to be able to converge linearly to the unknown signal with high probability  if the number of revealed entries is of the order $O(r^2\log^2(n))$ and the algorithm is properly initialized.}
\subsection{Overview of Related Work}
In a paper that is mostly related to our work, Chen and Chi \cite{Chi} study nuclear norm minimization for  the low rank Hankel matrix completion problem, where $\rank(\H\bz)$ in \eqref{eq:min_hankel_rank} is replaced by the nuclear norm of $\H\bz$.
The authors show that $O(r\log^4(n))$ randomly selected samples are sufficient to guarantee 
exact recovery of spectrally sparse signals with high probability under some mild incoherence conditions. Theoretical recovery guarantees are also established in \cite{Chi} for robustness of nuclear norm minimization under bounded additive noise and sparse outliers. Nuclear norm minimization for the low rank Hankel matrix reconstruction problem under the random Gaussian sampling model is investigated in \cite{CQXY:ACHA:16}.

In a different direction, the sparsity of  $\bx$ in the frequency domain can be utilized to develop reconstruction algorithms. When there is no damping, i.e., $\tau_r=0$ for all $r$, one may discretize the frequency domain $[0,1)$ by a uniform grid and then use conventional compressed sensing \cite{CS,donoho2006cs} to  estimate the spectrum of $\bx$.  However, in many applications the true frequencies are continuous-valued. The discretization error will cause the so-called basis mismatch \cite{Mismatch}, resulting in the loss of sparsity of the signal under the discrete Fourier transform and consequently the degradation in recovery performance.  In \cite{Tang}, Tang et al. exploit the sparsity of $\bx$ in a continuous way via the atomic norm. They show that exact recovery with high probability  can be established from $O(r\log(r)\log(n))$ random time domain samples, provided that the complex amplitudes of $\bx$ have uniformly distributed random phases and the minimum wrap-around distance between its frequencies is at least $4/n$.

The methods developed in \cite{Chi} and \cite{Tang} utilize convex relaxation and are theoretically guaranteed to work. However, the common drawback of these otherwise very appealing convex optimization approaches is the high computational complexity of solving the equivalent semi-definite programming (SDP).  In \cite{PWGD}, Cai et al. develop a  fast non-convex algorithm for low rank Hankel matrix completion by minimizing the distance between low rank matrices and Hankel matrices with partial known anti-diagonals. The proposed algorithm   has been proved to  be able to converge to a critical point of the cost function. An accelerated variant has also been developed in \cite{PWGD} using Nesterov's memory technique as inspired by FISTA \cite{FISTA}.
\subsection{Notation and Organization of the Paper} 
The rest of the paper is organized as follows.  We first summarize the notation used throughout
this paper in the remainder of this section. The  IHT and FIHT algorithms are presented at the beginning of Sec.~\ref{sec:alg}, followed by the implementation details, theoretical recovery guarantees, extension to higher dimensions and connections to tight frame analysis sparsity in compressed sensing. Numerical evaluations in Sec.~\ref{sec:NumExp} demonstrate the efficiency of the proposed algorithms and their applicability for real applications. The proofs of the main results are presented in Sec.~\ref{sec:proofs} and Sec.~\ref{sec:conclusion} concludes this paper with future research directions.

Throughout this paper, we denote vectors by bold lowercase letters and matrices by bold uppercase letters.  Vectors and matrices are indexed starting with zero. The individual entries of vectors and matrices are denoted by normal font. For any matrix $\bm{Z}$, $\|\bm{Z}\|$, $\|\bm{Z}\|_F$, $\|\bm{Z}\|_\infty$ respectively denote its spectral norm, Frobenius norm, and the maximum magnitude of its entries respectively.  The $i$-th row and $j$-th column of a matrix  $\bm{Z}$ are denoted by $\bm{Z}^{(i,:)}$ and $\bm{Z}^{(:,j)}$ respectively. The transpose of vectors and matrices is denoted by $\bz^T$ and $\BZ^T$, while their conjugate transpose is denoted by $\bz^*$ and $\BZ^*$. The inner product of two matrices is defined as $\la \BZ_1,\BZ_2\ra=\trace(\BZ_2^*\BZ_1)$, and the inner product of two vectors is given by $\la\bz_1,\bz_2\ra=\bz_2^*\bz_1$. For a natural number $n$, we denote the set $\lcb 0,\cdots, n-1\rcb$ by $\lsb n\rsb$.

Operators are denoted by calligraphic letters. In particular,  $\I$ denotes the identity operator and $\H$ denotes the Hankel operator which maps an $n$-dimensional vector to an $n_1\times n_2$ Hankel matrix with $n_1+n_2=n+1$. The ratio $c_s$ is defined as $c_s=\max\{\frac{n}{n_1},\frac{n}{n_2}\}$.
We denote  the adjoint of $\H$ by $\H^*$, which is a linear operator from $n_1\times n_2$ matrices to $n$-dimensional vectors. For any matrix $\BZ\in\C^{n_1\times n_2}$, simple calculation reveals that $\H^*\BZ =\lcb\sum_{i+j=a}Z_{ij}\rcb_{a=0}^{n-1} $.  Define $\D^2=\H^*\H$. Then it is a diagonal operator from vectors to vectors of the form $\D^2\bz = \lcb w_az_a\rcb_{a=0}^{n-1}$, where $w_a$ defined in \eqref{eq:Ha} is the number of elements in $a$-th anti-diagonal of an $n_1\times n_2$ matrix.
The Moore-Penrose pseudoinverse of $\H$ is given by $\H^\dag=\D^{-2}\H^*$ which satisfies $\H^\dag\H=\I$. Finally, we use $C$ to denote a universal numerical constant whose value may change from line to line.

 \section{Algorithms and Theoretical Results}\label{sec:alg}
\subsection{Algorithms}
We present our first reconstruction algorithm in Alg.~\ref{IHT-GD}, which is an iterative hard thresholding  algorithm for the following  reformulation of \eqref{eq:min_hankel_rank},
\begin{align}\label{eq:obj_fun}
\min_{\bm{z}}\la\bz-\bx,\P_\Omega(\bz-\bx)\ra\quad\mbox{subject to}\quad \rank(\mathcal{H}\bm{z})=r.
\end{align}
In each iteration of IHT, the current estimate $\bx_l$ is first updated along the gradient descent direction  
under the Wirtinger calculus with the stepsize $p^{-1}=\frac{n}{m}$.  Then the Hankel matrix corresponding to the update is formed via the application of the linear operator $\H$, followed by an SVD truncation to its nearest rank $r$ approximation. The hard thresholding operator $\T_r(\cdot)$ in Step $3$ of Alg.~\ref{IHT-GD} is defined as 
\begin{align*}
\mathcal{T}_r(\bm{Z})=\sum_{k=1}^r\sigma_r\bm{u}_k\bm{v}_k^*,
\quad\mbox{where }\bm{Z}=\sum_{k=1}^{\min(n_1,n_2)}\sigma_k\bm{u}_k\bm{v}_k^*\mbox{ is an SVD with }\sigma_1\geq\sigma_2\geq\ldots\geq\sigma_{\min(n_1,n_2)}.
\end{align*}
Finally the new estimate $\bx_{l+1}$ is obtained via the application of $\H^\dag$ on the low rank matrix $\BL_{l+1}$.
\begin{algorithm}[htp]
\caption{Iterative Hard Thresholding (IHT)}
\label{IHT-GD}
\begin{algorithmic} 
\Statex \textbf{Initialize} $\bm{L}_{0}$ and \textbf{Set} $\bm{x}_{0}=\mathcal{H}^{\dag}\bm{L}_{0}$
\For{$l=0,1,\cdots$}\\
1.  $\bm{g}_l=\mathcal{P}_{\Omega}(\bm{x}-\bm{x}_{l})$\\
2.  $\bm{W}_{l}=\mathcal{H}(\bm{x}_{l}+p^{-1}\bm{g}_l)$\\
3.  $\bm{L}_{l+1}=\mathcal{T}_{r}\left(\bm{W}_{l}\right)$\\
4.  $\bm{x}_{l+1}=\mathcal{H}^{\dag}\bm{L}_{l+1}$
\EndFor
\end{algorithmic}
\end{algorithm}

\begin{algorithm}[htp]
\caption{Fast Iterative Hard Thresholding (FIHT)}
\label{Alg-GD}
\begin{algorithmic} 
\Statex \textbf{Initialize} $\bm{L}_{0}$ and \textbf{Set} $\bm{x}_{0}=\mathcal{H}^{\dag}\bm{L}_{0}$
\For{$l=0,1,\cdots$}\\
1. $\bm{g}_l=\mathcal{P}_{\Omega}(\bm{x}-\bm{x}_{l})$\\
2. $\bm{W}_{l}=\P_{\S_l}\mathcal{H}(\bm{x}_{l}+p^{-1}\bm{g}_l)$\\
3. $\bm{L}_{l+1}=\mathcal{T}_{r}\left(\bm{W}_{l}\right)$\\
4. $\bm{x}_{l+1}=\mathcal{H}^{\dag}\bm{L}_{l+1}$
\EndFor
\end{algorithmic}
\end{algorithm}

Empirically, IHT can achieve linear convergence rate  as demonstrated in Sec.~\ref{sec:nu_speed}. However, it requires to compute the truncated SVD of an $n_1\times n_2$ matrix in each iteration. Though there are fast SVD solvers \cite{PROPACK,xu2008fast}, it is still computationally expensive when $n$ ($=n_1+n_2-1$) is large. 
To improve the computational efficiency we propose to project the Hankel matrix $\mathcal{H}(\bm{x}_{l}+p^{-1}\bm{g}_l)$ onto a low dimensional subspace $\S_l$ before truncating it to the best rank $r$ approximation. 
The fast iterative hard thresholding algorithm equipped with an extra subspace  projection step is presented in Alg.~\ref{Alg-GD}, 
where $\P_{\S_l}(\cdot)$ denotes the projection of $n_1\times n_2$ matrices onto the subspace $\S_l$.
Inspired by the Riemannian optimization algorithms
for  low rank matrix completion \cite{Recovery,Completion,bart2012riemannian}, $\S_l$ is selected to be the direct sum of  the column and  row spaces of  $\BL_l$,
\begin{align}\label{eq:subspace}
\mathcal{S}_l=\{\bm{U}_l\bm{B}+\bm{C}\bm{V}_l^{*}~|~\bm{B}\in\mathbb{C}^{r\times n_2},~\bm{C}\in\mathbb{C}^{n_1\times r}\},
\end{align}
where $\BU_l\in\C^{n_1\times r}$ and $\BV_l\in\C^{n_2\times r}$ are the left and right singular vectors of $\BL_l$. 
The subspace $\S_l$ defined in \eqref{eq:subspace} can be geometrically interpreted as the tangent space of the embedded rank $r$ matrix manifold at $\BL_l$ \cite{bart2012riemannian}. For any matrix $\BZ\in\C^{n_1\times n_2}$, the projection of $\BZ$ onto $\S_l$ is given by 
\begin{align*}
\P_{\S_l}\lb\BZ\rb = \BU_l\BU_l^*\BZ+\BZ\BV_l\BV_l^*-\BU_l\BU_l^*\BZ\BV_l\BV_l^*.
\end{align*}

Iterative hard thresholding  is a family of simple yet efficient algorithms for compressed sensing \cite{bludav2009iht, blumensathdavies2010niht, foucart2011htp,CGIHT} and matrix completion \cite{tw2012nihtmc,jmd2010svp,goldfarbma2011fpca}, where in compressed sensing signals of interest are sparse and in matrix completion signals of interest are low rank. However, the signal of interest  in this paper is neither sparse nor low rank itself, but instead the Hankel matrix corresponding to  the signal is low rank. Therefore Algs.~\ref{IHT-GD} and \ref{Alg-GD} alternate between the vector space and the matrix space and this alternating structure does not exist in typical iterative hard thresholding algorithms for compressed sensing and matrix completion.
\subsection{Implementation and Computational Complexity} 
We focus on the implementation details of FIHT and show that the SVD of $\BW_l$ in the third step of Alg.~\ref{Alg-GD} can be computed using $O(r^3)$ floating point operations (flops) owing to the low rank structure of the matrices in $\S_l$. The implementation of IHT is similar to that of FIHT, except that the computation of the SVD of $\BW_l$ generally requires $O(n^3)$ flops.

Assume the rank $r$ matrix $\BL_l$ is stored by its SVD $\BL_l=\BU_l\BS_l\BV_l^*$ in each iteration. Then, 
\begin{align*}
\bx_l = \H^\dag\BL_l = \D^{-2}\H^*\BL_l = \D^{-2} \sum_{k=1}^r\BS_l^{(k,k)}\H^*\lb\BU_l^{(:,k)}\lb\BV_l^{(:,k)}\rb^*\rb,
\end{align*}
where $\H^*\lb\BU_l^{(:,k)}\lb\BV_l^{(:,k)}\rb^*\rb$ can be computed via fast convolution by noting that 
\begin{align*}
\lsb \H^*\lb\BU_l^{(:,k)}\lb\BV_l^{(:,k)}\rb^*\rb\rsb_a = \sum_{i+j=a}\BU_l^{(i,k)}\overline{\BV_l}^{(j,k)},\quad a = 0,\cdots,n-1.
\end{align*}
Therefore computing the last step of Alg.~\ref{Alg-GD} costs $O(rn\log(n))$ flops. 

We distinguish two cases regarding to the 
computations of $\BW_l$ and its SVD.

{\em Case $1$: $n_1\ne n_2$.} Let $\BH_l=\H\lb\bm{x}_{l}+p^{-1}\bm{g}_l\rb$. The intermediate matrix $\BW_l$ is stored by the following decomposition
\begin{equation*}
\begin{split}
\bm{W}_l=\mathcal{P}_{\mathcal{S}_l}\bm{H}_l
&=\bm{U}_l\bm{U}_l^*\bm{H}_l+\bm{H}_l\bm{V}_l\bm{V}_l^*-\bm{U}_l\bm{U}_l^*\bm{H}_l\bm{V}_l\bm{V}_l^*\cr
&=\bm{U}_l\underbrace{\bm{U}_l^*\bm{H}_l\bm{V}_l}_{\bm{C}\in \mathbb{C}^{r\times r}}\bm{V}_l^*+
\bm{U}_l\underbrace{\bm{U}_l^*\bm{H}_l(\bm{I}-\bm{V}_l\bm{V}_l^*)}_{\bm{X}^{*}\in \mathbb{C}^{r\times n_2}}+
\underbrace{(\bm{I}-\bm{U}_l\bm{U}_l^*)\bm{H}_l\bm{V}_l}_{\bm{Y}\in \mathbb{C}^{n_1\times r}}\bm{V}_l^*
\cr
&=\bm{U}_l\bm{C}\bm{V}_l^*+\bm{U}_l\bm{X}^*+\bm{Y}\bm{V}_l^*.
\end{split}
\end{equation*}
Note that  $\BH_l^*\BU_l$ and $\BH_l\BV_l$ in $\BC$, $\BX$ and $\BY$ can be computed using $r$ fast Hankel matrix-vector multiplications without forming $\BH_l$ explicitly, which requires $O(rn\log(n))$ flops.  Therefore the total 
computational cost for computing $\BC$, $\BX$ and $\BY$ is $O(r^2n+rn\log(n))$ flops.

Let $\bm{X}=\bm{Q}_1\bm{R}_1$ and $\bm{Y}=\bm{Q}_2\bm{R}_2$ respectively be the QR factorizations of $\BX$ and $\BY$. Then $\BQ_1\perp\BV_l$, $\BQ_2\perp\BU_l$ and $\BW_l$ can be  rewritten as 
\begin{align*}
\bm{W}_l=\bm{U}_l\bm{C}\bm{V}_l^*+\bm{U}_l\bm{R}_1^*\bm{Q}_1^*+\bm{Q}_2\bm{R}_2\bm{V}_l^*
=\begin{bmatrix}
\BU_l & \BQ_2
\end{bmatrix}
\begin{bmatrix}
\BC & \BR_1^*\\
\BR_2 & \bm{0}
\end{bmatrix}
\begin{bmatrix}
\BV_l & \BQ_1
\end{bmatrix}^*.
\end{align*}
Suppose the SVD of the middle $2r\times 2r$ matrix is given by 
\begin{align*}
\begin{bmatrix}
\BC & \BR_1^*\\
\BR_2 & \bm{0}
\end{bmatrix}=\BU_c\BS_c\BV_c^*.
\end{align*}
Then SVD of $\BW_l$ can be computed as 
\begin{align*}
\BW_l=\lb\begin{bmatrix}
\BU_l & \BQ_2
\end{bmatrix}\BU_c \rb\BS_c\lb \begin{bmatrix}
\BV_l & \BQ_1
\end{bmatrix}\BV_c\rb^*.
\end{align*}
Thus computing the SVD of $\BW_l$ requires  $O(r^2n+r^3)$ flops.
 
{\em Case 2: $n_1=n_2$.} In this case, $\BH_l$ is a square and symmetric matrix (but not Hermitian). Assume $\BL_l$ is also symmetric which can be achieved when $l=0$. Then $\BL_l$ admits a Takagi factorization $\BL_l=\BU_l\BS_l\BU_l^T$, which is also the SVD of $\BL_l$ \cite{xu2008fast}. So 
\begin{equation*}
\begin{split}
\bm{W}_l=\mathcal{P}_{\mathcal{S}_l}\lb\bm{H}_l\rb
&=\bm{U}_l\bm{U}_l^*\bm{H}_l+\bm{H}_l\overline{\bm{U}_l}\bm{U}_l^T-\bm{U}_l\bm{U}_l^*\bm{H}_l\overline{\bm{U}_l}\bm{U}_l^T\cr
&=\bm{U}_l\underbrace{\bm{U}_l^*\bm{H}_l\overline{\bm{U}_l}}_{\bm{C}\in \mathbb{C}^{r\times r}}\bm{U}_l^T+
\bm{U}_l\underbrace{\bm{U}_l^*\bm{H}_l(\bm{I}-\overline{\bm{U}_l}\bm{U}_l^T)}_{\bm{X}^{T}\in \mathbb{C}^{r\times n_1}}+
\underbrace{(\bm{I}-\bm{U}_l\bm{U}_l^*)\bm{H}_l\overline{\bm{U}_l}}_{\bm{X}\in \mathbb{C}^{n_1\times r}}\bm{U}_l^T
\cr
&=\bm{U}_l\bm{C}\bm{U}_l^T+\bm{U}_l\bm{X}^T+\bm{X}\bm{U}_l^T
\end{split}
\end{equation*}
is also a symmetric matrix and nearly half of the computational costs will be saved compared with the non-square case.

Let $\bm{X}=\bm{Q}\bm{R}$ be the QR factorization of $\BX$. Then $\BQ\perp\BU$ and 
\begin{align*}
\bm{W}_l=\bm{U}_l\bm{C}\bm{U}_l^T+\bm{U}_l\bm{R}^T\bm{Q}^T+\bm{Q}\bm{R}\bm{U}_l^T=\begin{bmatrix}
\BU_l&\BQ
\end{bmatrix}
\begin{bmatrix}
\BC & \BR^T\\\BR & \bm{0}
\end{bmatrix}
\begin{bmatrix}
\BU_l&\BQ
\end{bmatrix}^T.
\end{align*}
This, together with  the Takagi factorization (also the SVD) of the middle $2r\times 2r$ matrix 
$$
\begin{bmatrix}
\bm{C}   & \bm{R}^T \\
\bm{R} & \bm{0}
\end{bmatrix}=\bm{U}_c\bm{\Sigma}_c\bm{U}_c^T,
$$
gives the Takagi factorization (also the SVD) of $\bm{W}_l$
\begin{equation*}
\begin{split}
\bm{W}_l=\left(\begin{bmatrix}
\bm{U}_l & \bm{Q}\\
\end{bmatrix}
\bm{U}_c\right)\bm{\Sigma}_c\left(
\begin{bmatrix}
\bm{U}_l & \bm{Q} \\
\end{bmatrix}\bm{U}_c\right)^T.
\end{split}
\end{equation*}
Moreover, $\BL_{l+1}$ remains symmetric and admits a  Takagi factorization as the best rank $r$ approximation of $\BW_l$. 

In summary, the leading order per iteration computational cost of FIHT is $O(r^2n+rn\log (n)+r^3)$ flops, which can be further reduced by exploring the symmetric structure of matrices when $n_1=n_2$. In addition,  the largest matrices that need to be stored are the singular vector matrices of $\BW_l$. Therefore, FIHT requires only $O(rn)$ memory.
\subsection{Initializations and Recovery Guarantees}\label{sec:theory}
In this section, we present theoretical recovery guarantees for FIHT (Alg.~\ref{Alg-GD}). The guarantee analysis relies on restricted isometry properties of $\P_\Omega$ which cannot be established for IHT (Alg.~\ref{IHT-GD}). Moreover, numerical simulations in Sec.~\ref{sec:NumExp} suggest that while FIHT and IHT both have linear convergence rate, FIHT can be sufficiently faster  due to the low per iteration computational cost.

Let $\Omega=\{a_k~|~k=1,\ldots,m\}$. We consider the {\em sampling with replacement model} for $\Omega$; that is each index $a_k$ is drawn independently and uniformly from $\lcb 0,\cdots,n-1\rcb$.  Recall that we use $\P_\Omega(\cdot)$ to represent the projection of vectors onto a subset of the canonical basis of $\C^n$, i.e., 
\begin{align*}
\P_\Omega(\bz) = \sum_{k=1}^m \la \bz,\be_{a_k}\ra\be_{a_k},\quad\forall\bz\in\C^n
\end{align*}
as well as the projection of matrices onto a subset of an orthonormal basis of Hankel matrices, i.e.,
\begin{align*}
\P_\Omega(\BZ) = \sum_{k=1}^m\la\BZ,\BH_{a_k}\ra\BH_{a_k}, ~\quad\forall\BZ\in\C^{n_1\times n_2}
\end{align*}
since they are corresponding to each other and the context will make their distinction clear.  The key insight in  matrix completion suggests that in order to achieve successful low rank Hankel matrix completion, it requires the singular vectors of  the underlying Hankel matrix $\H\bx$ are not aligned with the orthonormal basis $\lcb\BH_a\rcb_{a=0}^{n-1}$. This can be guaranteed if the smallest singular values of the  left matrix $\BE_L$ and the right matrix $\BE_R$ in the Vandermonde decomposition of $\H\bx$ are bounded away from zero.
\begin{definition}\label{def:incoherence}
The rank $r$ Hankel matrix $\mathcal{H}\bm{x}$ with the Vandermonde decomposition $\H\bx=\BE_L\BD\BE_R^T$ is said to be  $\mu_0$-incoherent if there exists a numerical constant $\mu_0>0$ such that
$$
\sigma_{\min}(\bm{E}_L^*\bm{E}_L)\geq \frac{n_1}{\mu_0},
~
\sigma_{\min}(\bm{E}_R^*\bm{E}_R)\geq \frac{n_2}{\mu_0}.
$$
\end{definition} 
This incoherence property was introduced in \cite{Chi} and is crucial to our proofs.  Moreover we know from \cite[Thm.~2]{MUSIC} that, in the undamping case, if the minimum wrap-around distance between the frequencies is greater than about $\frac{2}{n}$, this property can be satisfied.  Let $\H\bx=\BU\BS\BV^*$ be the reduced SVD of $\H\bx$ and $\P_{\BU}(\cdot)$ and $\P_{\BV}(\cdot)$ respectively be the orthogonal projections onto the subspaces spanned by $\BU$ and $\BV$. 
The following lemma follows directly from Def.~\ref{def:incoherence}.
\begin{lemma}\label{lem:U_V}
Let $\H\bx=\BU\BS\BV^*=\BE_L\BD\BE_R^T$. Assume $\H\bx$ is $\mu_0$ incoherent and define  $c_s=\max\lcb\frac{n}{n_1},\frac{n}{n_2}\rcb$. Then
\begin{align*}
\ln\BU^{(i,:)}\rn^2\leq\frac{\mu_0c_sr}{n}\quad&\mbox{and}\quad\ln\BV^{(j,:)}\rn^2\leq\frac{\mu_0c_sr}{n}\numberthis\label{eq:row_norm_U_V},\\
\ln\P_{\BU}(\BH_a)\rn_F^2\leq\frac{\mu_0c_sr}{n}\quad&\mbox{and}\quad \ln\P_{\BV}(\BH_a)\rn_F^2\leq \frac{\mu_0c_sr}{n},\numberthis\label{eq:proj_norm_U_V}
\end{align*}
\end{lemma}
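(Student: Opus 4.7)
The plan is to exploit the fact that $\BU$ and $\BE_L$ span the same $r$-dimensional column space (namely, that of $\H\bx$), so there exists an invertible $r\times r$ matrix $\BA$ with $\BU = \BE_L\BA$. Orthonormality $\BU^*\BU = \BI$ then forces $\BA^*(\BE_L^*\BE_L)\BA = \BI$, i.e., $\BA\BA^* = (\BE_L^*\BE_L)^{-1}$; both matrices are invertible because the incoherence hypothesis guarantees $\sigma_{\min}(\BE_L^*\BE_L)>0$. Reading this on the $i$-th row,
\begin{align*}
\ln\BU^{(i,:)}\rn^2 = \BE_L^{(i,:)}\,\BA\BA^*\,(\BE_L^{(i,:)})^* = \BE_L^{(i,:)}(\BE_L^*\BE_L)^{-1}(\BE_L^{(i,:)})^*.
\end{align*}
Since $\tau_k\geq 0$ implies $|y_k|=e^{-\tau_k}\leq 1$, every entry $y_k^i$ of $\BE_L^{(i,:)}$ has modulus at most one, so $\ln\BE_L^{(i,:)}\rn^2\leq r$. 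Combining this with the operator-norm bound $\ln(\BE_L^*\BE_L)^{-1}\rn = \sigma_{\min}(\BE_L^*\BE_L)^{-1}\leq \mu_0/n_1$ from Definition~\ref{def:incoherence} yields $\ln\BU^{(i,:)}\rn^2\leq \mu_0 r/n_1\leq \mu_0 c_s r/n$, where the last inequality uses $n/n_1\leq c_s$. The same argument with $\BE_R$ in place of $\BE_L$ bounds $\ln\BV^{(j,:)}\rn^2$, giving \eqref{eq:row_norm_U_V}.

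For \eqref{eq:proj_norm_U_V}, I first use $\BU^*\BU=\BI$ together with $\BH_a = \frac{1}{\sqrt{w_a}}\H\be_a$ to write
\begin{align*}
\ln\P_\BU(\BH_a)\rn_F^2 = \ln\BU\BU^*\BH_a\rn_F^2 = \ln\BU^*\BH_a\rn_F^2 = \frac{1}{w_a}\ln\BU^*\H\be_a\rn_F^2.
\end{align*}
Since $\H\be_a$ is supported only on its $a$-th anti-diagonal, its $j$-th column is either the standard basis vector $\be_{a-j}$ (when $0\leq a-j\leq n_1-1$) or zero, so each column of $\BU^*\H\be_a$ is either $(\BU^{(a-j,:)})^*$ or zero. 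Summing the column norms,
\begin{align*}
\ln\BU^*\H\be_a\rn_F^2 = \sum_{i\in I_a}\ln\BU^{(i,:)}\rn^2,\qquad I_a = \lcb i:0\leq i\leq n_1-1,\; 0\leq a-i\leq n_2-1\rcb,
\end{align*}
a sum with exactly $w_a$ terms by the very definition of $w_a$ in \eqref{eq:Ha}. Bounding each summand by $\max_i\ln\BU^{(i,:)}\rn^2\leq \mu_0 c_s r/n$ from the first part cancels the $1/w_a$ factor and yields the claim. The $\BV$ bound follows by the same computation applied to $\BH_a^* = \frac{1}{\sqrt{w_a}}(\H\be_a)^*$ and the row-norm bound on $\BV$.

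The argument is essentially a chain of routine identities, so I do not anticipate a serious obstacle. The only subtlety is the identification $\BA\BA^* = (\BE_L^*\BE_L)^{-1}$, which requires $\BA$ to be square and invertible; this holds because $\BU$ has $r$ orthonormal columns and $\BE_L$ has full column rank $r$ under incoherence, forcing both to be bases for the same $r$-dimensional subspace. An equivalent, $\BA$-free formulation is the explicit projector formula $\BU\BU^* = \BE_L(\BE_L^*\BE_L)^{-1}\BE_L^*$; the diagonal entries of this matrix are precisely $\ln\BU^{(i,:)}\rn^2$, which converts the whole first part into a single Rayleigh-quotient estimate.
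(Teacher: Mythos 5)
Your proof is correct, and the first half is in substance identical to the paper's: the paper writes $\BU=\BE_L(\BE_L^*\BE_L)^{-1/2}\BQ$ for an orthonormal $\BQ$, which is exactly your $\BU=\BE_L\BA$ with $\BA\BA^*=(\BE_L^*\BE_L)^{-1}$, and then applies the same Rayleigh-quotient/operator-norm estimate with $\|\BE_L^{(i,:)}\|^2\le r$. The only divergence is in \eqref{eq:proj_norm_U_V}: the paper re-invokes the Vandermonde factorization, writing $\BU\BU^*=\BE_L(\BE_L^*\BE_L)^{-1}\BE_L^*$ and bounding $\|\BE_L^*\BH_a\|_F^2\le r$ directly from the unit-modulus entries of $\BE_L$, whereas you deduce \eqref{eq:proj_norm_U_V} from \eqref{eq:row_norm_U_V} by observing that $\|\BU^*\BH_a\|_F^2$ is the average of the $w_a$ row norms $\|\BU^{(i,:)}\|^2$ over the $a$-th anti-diagonal. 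Your variant is slightly cleaner in that it isolates the incoherence input to a single place (the row-norm bound) and makes transparent why the $1/w_a$ normalization cancels; the paper's version has the mild advantage of not depending on the first inequality, so the two halves of the lemma are proved independently. Both are complete and yield the same constants.
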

\begin{proof}
The proof of  \eqref{eq:proj_norm_U_V} can be found in \cite{Chi}. We include the proof here to be self-contained. We only prove the left inequalities of \eqref{eq:row_norm_U_V} and \eqref{eq:proj_norm_U_V} as the right ones can be similarly established. Since $\BU\in\C^{n_1\times r}$ and $\BE_l\in\C^{n_1\times r}$ spans the same subspace and $\BU$ is orthogonal, there exists an orthonormal matrix $\BQ\in\C^{r\times r}$ such that  $\BU=\BE_L(\BE_L^*\BE_L)^{-1/2}\BQ$. So 
\begin{align*}
\ln\BU^{(i,:)}\rn^2=\ln \be_i^*\BE_L(\BE_L^*\BE_L)^{-1/2}\rn^2\leq \ln \be_i^*\BE_L\rn^2\ln (\BE_L^*\BE_L)^{-1}\rn\leq\frac{\mu_0r}{n_1}\leq \frac{\mu_0c_sr}{n}
\end{align*}
 and 
 \begin{align*}
 \ln\P_{\BU}(\BH_a)\rn_F^2 &= \ln\BU\BU^*\BH_a\rn_F^2=\ln \BE_L(\BE_L^*\BE_L)^{-1}\BE_L^*\BH_a\rn_F^2\leq\frac{\ln \BE_L^*\BH_a\rn_F^2}{\sigma_{\min}(\bm{E}_L^*\bm{E}_L)}\leq\frac{\mu_0r}{n_1}\leq\frac{\mu_0c_sr}{n},
 \end{align*}
 where we have  used the fact that $\BH_a$ only has $w_a$ nonzero entries of magnitude $1/\sqrt{w_a}$  in its $a$-th anti-diagonal and the magnitudes of the entries of $\BE_L$ is bounded above by one for both the damped and undampled case.
\end{proof}

As is typical in non-convex optimization, the theoretical recovery guarantees of FIHT are closely related to the initial guess. We will discuss two initialization strategies and the corresponding recovery guarantees for FIHT.
The proofs of the lemmas and theorems in Secs.~\ref{sec:one_step} and \ref{sec:resampling} will be provided in Sec.~\ref{sec:proofs}.
\subsubsection{Initialization via One Step Hard Thresholding}\label{sec:one_step}
Our first initial guess is $\BL_0=p^{-1}\T_r(\H\P_\Omega(\bx))$, which is obtained by truncating the Hankel matrix constructed from the observed entries of $\bx$. The following lemma which is of independent interest  bounds the deviation of $\BL_0$ from $\H\bx$.
\begin{lemma}\label{lem:initial}
Assume $\mathcal{H}\bm{x}$ is $\mu_0$-incoherent. Then there exists a universal constant $C>0$ such that
$$
\|\bm{L}_0-\mathcal{H}\bm{x}\| \leq C\sqrt{\frac{\mu_0c_s r\log(n)}{m}}\|\mathcal{H}\bm{x}\|
$$ 
 with probability at least $1-n^{-2}$.
\end{lemma}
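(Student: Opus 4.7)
The plan is to reduce the claim to a spectral-norm concentration estimate and then apply the non-commutative matrix Bernstein inequality. Since $\H\be_a=\sqrt{w_a}\BH_a$, the identity $\H\P_\Omega(\bz)=\P_\Omega\H(\bz)$ holds for every $\bz\in\C^n$, where $\P_\Omega$ on the left acts on vectors and on the right acts on matrices via the Hankel basis $\{\BH_a\}$. Therefore $\BL_0=p^{-1}\T_r(\P_\Omega\H\bx)$ is precisely the best rank-$r$ approximation of $p^{-1}\P_\Omega\H\bx$. Because $\H\bx$ itself has rank $r$, the best-approximation property yields $\ln\BL_0-p^{-1}\P_\Omega\H\bx\rn\leq\ln\H\bx-p^{-1}\P_\Omega\H\bx\rn$, and the triangle inequality then gives $\ln\BL_0-\H\bx\rn\leq 2\ln p^{-1}\P_\Omega\H\bx-\H\bx\rn$. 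It therefore suffices to bound the right-hand side by $C\sqrt{\mu_0c_sr\log(n)/m}\,\ln\H\bx\rn$ with probability at least $1-n^{-2}$.

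Using the identity $\la\H\bx,\BH_a\ra\BH_a=x_a\H\be_a$, I decompose this deviation as a sum of i.i.d.\ centered matrices $\sum_{k=1}^m S_k$ with $\BZ_k:=p^{-1}x_{a_k}\H\be_{a_k}$ and $S_k:=\BZ_k-\tfrac{1}{m}\H\bx$ (one checks that $\E[\BZ_k]=\tfrac{1}{m}\H\bx$). Matrix Bernstein then demands a uniform bound $R\geq\ln S_k\rn$ and a variance bound $\sigma^2\geq\max(\ln\sum_k\E[S_kS_k^*]\rn,\,\ln\sum_k\E[S_k^*S_k]\rn)$. The uniform bound follows from $\ln\H\be_a\rn=1$ together with the incoherence-based entrywise estimate $|x_a|=|[\H\bx]_{ij}|\leq \ln\BU^{(i,:)}\rn\,\ln\BV^{(j,:)}\rn\,\ln\H\bx\rn\leq \frac{\mu_0c_sr}{n}\ln\H\bx\rn$ (for any $i+j=a$), giving $R\lesssim\frac{\mu_0c_sr}{m}\ln\H\bx\rn$.

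The variance is the main technical step. A direct calculation yields
\[
\E[\BZ_k\BZ_k^*]=\frac{n}{m^2}\sum_{a=0}^{n-1}|x_a|^2\,\H\be_a(\H\be_a)^*,
\]
which is a diagonal matrix whose $i$-th entry equals $\frac{n}{m^2}\sum_{j=0}^{n_2-1}|x_{i+j}|^2=\frac{n}{m^2}\ln(\H\bx)^{(i,:)}\rn^2$. Lemma~\ref{lem:U_V} applied to the SVD $\H\bx=\BU\BS\BV^*$ gives $\ln(\H\bx)^{(i,:)}\rn^2\leq \ln\BU^{(i,:)}\rn^2\ln\H\bx\rn^2\leq\frac{\mu_0c_sr}{n}\ln\H\bx\rn^2$, so $\ln\E[\BZ_k\BZ_k^*]\rn\leq \frac{\mu_0c_sr}{m^2}\ln\H\bx\rn^2$. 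A symmetric calculation for $\E[\BZ_k^*\BZ_k]$ using the row-norm bound on $\BV$ produces the same estimate, and $\E[S_kS_k^*]\preceq \E[\BZ_k\BZ_k^*]$ yields $\sigma^2\lesssim \frac{\mu_0c_sr}{m}\ln\H\bx\rn^2$. Plugging $R$ and $\sigma^2$ into matrix Bernstein with $t=C\sqrt{\mu_0c_sr\log(n)/m}\,\ln\H\bx\rn$ produces the claimed $1-n^{-2}$ bound (provided $m\gtrsim \mu_0c_sr\log n$, otherwise the conclusion is trivial).

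The main obstacle is obtaining the correct variance scaling. Bounding $|x_a|^2$ pointwise by $(\mu_0c_sr/n)^2\ln\H\bx\rn^2$ and summing the $n_2$ terms in $\sum_{j=0}^{n_2-1}|x_{i+j}|^2$ inserts an extra factor $\mu_0c_sr$ and leads only to a $\mu_0c_sr\sqrt{\log(n)/m}$ bound, one square-root factor worse than required. Avoiding this requires recognizing the diagonal sum as exactly the squared $i$-th row-norm of $\H\bx$ and absorbing a \emph{full} factor of $\mu_0c_sr/n$ via Lemma~\ref{lem:U_V}, rather than as two square-root factors applied pointwise.
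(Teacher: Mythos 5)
Your proposal is correct and follows essentially the same route as the paper's proof: the same centered i.i.d.\ decomposition of $p^{-1}\P_\Omega\H\bx-\H\bx$, the same Eckart--Young plus triangle-inequality reduction, and the same matrix Bernstein application in which the variance term is controlled by recognizing the diagonal of $\E[\BZ_k\BZ_k^*]$ as the squared row norms of $\H\bx$ and absorbing a full factor $\mu_0 c_s r/n$ via Lemma~\ref{lem:U_V}. The only cosmetic difference is that you bound the entrywise magnitude $|x_a|$ through the SVD row norms $\ln\BU^{(i,:)}\rn\ln\BV^{(j,:)}\rn\ln\H\bx\rn$, whereas the paper routes the same estimate through the Vandermonde factors $\BE_L,\BD,\BE_R$; both yield $\frac{\mu_0 c_s r}{n}\ln\H\bx\rn$, and both proofs share the same implicit proviso $m\gtrsim\mu_0 c_s r\log(n)$.
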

It follows from Lem.~\ref{lem:initial} that, if $m$ is sufficiently large and in the order of $r\log(n)$, the spectral norm distance between $\BL_0$ and $\H\bx$ can be less than any arbitrarily small constant. The following theoretical recovery guarantee can be established for FIHT based on this lemma.
\begin{theorem}[Guarantee I]\label{thm:IHT}
Assume $\mathcal{H}\bm{x}$ is $\mu_0$-incoherent. Let $0<\vep_0<\frac{1}{10}$ be a numerical constant and $\nu=10\vep_0<1$. Then with probability at least $1-3n^{-2}$, the iterates generated by FIHT (Alg.~\ref{Alg-GD}) with the initial guess $\bm{L}_0=p^{-1}\T_r(\H\P_\Omega(\bx))$ satisfy
$$
\|\bm{x}_{l}-\bm{x}\|\leq \nu^l\|\bm{L}_0-\mathcal{H}\bm{x}\|_F,
$$
provided
$$
m\geq C\max\left\{\vep_0^{-2}\mu_0c_s,(1+\vep_0)\vep_0^{-1}\mu_0^{1/2}c_s^{1/2}\right\}\kappa r n^{1/2} 
\log^{3/2}(n)
$$ 
for some universal constant $C>0$, where $\kappa=\frac{\sigma_{\max}(\H\bx)}{\sigma_{\min}(\H\bx)}$ denotes the condition number of $\H\bx$.
\end{theorem}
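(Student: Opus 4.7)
The plan is to prove the error bound inductively by establishing a one-step contraction of $\|\bm{L}_l-\mathcal{H}\bm{x}\|_F$ and anchoring the base case at Lemma~\ref{lem:initial}. First I would observe that $\bm{x}_l-\bm{x}=\mathcal{H}^\dag(\bm{L}_l-\mathcal{H}\bm{x})$ and, because $\mathcal{H}^\dag=\mathcal{D}^{-2}\mathcal{H}^*$ averages anti-diagonals, a short Cauchy--Schwarz calculation gives $\|\bm{x}_l-\bm{x}\|\leq \|\bm{L}_l-\mathcal{H}\bm{x}\|_F$. It thus suffices to show $\|\bm{L}_l-\mathcal{H}\bm{x}\|_F\leq \nu^l\|\bm{L}_0-\mathcal{H}\bm{x}\|_F$. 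Combining Lemma~\ref{lem:initial} with the rank-$2r$ bound $\|\cdot\|_F\leq \sqrt{2r}\|\cdot\|$ and the sample-complexity hypothesis (which absorbs the additional factors of $\kappa$ and $\sqrt{r\log n}$) shows that $\|\bm{L}_0-\mathcal{H}\bm{x}\|$ is small compared to $\sigma_r(\mathcal{H}\bm{x})$, so $\bm{L}_0$ lies in the desired ``basin of attraction.''

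For the induction step, set $\bm{X}=\mathcal{H}\bm{x}$ and $\mathcal{G}=\mathcal{H}\mathcal{H}^\dag$. Since $\bm{L}_{l+1}$ is the best rank-$r$ approximation of $\bm{W}_l$ and $\bm{X}$ has rank $r$, $\|\bm{L}_{l+1}-\bm{X}\|_F\leq 2\|\bm{W}_l-\bm{X}\|_F$. Using $\mathcal{H}\bm{x}_l=\mathcal{G}\bm{L}_l$ and $\bm{L}_l\in\mathcal{S}_l$, an algebraic rearrangement produces
\[
\bm{W}_l-\bm{X}=\mathcal{P}_{\mathcal{S}_l}(p^{-1}\mathcal{P}_\Omega-\mathcal{I})\mathcal{G}(\bm{X}-\bm{L}_l)-(\mathcal{I}-\mathcal{P}_{\mathcal{S}_l})\bm{X}.
\]
Splitting $\bm{X}-\bm{L}_l=\mathcal{P}_{\mathcal{S}_l}(\bm{X}-\bm{L}_l)+(\mathcal{I}-\mathcal{P}_{\mathcal{S}_l})\bm{X}$ reduces the task to three estimates: a tangent-space restricted-isometry bound $\|\mathcal{P}_{\mathcal{S}_l}(p^{-1}\mathcal{P}_\Omega-\mathcal{I})\mathcal{G}\mathcal{P}_{\mathcal{S}_l}\|\leq \varepsilon_0$; a bound on $\mathcal{P}_{\mathcal{S}_l}(p^{-1}\mathcal{P}_\Omega-\mathcal{I})\mathcal{G}$ acting on the orthogonal complement piece; and the manifold curvature estimate $\|(\mathcal{I}-\mathcal{P}_{\mathcal{S}_l})\bm{X}\|_F\leq \|\bm{L}_l-\bm{X}\|_F^2/\sigma_r(\bm{L}_l)$ familiar from Riemannian optimization on the fixed-rank manifold.

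The probabilistic input is a matrix Bernstein inequality applied to the sum $\sum_k \mathcal{P}_{\mathcal{S}_l}\langle\cdot,\bm{H}_{a_k}\rangle \bm{H}_{a_k}$; the variance and max-norm computations reduce to $\|\mathcal{P}_{\mathcal{S}_l}(\bm{H}_a)\|_F^2\lesssim \mu_0 c_s r/n$ uniformly in $a$, which follows from Lemma~\ref{lem:U_V} \emph{provided} the column/row spaces of $\bm{L}_l$ inherit the $\mu_0$-incoherence of $\bm{X}$. I would supply this inheritance by a Wedin-type perturbation bound: once $\|\bm{L}_l-\bm{X}\|$ is a small constant multiple of $\sigma_r(\bm{X})$, the singular vectors of $\bm{L}_l$ have row norms comparable (up to a constant) to those of $\bm{X}$. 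The induction is then self-bootstrapping: the geometric shrinkage of the Frobenius error preserves the spectral-norm closeness, which preserves incoherence, which in turn validates the RIP for the next iteration.

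The hardest step will be the tangent-space RIP. Because FIHT does not resample $\Omega$, each $\mathcal{S}_l$ is random and correlated with $\mathcal{P}_\Omega$, so a pointwise Bernstein bound at a fixed subspace does not suffice; one needs a uniform bound over all sufficiently incoherent rank-$r$ tangent spaces. The natural route is an $\varepsilon$-net argument over $\mu$-incoherent rank-$2r$ matrices (every element of $\mathcal{S}_l$ has rank at most $2r$) combined with matrix Bernstein and a continuity step. The coarseness of this covering is precisely what introduces the $n^{1/2}$ and $\kappa$ factors in the final sample-complexity requirement; sharpening them would require the resampling strategy of Sec.~\ref{sec:resampling}. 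Given the three operator-norm estimates and the incoherence-inheritance lemma, the one-step inequality becomes $\|\bm{L}_{l+1}-\bm{X}\|_F\leq \nu\|\bm{L}_l-\bm{X}\|_F$, and a union bound over the $O(\log n)$ probabilistic events (initialization, RIP, incoherence) yields the $1-3n^{-2}$ success probability in the statement.
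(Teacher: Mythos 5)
Your skeleton (reduce to a one-step contraction of $\|\bm{L}_l-\mathcal{H}\bm{x}\|_F$, bound $\|\bm{L}_{l+1}-\mathcal{H}\bm{x}\|_F\leq 2\|\bm{W}_l-\mathcal{H}\bm{x}\|_F$, split into a tangent-space RIP term plus a curvature term, and anchor the base case at Lem.~\ref{lem:initial} with the $\sqrt{2r}$ conversion) is exactly the paper's Thm.~\ref{thm:local}. But the step you flag as hardest is resolved in the paper by a different, and crucially \emph{deterministic}, device, and your proposed substitute has a genuine gap. The paper never proves a uniform RIP over all incoherent tangent spaces and never needs the iterates' singular vectors to be incoherent. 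Instead, Lem.~\ref{lem:RIPPS} gives the Bernstein-type RIP only at the \emph{fixed} tangent space $\mathcal{S}$ of $\mathcal{H}\bm{x}$ (which is independent of $\Omega$, so no net or decoupling is needed), and Lem.~\ref{lem:RIPPL} transfers it to the data-dependent $\mathcal{S}_l$ purely deterministically, using $\|\mathcal{P}_{\mathcal{S}_l}-\mathcal{P}_{\mathcal{S}}\|\leq 2\|\bm{L}_l-\mathcal{H}\bm{x}\|_F/\sigma_{\min}(\mathcal{H}\bm{x})$ together with the crude operator bound $\|\mathcal{P}_\Omega\|\leq 8\log(n)$ from Lem.~\ref{lem:sampling}. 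The price of that crude transfer is condition \eqref{c3}, which demands $\|\bm{L}_0-\mathcal{H}\bm{x}\|_F/\sigma_{\min}\lesssim p^{1/2}\varepsilon_0/\log(n)$ with $p^{1/2}=\sqrt{m/n}$; combined with Lem.~\ref{lem:initial} this is precisely where the $\kappa r n^{1/2}\log^{3/2}(n)$ sample complexity comes from. Only three probabilistic events are needed (Lems.~\ref{lem:sampling}, \ref{lem:RIPPS}, \ref{lem:initial}), giving $1-3n^{-2}$; your ``union bound over $O(\log n)$ events'' would not produce that constant.

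The concrete flaw in your route is the incoherence-inheritance step. Wedin's theorem controls the angle $\|\mathcal{P}_{\bm{U}_l}-\mathcal{P}_{\bm{U}}\|$, not row norms: a perturbation with $\|\mathcal{P}_{\bm{U}_l}-\mathcal{P}_{\bm{U}}\|=\delta$ can inflate a single row norm of $\bm{U}_l$ by $\delta$, and for $\|\bm{U}_l^{(i,:)}\|^2\lesssim \mu_0c_sr/n$ to survive you would need $\delta\lesssim\sqrt{r/n}$, far smaller than the ``small constant multiple of $\sigma_r(\mathcal{H}\bm{x})$'' your induction maintains (and smaller than what Lem.~\ref{lem:initial} delivers under the stated sample complexity). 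This is exactly why the paper introduces the trimming operator and resampling (Algs.~\ref{Resampling}--\ref{Trimming}) for Guarantee~II: iterate incoherence cannot be deduced from spectral closeness and must be enforced. Without it, the variance computation in your Bernstein/net argument has no valid bound on $\|\mathcal{P}_{\mathcal{S}_l}(\bm{H}_a)\|_F$, and the induction does not close. If you replace that step with the paper's fixed-$\mathcal{S}$ RIP plus the deterministic perturbation of Lems.~\ref{lem:Riemannian} and \ref{lem:RIPPL}, the rest of your argument goes through.
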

\begin{remark}{\normalfont
Since $\H\bx=\BE_L\BD\BE_R^T$, we have 
\begin{align*}
\kappa\leq\frac{\sigma_{\max}(\BE_L)}{\sigma_{\min}(\BE_L)}\cdot\frac{\max_{k}|d_k|}{\min_{k}|d_k|}\cdot\frac{\sigma_{\max}(\BE_R)}{\sigma_{\min}(\BE_R)}.
\end{align*}
It follows from \cite[Thm.~2]{MUSIC} that $\sigma_{\max}(\BE_L)$ (resp. $\sigma_{\max}(\BE_R)$) and $\sigma_{\min}(\BE_L)$ (resp. $\sigma_{\min}(\BE_R)$)  are both proportional to $\sqrt{n_1}$ (resp. $\sqrt{n_2}$) when the frequencies of $\bx$ are well separated. Thus the condition number of $\H\bx$ is essentially proportional to the dynamical range $\max_{k}|d_k|/\min_{k}|d_k|$.

Since the number of measurements required in Thm.~\ref{thm:IHT} is proportional to $c_s=\max\lcb\frac{n}{n_1},\frac{n}{n_2}\rcb$ and $n_1+n_2-1=n$, it makes sense to construct a nearly square Hankel matrix to recover spectrally sparse signals via low rank Hankel matrix completion.
}\end{remark}
\subsubsection{Initialization via Resampling and Trimming}\label{sec:resampling}
The sampling complexity in Thm.~\ref{thm:IHT} depends on $\sqrt{n}$ which is no desirable since the degrees of freedom in a spectrally sparse signal is only proportional to $r$. To eliminate the dependence on $\sqrt{n}$, we 
investigate another initialization procedure which is described in Alg.~\ref{Resampling}. 
\begin{algorithm}[htp]
\caption{Initialization via Resampled FIHT and Trimming}
\label{Resampling}
\begin{algorithmic}[]
\State \textbf{Partition} $\Omega$ into $L+1$ disjoint sets $\Omega_0,\cdots,\Omega_{L}$ of equal size $\widehat{m}$, let $\widehat{p}=\frac{\widehat{m}}{n}$.
\State \textbf{Set} $\widetilde{\bm{L}}_0=\mathcal{T}_r\lb\widehat{p}^{-1}\mathcal{H}\mathcal{P}_{\Omega_0}\lb\bm{x}\rb\rb$, 
\For{$l=0,\cdots,L-1$}
\State 1. $\widehat{\bm{L}}_{l}=\mathrm{Trim}_{\mu_0}(\widetilde{\bm{L}}_{l})$
\State 2. $\widehat{\bm{x}}_l=\mathcal{H}^{\dag}\widehat{\bm{L}}_l$
\State 3. $\widetilde{\bm{L}}_{l+1}=\mathcal{T}_{r}\mathcal{P}_{\widehat{\mathcal{S}}_l}\mathcal{H}\left(\widehat{\bm{x}}_l+\widehat{p}^{-1}\mathcal{P}_{\Omega_{l+1}}\left(\bm{x}-\widehat{\bm{x}}_l\right)\right)$
\EndFor
\end{algorithmic}
\end{algorithm}

\begin{algorithm}[htp]
\caption{$\mathrm{Trim}_{\mu}$}
\label{Trimming}
\begin{algorithmic}[]
\State \textbf{Input:} $\widetilde{\bm{L}}_{l+1}=\widetilde{\bm{U}}_{l+1}\widetilde{\bm{\Sigma}}_{l+1}\widetilde{\bm{V}}_{l+1}^*$
\State \textbf{Output:} $\widehat{\bm{L}}_{l+1}=\widehat{\bm{A}}_{l+1}\widetilde{\bm{\Sigma}}_{l+1}\widehat{\bm{B}}_{l+1}^*$, where
$$
\widehat{\bm{A}}_{l+1}^{(i,:)}=\frac{\widetilde{\bm{U}}_{l+1}^{(i,:)}}{\left\|\widetilde{\bm{U}}_{l+1}^{(i,:)}\right\|} \min\left\{\left\|\widetilde{\bm{U}}_{l+1}^{(i,:)}\right\|,\sqrt{\frac{\mu c_s r}{n}}\right\},
\quad
\widehat{\bm{B}}_{l+1}^{(i,:)}=\frac{\widetilde{\bm{V}}_{l+1}^{(i,:)}}{\left\|\widetilde{\bm{V}}_{l+1}^{(i,:)}\right\|} \min\left\{\left\|\widetilde{\bm{V}}_{l+1}^{(i,:)}\right\|,\sqrt{\frac{\mu c_sr}{n}}\right\}.
$$
\end{algorithmic}
\end{algorithm}

Algorithm~\ref{Resampling} begins with  partitioning the sampling set $\Omega$ into $L+1$ disjoint subsets. In each iteration, the new estimate is obtained via an application of FIHT on the new sampling set followed by the trimming procedure.  The use of a fresh sampling set in each iteration breaks the dependence  between the last estimate and the sampling set, while the trimming procedure ensures that the estimate remains  an $\mu_0$-incoherent matrix after each iteration. The following lemma provides an estimation of the approximation accuracy of the initial guess returned by Alg.~\ref{Resampling}.
\begin{lemma}\label{lem:resampling}
Assume $\mathcal{H}\bm{x}$ is  $\mu_0$-incoherent. Then with probability at least $1-(2L+1)n^{-2}$, the output of Alg.~\ref{Resampling} satisfies
$$
\|\widetilde{\bm{L}}_L-\mathcal{H}\bm{x}\|_F\leq \left(\frac56\right)^{L} \frac{\sigma_{\min}(\mathcal{H}\bm{x})}{256\kappa^2}
$$
provided $\widehat{m}\geq C \mu_0 c_s\kappa^6 r^2 \log(n)$ for some universal constant $C>0$.
\end{lemma}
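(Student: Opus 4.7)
The plan is to prove the estimate by induction on $l$, establishing a contraction
$\|\widetilde{\bm{L}}_{l+1}-\mathcal{H}\bm{x}\|_F \leq \tfrac{5}{6}\|\widehat{\bm{L}}_l-\mathcal{H}\bm{x}\|_F$
for each inner iteration of Alg.~\ref{Resampling}, and then combine this with a base-case bound showing that the one-step hard-thresholded $\widetilde{\bm{L}}_0$ already lies within $\sigma_{\min}(\mathcal{H}\bm{x})/(256\kappa^2)$ of $\mathcal{H}\bm{x}$ in Frobenius norm. The whole argument rests on two independence/incoherence features of the algorithm: (i) partitioning $\Omega$ into $L+1$ disjoint blocks makes $\Omega_{l+1}$ statistically independent of $\widehat{\bm{L}}_l$, so the ``fixed-tangent-space'' concentration bounds we need only have to hold with probability $1-n^{-2}$ per iteration (giving the $(2L+1)n^{-2}$ total failure probability by a union bound that also accounts for the trimming event); and (ii) the trimming operator $\mathrm{Trim}_{\mu_0}$ guarantees that $\widehat{\bm{L}}_l$ stays $O(\mu_0)$-incoherent, which is what the concentration inequalities require.

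For the base case, I apply Lem.~\ref{lem:initial} to the sub-sampling set $\Omega_0$ of size $\widehat{m}$: with probability $\geq 1-n^{-2}$, $\|\widetilde{\bm{L}}_0-\mathcal{H}\bm{x}\|\leq C\sqrt{\mu_0 c_s r\log(n)/\widehat{m}}\,\|\mathcal{H}\bm{x}\|$. Since both $\widetilde{\bm{L}}_0$ and $\mathcal{H}\bm{x}$ have rank at most $r$, the Frobenius norm of their difference is at most $\sqrt{2r}$ times the spectral norm, yielding
\begin{equation*}
\|\widetilde{\bm{L}}_0-\mathcal{H}\bm{x}\|_F \leq C\sqrt{\frac{\mu_0 c_s r^2\log(n)}{\widehat{m}}}\cdot\kappa\,\sigma_{\min}(\mathcal{H}\bm{x}).
\end{equation*}
Demanding the right-hand side be at most $\sigma_{\min}(\mathcal{H}\bm{x})/(256\kappa^2)$ yields precisely the threshold $\widehat{m}\gtrsim \mu_0 c_s \kappa^6 r^2\log(n)$, which explains the $\kappa^6$ dependence.

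For the inductive step, I decompose the update using the identities $\mathcal{H}\mathcal{P}_{\Omega_{l+1}}^{v}=\mathcal{P}_{\Omega_{l+1}}^{m}\mathcal{H}$ on vectors and $\mathcal{H}\mathcal{H}^\dag=\mathcal{P}_H$ (projection onto the Hankel subspace). Writing $\bm{W}_l=\mathcal{H}(\widehat{\bm{x}}_l+\widehat{p}^{-1}\mathcal{P}_{\Omega_{l+1}}(\bm{x}-\widehat{\bm{x}}_l))$, this gives $\bm{W}_l-\mathcal{H}\bm{x}=(\mathcal{I}-\widehat{p}^{-1}\mathcal{P}_{\Omega_{l+1}}^m)\mathcal{P}_H(\widehat{\bm{L}}_l-\mathcal{H}\bm{x})$. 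Since $\mathcal{T}_r\mathcal{P}_{\widehat{\mathcal{S}}_l}\bm{W}_l$ is the best rank-$r$ approximation of a matrix of rank at most $2r$, Eckart--Young gives $\|\widetilde{\bm{L}}_{l+1}-\mathcal{H}\bm{x}\|_F\leq 2\|\mathcal{P}_{\widehat{\mathcal{S}}_l}\bm{W}_l-\mathcal{H}\bm{x}\|_F$. Then I split
\begin{equation*}
\mathcal{P}_{\widehat{\mathcal{S}}_l}\bm{W}_l-\mathcal{H}\bm{x}=\mathcal{P}_{\widehat{\mathcal{S}}_l}(\mathcal{I}-\widehat{p}^{-1}\mathcal{P}_{\Omega_{l+1}}^m)\mathcal{P}_H(\widehat{\bm{L}}_l-\mathcal{H}\bm{x})\;-\;(\mathcal{I}-\mathcal{P}_{\widehat{\mathcal{S}}_l})\mathcal{H}\bm{x}
\end{equation*}
and control the two pieces separately. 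The second piece is the standard tangent-space curvature term and obeys a quadratic bound $\|(\mathcal{I}-\mathcal{P}_{\widehat{\mathcal{S}}_l})\mathcal{H}\bm{x}\|_F\leq \|\widehat{\bm{L}}_l-\mathcal{H}\bm{x}\|_F^{2}/\sigma_{\min}(\mathcal{H}\bm{x})$, which is small once the induction hypothesis puts us in the regime $\|\widehat{\bm{L}}_l-\mathcal{H}\bm{x}\|_F\lesssim \sigma_{\min}(\mathcal{H}\bm{x})/\kappa^2$. The first piece is controlled by a restricted-isometry-type bound: because $\widehat{\mathcal{S}}_l$ is determined by $\Omega_0,\dots,\Omega_l$ only, the operator $\widehat{p}^{-1}\mathcal{P}_{\widehat{\mathcal{S}}_l}\mathcal{P}_{\Omega_{l+1}}^m\mathcal{P}_H$ concentrates around $\mathcal{P}_{\widehat{\mathcal{S}}_l}\mathcal{P}_H$ on the tangent space, and Lem.~\ref{lem:U_V} (applied to the trimmed $\widehat{\bm{L}}_l$) supplies exactly the incoherence needed to apply matrix Bernstein with the $O(\mu_0 c_s r\log n)$ sample count, delivering an operator-norm deviation bounded by an absolute constant less than $1/10$.

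The main obstacles are thus two: first, establishing the restricted isometry of $\widehat{p}^{-1}\mathcal{P}_{\widehat{\mathcal{S}}_l}\mathcal{P}_{\Omega_{l+1}}^m\mathcal{P}_{\widehat{\mathcal{S}}_l}$ for the random anti-diagonal sampling model (matrix Bernstein with the correct $\mu_0 c_s r$-dependent variance and $\infty$-norm bounds, using Lem.~\ref{lem:U_V} for the trimmed iterate); and second, quantifying how much the trimming step in Alg.~\ref{Trimming} can inflate the error, i.e.\ showing $\|\widehat{\bm{L}}_l-\mathcal{H}\bm{x}\|_F\leq c\|\widetilde{\bm{L}}_l-\mathcal{H}\bm{x}\|_F$ for some modest $c$, via a Wedin-type perturbation of the left/right singular vectors of $\widetilde{\bm{L}}_l$. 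Threading these bounds through the decomposition and choosing constants so that $2(\epsilon+\text{curvature}+\text{trimming blow-up})\leq 5/6$ closes the induction and yields the claim.
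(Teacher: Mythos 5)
Your overall architecture (induction with a $5/6$ contraction per resampled iteration, trimming to preserve incoherence, a base case via Lem.~\ref{lem:initial} with the $\sqrt{2r}$ conversion forcing $\widehat{m}\gtrsim\mu_0c_s\kappa^6r^2\log(n)$) matches the paper, and your base-case computation is correct. But there is a genuine gap in the inductive step: the restricted-isometry bound you invoke only holds for the operator \emph{sandwiched between two tangent-space projections}, i.e.\ for $\mathcal{P}_{\widehat{\mathcal{S}}_l}\mathcal{G}\bigl(\mathcal{I}-\widehat{p}^{-1}\mathcal{P}_{\widehat{\Omega}_{l+1}}\bigr)\mathcal{G}^*\mathcal{P}_{\widehat{\mathcal{S}}_l}$ (Lem.~\ref{lem:RIPPS}), whereas in your decomposition the operator acts on $\widehat{\bm{L}}_l-\mathcal{H}\bm{x}$, which is \emph{not} in $\widehat{\mathcal{S}}_l$. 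Splitting that argument into its tangential and normal components produces, besides the term you do control, a cross term of the form
\begin{equation*}
\Bigl\|\mathcal{P}_{\widehat{\mathcal{S}}_l}\mathcal{G}\bigl(\mathcal{I}-\widehat{p}^{-1}\mathcal{P}_{\widehat{\Omega}_{l+1}}\bigr)\mathcal{G}^*\bigl(\mathcal{I}-\mathcal{P}_{\widehat{\mathcal{S}}_l}\bigr)\bigl(\widehat{\bm{L}}_l-\mathcal{H}\bm{x}\bigr)\Bigr\|_F,
\end{equation*}
which your proposal never addresses. The naive way to kill it---bounding $\|\mathcal{P}_{\widehat{\mathcal{S}}_l}\mathcal{G}\widehat{p}^{-1}\mathcal{P}_{\widehat{\Omega}_{l+1}}\mathcal{G}^*\|$ by roughly $\sqrt{\log(n)/\widehat{p}}$ and pairing it with the quadratic curvature bound---reintroduces a $\sqrt{n/\widehat{m}}$ factor, i.e.\ exactly the $\sqrt{n}$ penalty of Guarantee~I that resampling is designed to remove; the induction then cannot close at $\widehat{m}=O(\mu_0c_s\kappa^6r^2\log(n))$. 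The paper's essential extra idea is to factor the normal component as $(\mathcal{P}_{\bm{U}}-\mathcal{P}_{\widehat{\bm{U}}_l})(\mathcal{I}-\mathcal{P}_{\bm{V}})(\widehat{\bm{L}}_l-\mathcal{H}\bm{x})$ and to prove a separate \emph{asymmetric} restricted isometry property (Lem.~\ref{lem:incoherence}) for $\mathcal{P}_{\widehat{\mathcal{S}}_l}\mathcal{G}(\mathcal{I}-\widehat{p}^{-1}\mathcal{P}_{\widehat{\Omega}_{l+1}})\mathcal{G}^*(\mathcal{P}_{\bm{U}}-\mathcal{P}_{\widehat{\bm{U}}_l})$, which needs incoherence of \emph{both} the true factors and the trimmed factors; this is a new concentration lemma, not a corollary of Lem.~\ref{lem:U_V} or Lem.~\ref{lem:RIPPS}.

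Two smaller quantitative issues. First, the trimming blow-up is not a ``modest $c$'': the relevant bound is $\|\widehat{\bm{L}}_l-\mathcal{H}\bm{x}\|_F\leq 8\kappa\|\widetilde{\bm{L}}_l-\mathcal{H}\bm{x}\|_F$, and the factor $\kappa$ is precisely why the induction hypothesis must be calibrated at $\sigma_{\min}(\mathcal{H}\bm{x})/(256\kappa^2)$ rather than at a $\kappa$-free radius; a Wedin-type argument that ignores this would not produce constants that close the loop. Second, the incoherence of the trimmed iterate comes from the output guarantee of $\mathrm{Trim}_{\mu_0}$ (row norms of $\widehat{\bm{U}}_l,\widehat{\bm{V}}_l$ bounded by $\sqrt{100\mu_0c_sr/(81n)}$), not from Lem.~\ref{lem:U_V}, which concerns only the Vandermonde factors of the true matrix.
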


We can obtain the following recovery guarantee for FIHT with $\BL_0$ being the output of Alg.~\ref{Resampling}.
\begin{theorem}[Guarantee II]\label{thm:resampling}
Assume $\mathcal{H}\bm{x}$ is $\mu_0$-incoherent. Let  $0<\varepsilon_0<\frac{1}{10}$ and $L=\left\lceil6\log\left(\frac{\sqrt{n}\log(n)}{16\varepsilon_0}\right)\right\rceil$. Define $\nu=10\vep_0<1$. Then with probability at least $1-\left(2L+3\right)n^{-2}$, the iterates generated by FIHT (Alg.~\ref{Alg-GD}) with $\bm{L}_0=\widetilde{\bm{L}}_L$ (the output of Alg.~\ref{Resampling}) satisfies  
$$
\|\bm{x}_{l}-\bm{x}\|\leq \nu^l\|\bm{L}_0-\mathcal{H}\bm{x}\|_F,
$$
provided
$$
m\geq C \mu_0 c_s\kappa^6 r^2 \log(n)\log\left(\frac{\sqrt{n}\log(n)}{16\varepsilon_0}\right)
$$ 
for some universal constant $C>0$.
\end{theorem}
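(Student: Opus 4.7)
The plan is to combine Lemma~\ref{lem:resampling}, which controls the initializer produced by Algorithm~\ref{Resampling}, with the local contraction lemma for FIHT that already underlies the contraction statement of Theorem~\ref{thm:IHT}. The latter says that, once $\|\bm{L}_0-\mathcal{H}\bm{x}\|_F$ is below a small threshold depending on $m$, and $\mathcal{P}_\Omega$ satisfies the concentration and tangent-space RIP estimates built up in the proof of Theorem~\ref{thm:IHT}, the FIHT iterates contract geometrically with rate $\nu=10\vep_0$. Theorem~\ref{thm:resampling} therefore reduces to verifying that the output of Algorithm~\ref{Resampling} clears that threshold under the stated sample complexity.

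First I would allocate the sample budget. Drawing $L+2$ independent batches of size $\whm=m/(L+2)$ under the sampling-with-replacement model, I feed the first $L+1$ batches to Algorithm~\ref{Resampling} as its $\Omega_0,\ldots,\Omega_L$ and reserve the last batch $\Omega^\flat$ for the subsequent FIHT iterations. Independence of $\Omega^\flat$ from the resampling batches is crucial: it lets the tangent-space RIP needed by the local contraction lemma be applied to a sample set that is probabilistically independent of $\bm{L}_0=\widetilde{\bm{L}}_L$, avoiding any circular conditioning. With $L=\left\lceil 6\log(\sqrt{n}\log(n)/(16\vep_0))\right\rceil$, the hypothesis $m\geq C\mu_0 c_s\kappa^6 r^2 \log(n)\log(\sqrt{n}\log(n)/(16\vep_0))$ yields $\whm\geq C\mu_0 c_s\kappa^6 r^2\log(n)$, which is exactly the assumption of Lemma~\ref{lem:resampling}.

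Next I would estimate the initialization error. Lemma~\ref{lem:resampling} delivers, with probability at least $1-(2L+1)n^{-2}$,
\begin{align*}
\|\bm{L}_0-\mathcal{H}\bm{x}\|_F\leq \lb\frac{5}{6}\rb^{L}\frac{\sigma_{\min}(\mathcal{H}\bm{x})}{256\kappa^2}\leq \frac{\vep_0\,\sigma_{\min}(\mathcal{H}\bm{x})}{16\sqrt{n}\log(n)\kappa^2},
\end{align*}
where the last inequality uses $(5/6)^L\leq e^{-L/6}\leq 16\vep_0/(\sqrt{n}\log(n))$. This right-hand side matches the entry tolerance of the local contraction lemma when the FIHT sample size is only $\whm=O(\mu_0 c_s\kappa^6 r^2\log(n))$; under the weaker one-shot initializer of Theorem~\ref{thm:IHT} the same lemma forces the larger budget $O(\kappa r n^{1/2}\log^{3/2}(n))$. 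Two additional events, the tangent-space concentration for $\mathcal{P}_{\Omega^\flat}$ and the row/column incoherence estimate on $\mathcal{H}\bm{x}$, each fail with probability at most $n^{-2}$, bringing the total failure probability to $(2L+3)n^{-2}$.

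Iterating the local contraction lemma then gives $\|\bm{x}_l-\bm{x}\|\leq \nu^l\|\bm{L}_0-\mathcal{H}\bm{x}\|_F$, as claimed. The main obstacle throughout this program is not the present assembly but the local contraction lemma itself: it requires controlling $\|(\mathcal{I}-\widehat{p}^{-1}\mathcal{P}_{\Omega^\flat})\mathcal{P}_{\mathcal{S}_l}\|$ via matrix Bernstein, a perturbation estimate for the best rank-$r$ truncation $\mathcal{T}_r$ near a rank-$r$ target, and a union bound over the two consecutive tangent spaces that appear in each FIHT step. Since that machinery is identical to what drives Theorem~\ref{thm:IHT} and is developed independently in Section~\ref{sec:proofs}, the only new ingredient needed for Theorem~\ref{thm:resampling} is the tightened initialization tolerance, which is exactly what the logarithmic choice of $L$ delivers.
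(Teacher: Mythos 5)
Your overall reduction is the same as the paper's: Lem.~\ref{lem:resampling} together with the choice $L=\lceil 6\log(\sqrt{n}\log(n)/(16\vep_0))\rceil$ forces $(5/6)^L\le 16\vep_0/(\sqrt{n}\log(n))$, so the initializer satisfies the closeness condition \eqref{c3} of the deterministic local convergence result (Thm.~\ref{thm:local}), and the remaining hypotheses \eqref{c1} and \eqref{c2} are supplied by Lems.~\ref{lem:sampling} and \ref{lem:RIPPS}, which accounts for the extra $2n^{-2}$ in the failure probability. That skeleton is exactly the paper's proof.

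There is, however, one genuine misstep. You reserve an additional independent batch $\Omega^\flat$ for the FIHT iterations and declare this independence ``crucial'' for avoiding circular conditioning. It is neither needed nor consistent with the statement being proved: Alg.~\ref{Resampling} partitions \emph{all} of $\Omega$ into the $L+1$ sets $\Omega_0,\dots,\Omega_L$, and FIHT (Alg.~\ref{Alg-GD}) then iterates with $\P_\Omega$ on the full sample set, so your argument analyzes a modified algorithm with a different sample allocation. The reason no fresh samples are required is that Thm.~\ref{thm:local} is deterministic, and its probabilistic inputs \eqref{c1} and \eqref{c2} involve only $\Omega$ and the tangent space $\S$ at the \emph{fixed} ground-truth matrix $\G\by$, not the data-dependent $\bm{L}_0$. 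The restricted isometry on the moving tangent spaces $\S_l$ (including $\S_0$ at $\bm{L}_0$) is then derived deterministically in Lem.~\ref{lem:RIPPL} from \eqref{c2} together with the perturbation bound $\|\P_{\S_l}-\P_{\S}\|\le 2\|\bm{L}_l-\G\by\|_F/\sigma_{\min}(\G\by)$ of Lem.~\ref{lem:Riemannian}; no union bound over iterations and no independence from $\bm{L}_0$ enters. Relatedly, you misidentify one of the two extra failure events: they are the repetition bound $\|\P_\Omega\|\le 8\log(n)$ from Lem.~\ref{lem:sampling} and the tangent-space concentration \eqref{c2} from Lem.~\ref{lem:RIPPS}, whereas the row-norm incoherence of $\H\bx$ is a deterministic consequence of the $\mu_0$-incoherence assumption (Lem.~\ref{lem:U_V}), not a probabilistic event. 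With the sample splitting removed and the events identified correctly, your argument coincides with the paper's.
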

\subsection{Spectrally Sparse Signal Reconstruction in Higher Dimensions}
Our results can be extended to higher dimensions based on the Hankel structures of multi-dimensional spectrally sparse signals. For concreteness, we discuss the three-dimensional setting but emphasize that the situation in general $d$ dimensions is similar. A $3$-dimensional array $\BX\in\C^{N_1\times N_2\times N_3}$ is spectrally sparse if
\begin{align*}
\BX\lb l_1,l_2,l_3\rb = \sum_{k=1}^rd_ky_k^{l_1}z_k^{l_2}w_k^{l_3},\quad\forall~(l_1,l_2,l_3)\in\lsb N_1\rsb\times\lsb N_2\rsb\times\lsb N_3\rsb
\end{align*}
with
\begin{align*}
y_k = \exp(2\pi\imath f_{1k}-\tau_{1k}),~z_k=\exp(2\pi\imath f_{2k}-\tau_{2k}), \mbox{ and }w_k=\exp(2\pi\imath f_{3k}-\tau_{3k})
\end{align*}
for some frequency triples $\bf_k=\lb f_{1k},f_{2k},f_{3k}\rb\in[0,1)^3$ and dampling factor triples $\bt_k=\lb \tau_{1k},\tau_{2k},\tau_{3k}\rb\in\mathbb{R}^3_{+}$.  Let $\Omega=\lcb (a_1,a_2,a_3)\in \lsb N_1\rsb\times\lsb N_2\rsb\times\lsb N_3\rsb\rcb$ be the set of indices for the known entries of $\BX$. The problem is to reconstruct $\BX$ from the partial known entries $\P_\Omega(\BX)$, which can be attempted by exploring the low rank Hankel structures as in one dimension.

The Hankel matrix corresponding to $\BX$ can be constructed recursively as follows 
\begin{align*}
\H\BX=\begin{bmatrix}
\H\BX(:,:,0)&\H\BX(:,:,1),&\cdots&\H\BX(:,:,N_3-n_3)\\
\H\BX(:,:,1)&\H\BX(:,:,2),&\cdots&\H\BX(:,:,N_3-n_3+1)\\
\vdots &\vdots &\ddots &\vdots\\
\H\BX(:,:,n_3-1)&\H\BX(:,:,n_3),&\cdots&\H\BX(:,:,N_3-1)
\end{bmatrix},
\end{align*}
where $\BX(:,:,l_3),~0\leq l_3< N_3$ is the $l_3$-th slice of $\BX$ and 
\begin{align*}
\H\BX(:,:,l_3)=\begin{bmatrix}
\H\BX(:,0,l_3)&\H\BX(:,1,l_3),&\cdots&\H\BX(:,N_2-n_2,l_3)\\
\H\BX(:,1,l_3)&\H\BX(:,2,l_3),&\cdots&\H\BX(:,N_2-n_2+1,l_3)\\
\vdots &\vdots &\ddots &\vdots\\
\H\BX(:,n_2-1,l_3)&\H\BX(:,n_2,l_3),&\cdots&\H\BX(:,N_2-1,l_3)
\end{bmatrix}.
\end{align*}
An explicit formula for $\H\BX$ is given by 
\begin{align*}
\lsb\H\BX\rsb_{ij}=\BX(l_1,l_2,l_3),
\end{align*}
where 
\begin{align*}
&i=i_1+i_2\cdot n_1+i_3\cdot n_1n_2,\\
&j = j_1+j_2\cdot (N_1-n_1+1)+j_3\cdot (N_1-n_1+1)(N_2-n_2+1),\\
&l_k=i_k+j_k, ~1\leq k \leq 3.
\end{align*}

There also exists a Vandermonde decomposition of $\H\BX$ of the form $\H\BX=\BE_L\BD\BE_R^T$, where the  $k$-th columns ($1\leq k\leq r$)
 of $\BE_L$ and $\BE_R$ are given by
{\begin{align*}
&\BE_L^{(:,k)} = \lcb y_k^{l_1}z_k^{l_2}w_k^{l_3},~(l_1,l_2,l_3)\in\lsb n_1\rsb\times\lsb n_2\rsb\times\lsb n_3\rsb\rcb,\\
&\BE_R^{(:,k)} = \lcb y_k^{l_1}z_k^{l_2}w_k^{l_3},~(l_1,l_2,l_3)\in\lsb N_1-n_1+1\rsb\times\lsb N_2-n_2+1\rsb\times\lsb N_3-n_3+1\rsb\rcb,
\end{align*}}and $\BD=\diag(d_1,\cdots,d_r)$ is a diagonal matrix. Therefore $\H\BX$ is still a rank $r$ matrix for high-dimensional arrays. To reconstruct $\BX$, we seek a three-dimensional array  that best fits the measurements and meanwhile corresponds to a rank $r$ Hankel matrix
\begin{align*}
\min_{\BZ\in\C^{N_1\times N_2\times N_3}}\la \BZ-\BX,\P_{\Omega}(\BZ-\BX)\ra\quad\mbox{subject to}\quad \rank(\mathcal{H}\bm{Z})=r.\numberthis\label{eq:3-array}
\end{align*}
The IHT (Alg.~\ref{IHT-GD}) and FIHT (Alg.~\ref{Alg-GD}) algorithms can be easily adapted for \eqref{eq:3-array}, with fast implementations for Hankel matrix-vector multiplications and the application of $\H^*$. Moreover, it can be established that $O(r^2\log^2(n))$ ($n=N_1N_2N_3$) number of measurements are sufficient for FIHT with resampling initialization to be able to reliably reconstruct spectrally sparse signals based on a similar incoherence notion for $\BE_L$ and $\BE_R$.
The details will be omitted for conciseness.
\subsection{Connections to Tight Frame Analysis Sparsity in Compressed Sensing}\label{sec:connection}
In its simplest form, compressed sensing \cite{donoho2006cs,CS} is about recovering a sparse vector from a number of linear measurements 
that is less than the length of the vector. Let $\bx\in\C^n$ be a vector with only $k$ nonzero entries and $\BA\in\C^{m\times n}$ be a measurement matrix from which we obtain $m\leq n$ measurements $\bb=\BA\bx$. Then compressed sensing attempts  to recover $\bx$ by finding a sparse vector that fits the measurements as well as possible 
\begin{align*}
\min_{\bz}\ln\BA\bz-\bb\rn^2\quad\mbox{subject to}\quad\ln\bz\rn_0=k,\numberthis\label{eq:cs_setup}
\end{align*}
where $\ln\bz\rn_0$ counts the number of nonzero entries in $\bz$.  
The simplest iterative hard thresholding algorithm for the compressed sensing problem is 
\begin{align*}
\bz_{l+1}=\T_k(\bz_l+\alpha\BA^*(\bb-\BA\bz_l)),\numberthis\label{alg:iht_cs}
\end{align*} 
where $\alpha$ is the line search stepsize and $\T_k$ denotes the hard thresholding operator which set all but the first  $k$ largest magnitude entries of a vector to zero. Theoretical recovery guarantees for \eqref{alg:iht_cs}  and its variants can be established  in items of the restricted isometry property of the measurement matrix $\BA$ \cite{bludav2009iht,blumensathdavies2010niht,foucart2011htp,CGIHT}.

However, in many real applications of interest,  the unknown vectors are not sparse, but instead they are sparse under some linear transforms.  For instance, though most of the natural images are not sparse,  they are usually sparse under  a class of wavelet or framelet transforms. For simplicity, we consider the tight frame analysis sparsity model which arises from a wide range of signal and image processing problems, see \cite{cai2008framelet,cai2010framelet,dong_shen_iciam} and references therein. Let $\BW\in\C^{d\times n}$ be a tight frame transform matrix which satisfies $\BW^T\BW=\bm{I}$. The tight frame analysis sparsity model assumes $\BW\bx$ is a sparse vector with only $k$ nonzero entries; that is $\ln\BW\bx\rn_0=k$ with $k\ll n$.
Then the compressed sensing problem under this assumption attempts to recover $\bx$ by seeking an analysis sparse vector which best fits the measurements
\begin{align*}
\min_{\bz}\ln\BA\bz-\bb\rn^2\quad\mbox{subject to}\quad\ln\BW\bz\rn_0=k.\numberthis\label{eq:cs_analysis_setup}
\end{align*}
An iterative hard thresholding algorithm can be developed for \eqref{eq:cs_analysis_setup} as follows by replacing $\T_k(\cdot)$ in \eqref{alg:iht_cs} with a composite hard thresholding operator $\BW^T\T_k\BW(\cdot)$,
\begin{align*}
\bz_{l+1}=\BW^T\T_k\BW(\bz_l+\alpha\BA^*(\bb-\BA\bz_l)).\numberthis\label{alg:iht_cs_analysis} 
\end{align*} 
 The {\em wavelet frame shrinkage operator} $\BW^T\T_k\BW(\cdot)$ has been widely used in signal and image processing  based on wavelet frame transforms, where $\T_k(\cdot)$ can also be the soft thresholding operator or other more complicated shrinkage operators; and \eqref{alg:iht_cs_analysis}  is typically referred to as the  {\em iterative wavelet frame shrinkage
algorithm} \cite{chan2003wavelet, dong_shen_iciam}.

There is a natural parallelization between the compressed sensing problem under the tight frame analysis sparsity model  \eqref{eq:cs_analysis_setup} and the spectrally sparse signal reconstruction problem via low rank Hankel matrix completion \eqref{eq:obj_fun}. In both problems,  the vectors to be reconstructed are not simple in the 
signal domain but simple in the transform domain. Therefore, in the iterative hard thresholding algorithms for these two problems the simple hard thresholding operators need to be replaced by the composite hard thresholding operators which first thresholding the vector in the transform domain and then synthesize the vector via the inverse transforms. A detailed comparison has been summarized in Tab.~\ref{tab:comp}.
\begin{table}[htp]
\centering
\caption{Parallelism between tight frame analysis sparsity in compressed sensing \eqref{eq:cs_analysis_setup} and low rank Hankel matrix completion in spectrally sparse signal reconstruction \eqref{eq:obj_fun}.}\label{tab:comp}
\vspace{0.2cm}
\makegapedcells
\setcellgapes{3pt}
\begin{tabular}{l|l}
\hline
\multicolumn{1}{c|}{\eqref{eq:cs_analysis_setup}}&\multicolumn{1}{c}{\eqref{eq:obj_fun}}\\
\hline
(a) $\bx$ is not sparse & (a) $\bx$ is not low rank\\
\hline 
(b) $\BW\bx$ is sparse, with $\BW^T\BW = \bm{I}$ & (b) $\H\bx$ is low rank, with $\H^\dag\H=\I$\\
\hline
\multirow{3}{*}{(c)  wavelet frame shrinkage $\BW^T\T_k\BW(\cdot)$ in \eqref{alg:iht_cs_analysis}}&
(c) low rank Hankel matrix thresholding \\&~~~~~~\llap{\textbullet} $\H^\dag\T_r\H(\cdot)$ in Alg.~\ref{IHT-GD}\\
&~~~~~~\llap{\textbullet} $\H^\dag\T_r\P_{\S_l}\H(\cdot)$ in Alg.~\ref{Alg-GD}\\
\hline
\end{tabular}
\end{table}

\section{Numerical Experiments}\label{sec:NumExp}
In this section, we conduct numerical experiments to evaluate the performance of IHT and FIHT. 
The experiments are  executed from Matlab 2014a on a MacBook Pro with a 2.7GHz dual-core Intel i5 CPU and 8 GB memory, and the algorithms are evaluated against successful  recovery rates, computational efficiency, robustness and capability of handling high-dimensional data.
We initialize IHT and FIHT using one step hard thresholding  computed via the PROPACK package \cite{PROPACK}
rather than the resampled FIHT (Alg.~\ref{Resampling}), as the former one has already shown very good performance and preliminary numerical results didn't present dramatic difference between those two initialization procedures for our simulations.
\subsection{Empirical Phase Transition}\label{sec:phase}
We investigate the recovery rates of IHT and FIHT in the framework of phase transition and compare them with EMaC \cite{Chi} and ANM \cite{Tang}. IHT and FIHT are terminated if the relative residual $\|\P_\Omega(\bm{x}_{rec})-\P_\Omega(\bm{x})\|_2/\|\P_\Omega(\bm{x})\|_2$ falls below $10^{-4}$ or $500$ number of iterations are reached. ANM and EMaC are implemented using CVX \cite{CVX} with default parameters. 
 The  spectrally sparse signals of length $n$ with $r$ frequency components are formed in the following way:  each frequency $f_k$ is uniformly sampled from $[0,1)$, and the argument of each complex coefficient $d_k$ is uniformly sampled from $[0,2\pi)$ while  the amplitude  is selected to be $1+10^{0.5c_k}$ with $c_k$ being uniformly distributed on $[0,1]$. Then $m$ entries of the test signals are sampled uniformly at random. 
For a given triple $(n,r,m)$, $50$ random tests are conducted.
We consider an algorithm to have successfully reconstructed a test signal if  $\|\bm{x}_{rec}-\bm{x}\|_2/\|\bm{x}\|\leq 10^{-3}$. The tests are conducted with $n=127$ and $p=m/n$ taking 18 equispaced values from 0.1 to 0.95. For a fixed pair of $(n,m)$, we start with $r=1$ and then increase it by one until it reaches a value such that the tested algorithm fails all the $50$ random tests.

\begin{figure}[!htb]
\centering
\begin{minipage}[b]{0.23 \textwidth}
\centering
	\subfigure{\includegraphics[width=1 \textwidth]{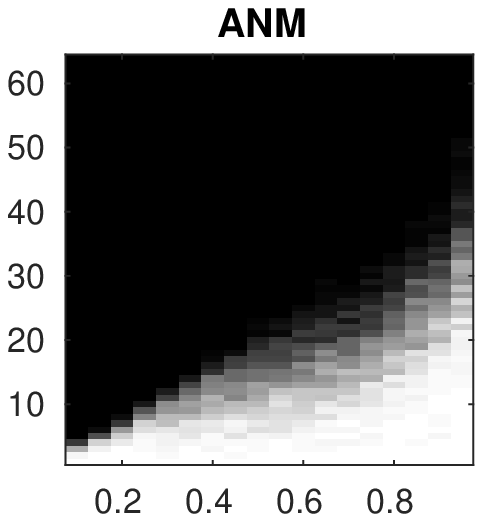}}
\end{minipage}
\begin{minipage}[b]{0.23 \textwidth}
\centering
	\subfigure{\includegraphics[width=1 \textwidth]{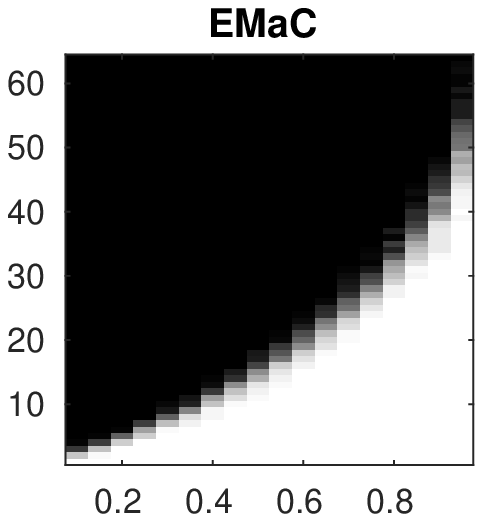}}
\end{minipage}
\begin{minipage}[b]{0.23 \textwidth}
\centering
	\subfigure{\includegraphics[width=1 \textwidth]{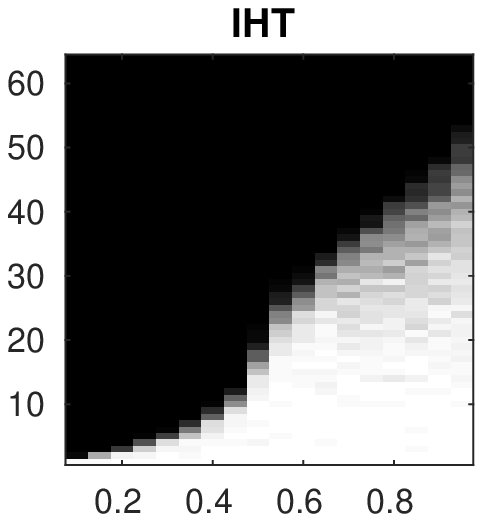}}
\end{minipage}
\begin{minipage}[b]{0.23 \textwidth}
\centering
	\subfigure{\includegraphics[width=1 \textwidth]{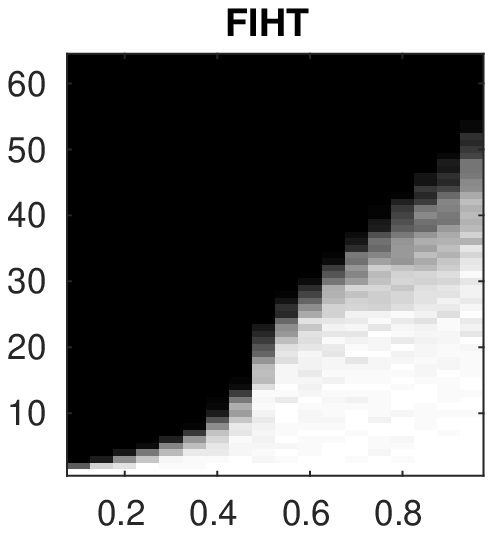}}
\end{minipage}
\vfill
\begin{minipage}[b]{0.23 \textwidth}
\centering
	\subfigure{\includegraphics[width=1 \textwidth]{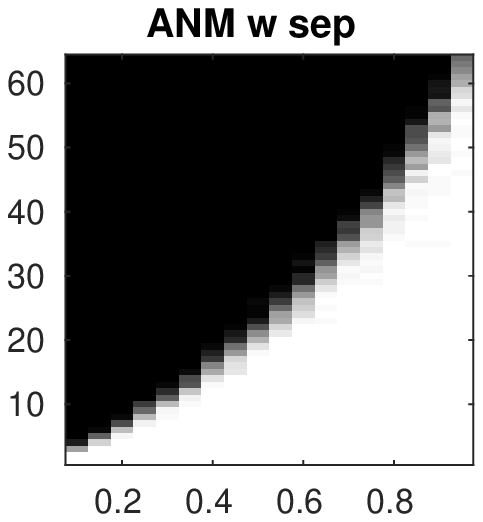}}
\end{minipage}
\begin{minipage}[b]{0.23 \textwidth}
\centering
	\subfigure{\includegraphics[width=1 \textwidth]{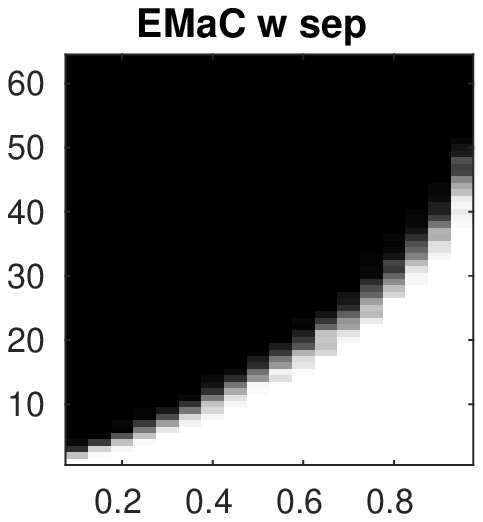}}
\end{minipage}
\begin{minipage}[b]{0.23 \textwidth}
\centering
	\subfigure{\includegraphics[width=1 \textwidth]{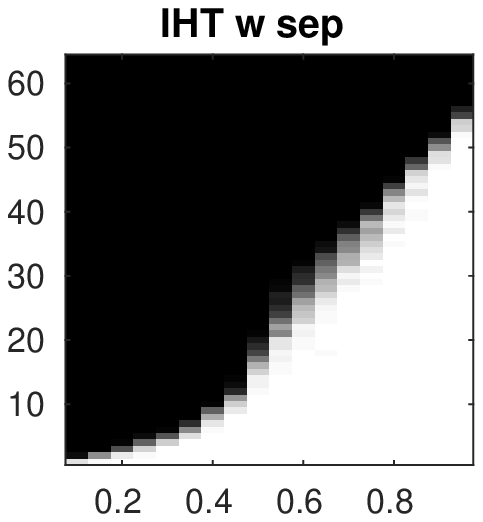}}
\end{minipage}
\begin{minipage}[b]{0.23 \textwidth}
\centering
	\subfigure{\includegraphics[width=1 \textwidth]{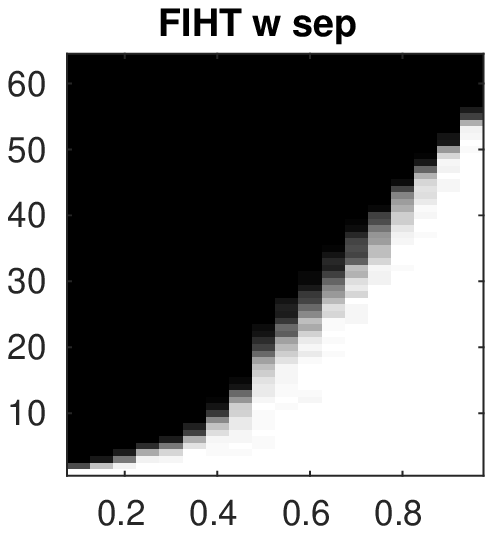}}
\end{minipage}
\caption{Phase transition comparisons:  $x$-axis is $p=m/n$ and $y$-axis is  $r$. Top: no restriction on frequencies of test signals; Bottom: wrap-around distances between frequencies is at least $1.5/n$.}
\label{fig:PT}
\end{figure}

The empirical phase transitions for the four tested algorithms ANM, EMaC, IHT and FIHT are presented in Fig.~\ref{fig:PT}, where  white color indicates that the algorithm can recover
all of the $50$ random test signals and on the other hand black color indicates the algorithm fails to recover each of
the randomly generated signals. The top four plots of the figure present the recovery phase transitions where no separation of the frequencies is imposed, while the bottom four plots presents the recovery phase transitions where the wrap-around distances between the randomly drawn frequencies are greater than $1.5/n$. First the figure shows that IHT and FIHT have similar empirical phase transitions for signals both with and without frequency separation. When the frequencies of test signals are separated, the phase transitions of IHT and FIHT are slightly lower than that of ANM, but higher than that of EMaC. The performance of ANM degrades severely  when the frequencies of test signals are not sufficiently separated, while IHT and FIHT can still achieve good performance.  The recovery phase transitions of EMaC seem to be irrelevant to the separation of frequencies.


\subsection{Computational Efficiency}\label{sec:nu_speed}

In this section, we compare IHT and FIHT  with PWGD on computational time. PWGD is an alternating projection algorithm which has been reported to be superior to ANM and EMaC in terms of computational efficiency \cite{PWGD}. In particular, we compare IHT and FIHT with an accelerated variant of PWGD based on  Nesterov's memory technique. In our experiments, PWGD is also initialized via one step hard thresholding and the parameters are tuned  as suggested in \cite{PWGD}.  The algorithms are tested with $n\in\{3999,7999\}$, $r\in\{15,30\}$ and $m\in\{800,1200\}$  and they are terminated whenever { $\|\bm{x}_{l+1}-\bm{x}_l\|_2/\|\bm{x}_l\|_2$ is less than $10^{-5}$}.
 For each triple $(n,r,m)$,  we run the algorithms on $10$ randomly generated problem instances where the signals are formed in the same way as in Sec.~\ref{sec:phase}. 
 The  average computational time  and average number of iterations for each tested algorithm are presented in Tab.~\ref{table:efficiency}.
The table shows that it takes almost the same number of iterations for IHT and FIHT to converge below the given tolerance, but FIHT requires about $1/3$ less computational time due to low per iteration computational complexity. Moreover, both IHT and FIHT are significantly faster than PWGD. 

\begin{table}[!htb]
\caption{{Average computational time (seconds) and average number of iterations of PWGD, IHT and FIHT over $10$ random problem instances per $(n,r,m)$ for $n\in\{3999,7999\}$, $r\in\{15,30\}$ and $m\in\{800,1200\}$.}}
\label{table:efficiency}
\begin{center}
{
\makegapedcells
\setcellgapes{3pt}
\small
\begin{tabular}{ccccccccccccc}
\hline
\multicolumn{1}{|c|}{$r$} & \multicolumn{6}{c|}{15} & \multicolumn{6}{c|}{30} \\
\hline
\multicolumn{1}{|c|}{$m$} & \multicolumn{3}{c|}{800} & \multicolumn{3}{c|}{1200} & \multicolumn{3}{c|}{800} & \multicolumn{3}{c|}{1200} \\
\hline
\multicolumn{1}{|c|}{} & \multicolumn{1}{c}{rel.err} & \multicolumn{1}{c}{iter} & \multicolumn{1}{c|}{time} & \multicolumn{1}{c}{rel.err} & \multicolumn{1}{c}{iter} & \multicolumn{1}{c|}{time} & \multicolumn{1}{c}{rel.err} & \multicolumn{1}{c}{iter} & \multicolumn{1}{c|}{time} & \multicolumn{1}{c}{rel.err} & \multicolumn{1}{c}{iter} & \multicolumn{1}{c|}{time} \\
\hline
\multicolumn{1}{|c|}{} & \multicolumn{12}{c|}{$n$=3999} \\
\hline
\multicolumn{1}{|c|}{PWGD} & 9e-6 & 55 & \multicolumn{1}{c|}{6.28} & 7.4e-6 & 35 & \multicolumn{1}{c|}{3.92} & 9.4e-6 & 71 & \multicolumn{1}{c|}{16.88} & 9e-6 & 42 & \multicolumn{1}{c|}{9.99} \\ 
\hline
\multicolumn{1}{|c|}{IHT} & 7.2e-6 & 12 & \multicolumn{1}{c|}{1.18} & 4.9e-6 & 9 & \multicolumn{1}{c|}{0.89} & 7.8e-6 & 19 & \multicolumn{1}{c|}{3.71} & 6.6e-6 & 13 & \multicolumn{1}{c|}{2.54} \\
\hline
\multicolumn{1}{|c|}{FIHT} & 6.1e-6 & 12 & \multicolumn{1}{c|}{0.70} & 6.2e-6 & 9 & \multicolumn{1}{c|}{0.53} & 6.8e-6 & 19 & \multicolumn{1}{c|}{1.98} & 6.9e-6 & 12 & \multicolumn{1}{c|}{1.41} \\
\hline
\multicolumn{1}{|c|}{} & \multicolumn{12}{c|}{$n$=7999} \\
\hline
\multicolumn{1}{|c|}{PWGD} & 9.5e-6 & 98 & \multicolumn{1}{c|}{27.69} & 8.7e-6 & 61 & \multicolumn{1}{c|}{17.49} & 9.6e-6 & 150 & \multicolumn{1}{c|}{97.36} & 9.3e-6 & 75 & \multicolumn{1}{c|}{48.95} \\ 
\hline
\multicolumn{1}{|c|}{IHT}  & 6.6e-6 & 13 & \multicolumn{1}{c|}{3.49} & 6.2e-6 & 10 & \multicolumn{1}{c|}{2.81} & 8.3e-6 & 24 & \multicolumn{1}{c|}{14.03} & 7.3e-6 & 15 & \multicolumn{1}{c|}{8.86} \\
\hline
\multicolumn{1}{|c|}{FIHT}  & 6.9e-6 & 12 & \multicolumn{1}{c|}{2.31} & 6.3e-6 & 10 & \multicolumn{1}{c|}{1.94} & 8e-6 & 23 & \multicolumn{1}{c|}{8.34} & 6.9e-6 & 14 & \multicolumn{1}{c|}{5.36} \\
\hline
\end{tabular}}
\end{center}
\end{table}


\subsection{Robustness to Additive Noise}\label{sec:robust}
We demonstrate the performance of IHT and FIHT under additive noise by conducting tests with the measurements 
 corrupted by the vector
$$
e=\sigma \cdot \|\mathcal{P}_{\Omega}(\bm{x})\|_2 \cdot \frac{\bm{w}}{\|\bm{w}\|_2},
$$
where $\bx$ is the random signal to be reconstructed, the entries of $\bm{w}$ are i.i.d. standard Gaussian random variables and $\sigma$ is referred to as the noise level. 

Tests are conducted with $9$ different values of $\sigma$ from $10^{-4}$ to 1, corresponding to $9$ equispaced signal-to-noise ratios (SNR) from 80 to 0 dB. For each $\sigma$, 10 random problem instances are tested  and the algorithms are terminated when $\|\bm{x}_{l+1}-\bm{x}_l\|_2/\|\bm{x}_l\|_2< 10^{-5}$.  The average relative reconstruction error in dB plotted against the SNR is presented in Fig.~\ref{fig:robustness} for IHT and FIHT. The
figure clearly shows the desirable linear scaling between the noise levels and the relative reconstruction errors for both IHT and FIHT.  It  can be further observed that the reconstruction error decreases as the number of measurements increases for both algorithms.



\begin{figure}[!htb]\label{fig:robustness}
\centering
\subfigure[]{
\begin{minipage}[b]{0.4 \textwidth}
\centering
	\includegraphics[width=1 \textwidth]{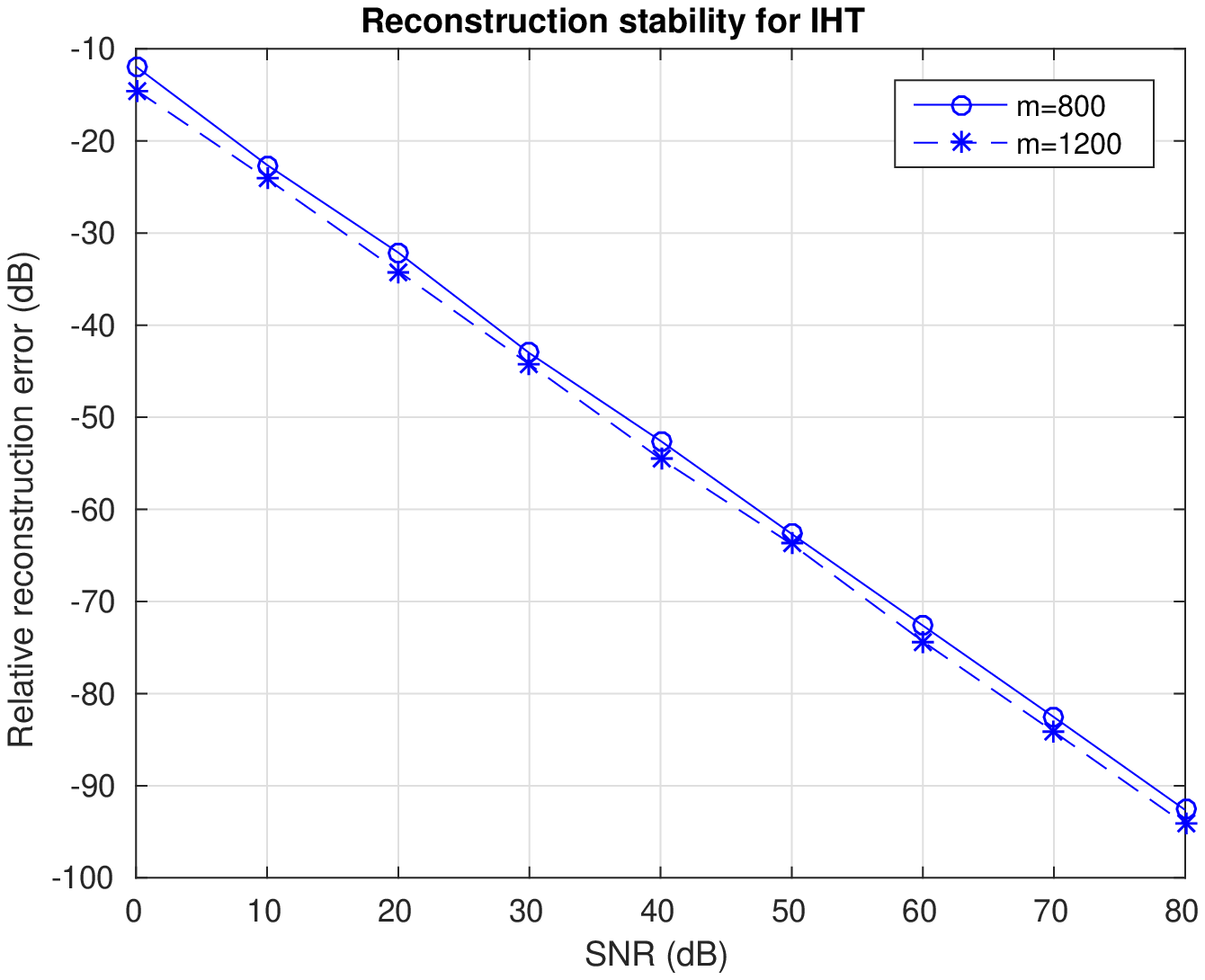}
\end{minipage}}
\subfigure[]{
\begin{minipage}[b]{0.4 \textwidth}
\centering
	\includegraphics[width=1 \textwidth]{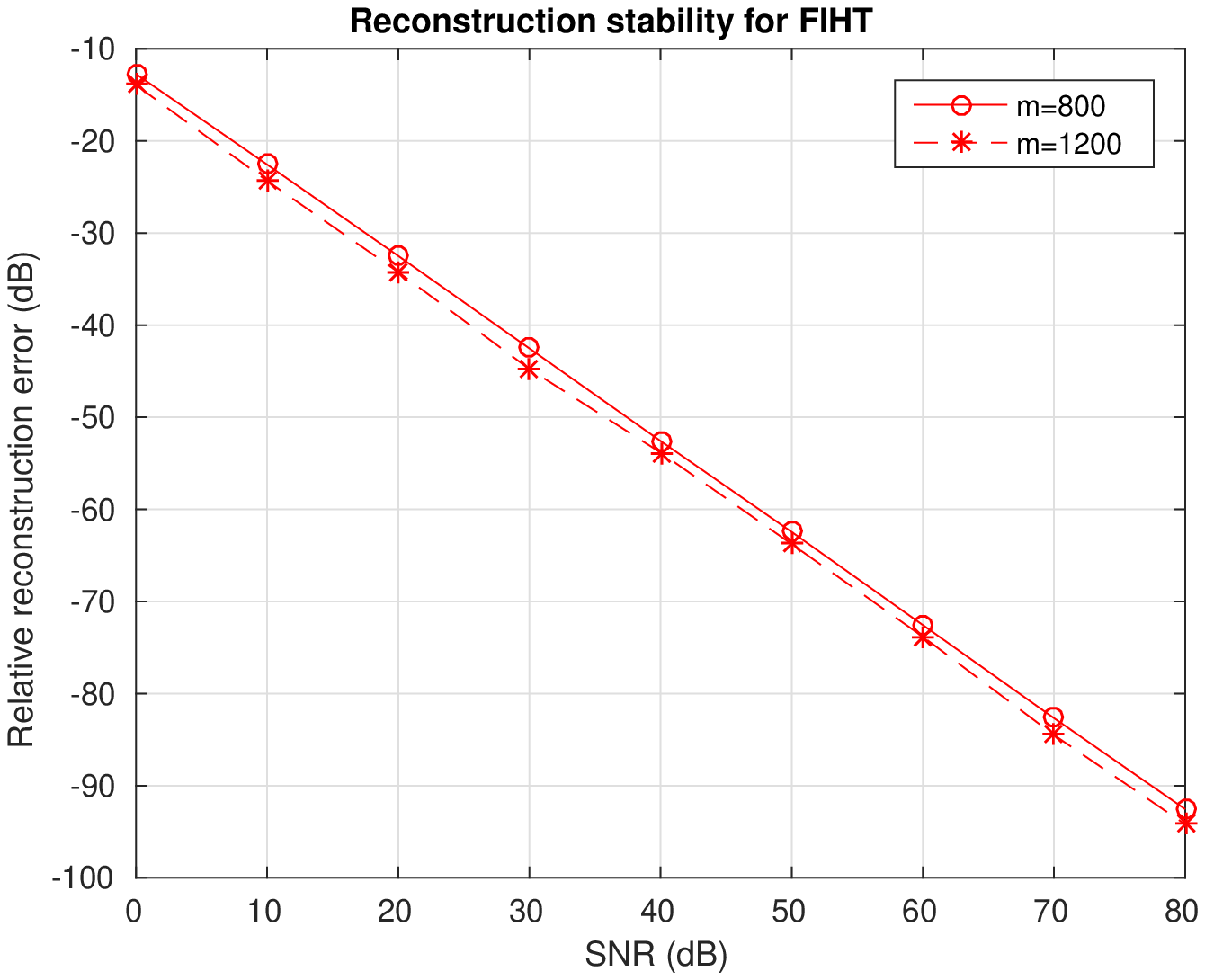}
\end{minipage}}
\caption{Performances of (a) IHT and (b) FIHT under different SNR.}
\end{figure}


 
\subsection{A 3D Example}


To explore the capability of FIHT on handling large data, we conduct tests on a $3$D damped signal with $n=N_1\times N_2\times N_3=31\times 31\times 511=491071$, $r=10$ and $m=19642$ (about $4\%$ of $n$). The signal is constructed to simulate real data from Nuclear Magnetic Resonance (NMR) spectroscopy. In this experiment, FIHT is terminated when 
$\|\bm{x}_{l+1}-\bm{x}_l\|_2/\|\bm{x}_l\|_2< 10^{-5}$. It takes FIHT $\bm{39}$ iterations and $\bm{1554}$ seconds to converge below the tolerance with the relative reconstruction error being $3.95\times 10^{-6}$.

To visualize the reconstruction result, we randomly pick a slice of the 3D signal  and plot the amplitudes of sampled and reconstructed entries on this slice in  Fig.~\ref{fig:3D-rec}. The differences between each entry of  the original and reconstructed signals on the same slice is plotted in Fig.~\ref{fig:3D-diff}, which shows that the reconstruction is very accurate.  Furthermore, the plots in Fig.~\ref{fig:3D-spectrum} compare the projection spectra of the original signal and the reconstructed one, which is obtained by first taking {the Fourier transform of the $3$D signal and then sum the spectrum} along the third dimension.

\begin{figure}[htb]
\centering
	\includegraphics[width=0.4 \textwidth]{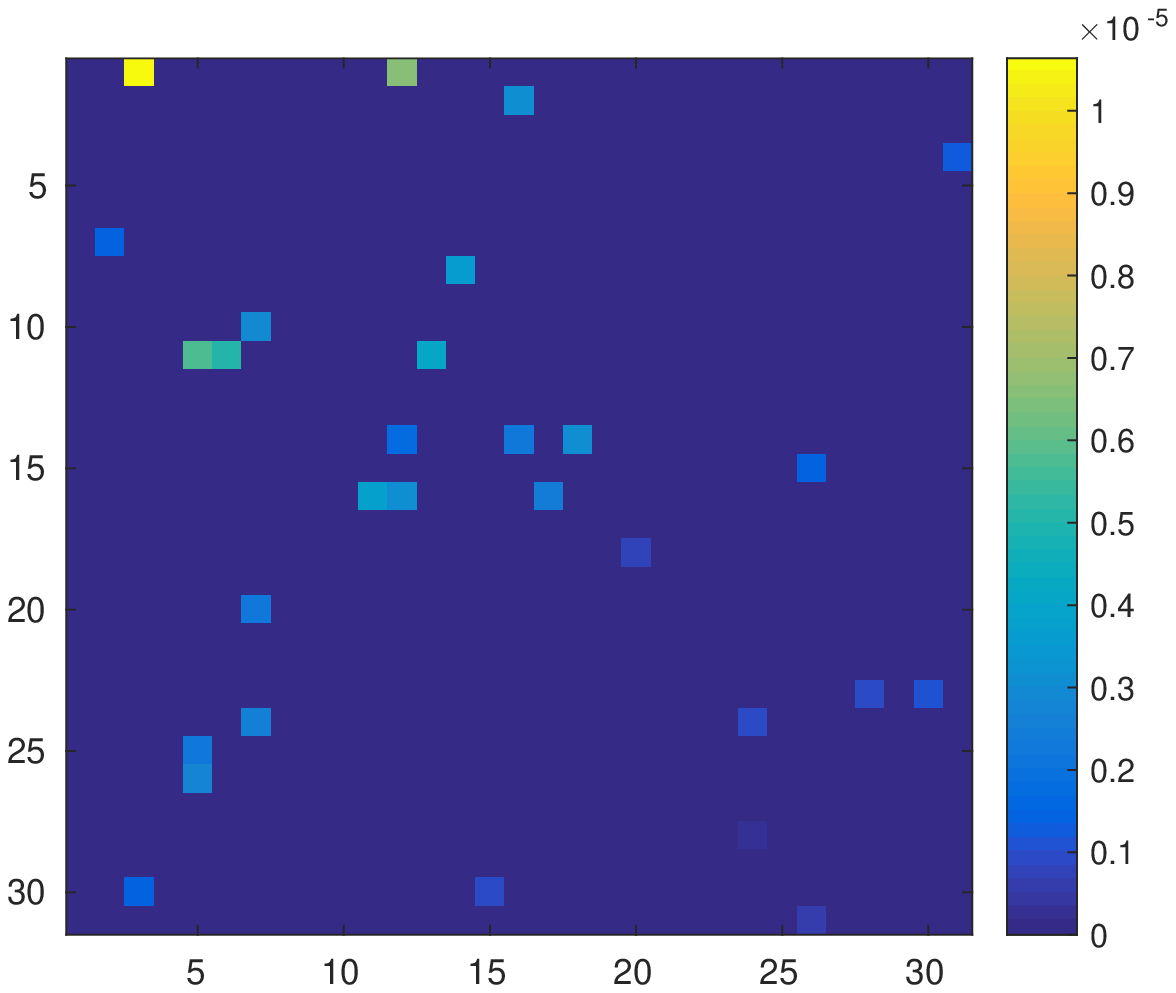}
	\includegraphics[width=0.4 \textwidth]{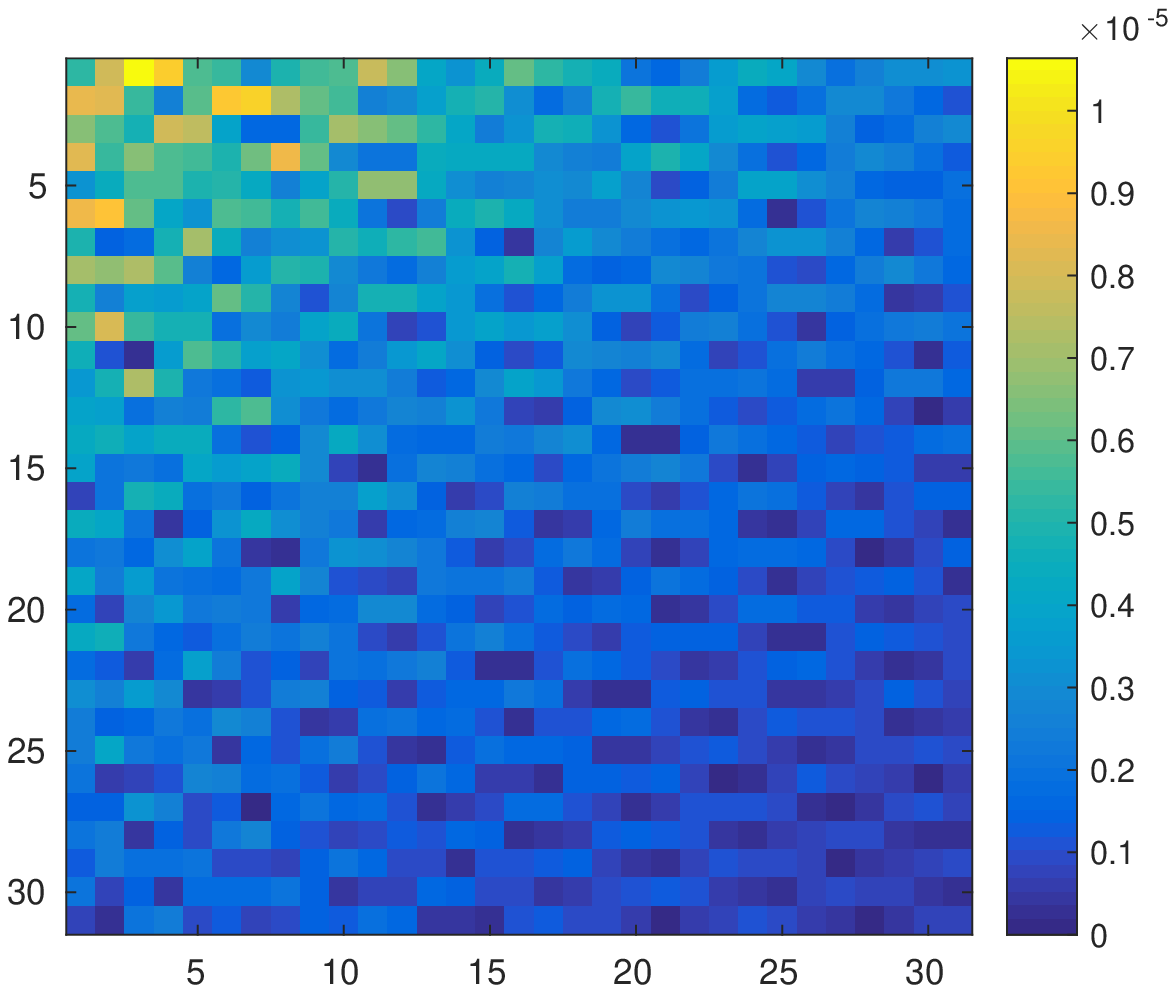}
\caption{Samples (Left) on the slice with $N_3=491$ and its reconstruction (Right).}
\label{fig:3D-rec}
\end{figure}
 
\begin{figure}[htb]
\centering
	\includegraphics[width=0.4 \textwidth]{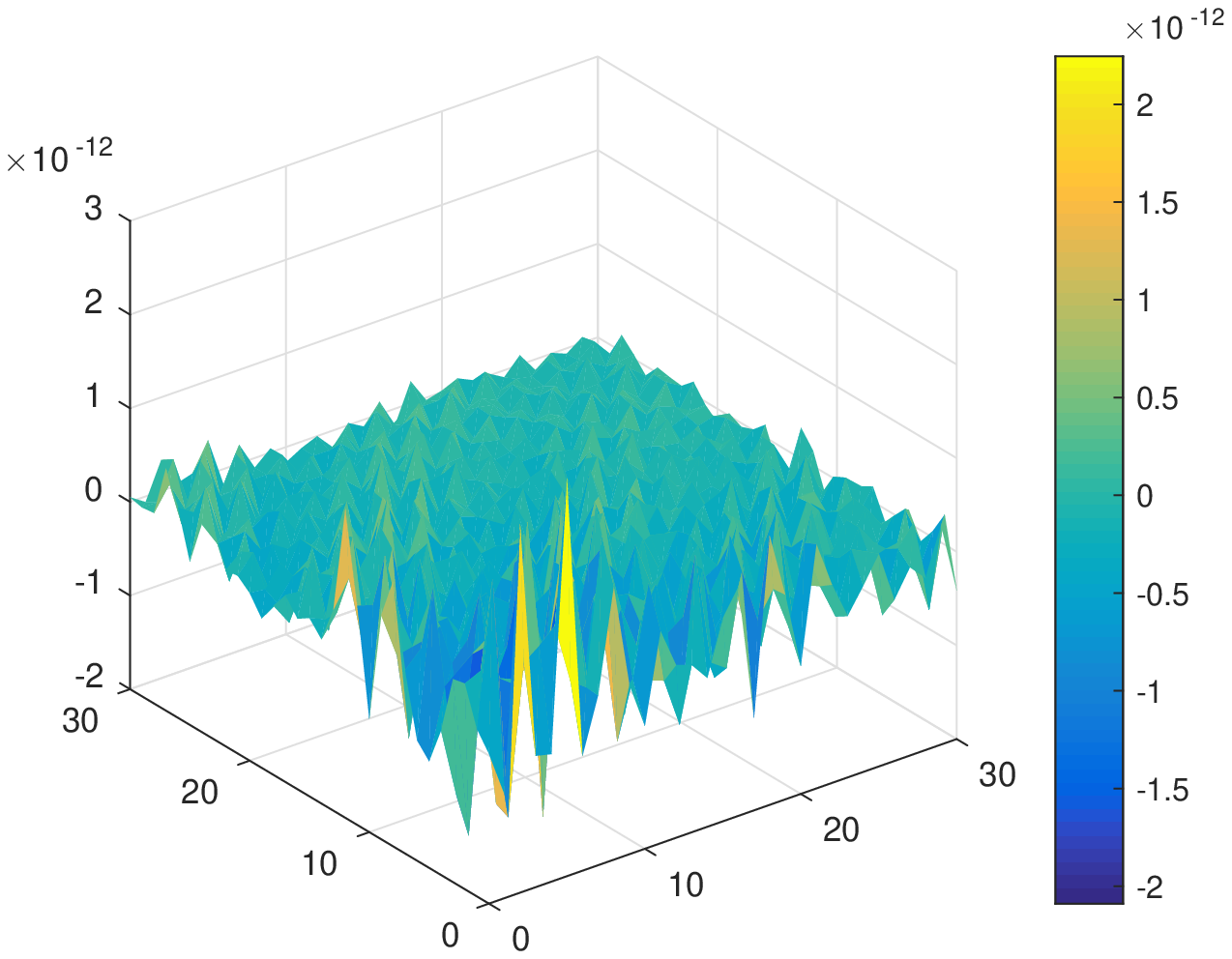}
	\includegraphics[width=0.4 \textwidth]{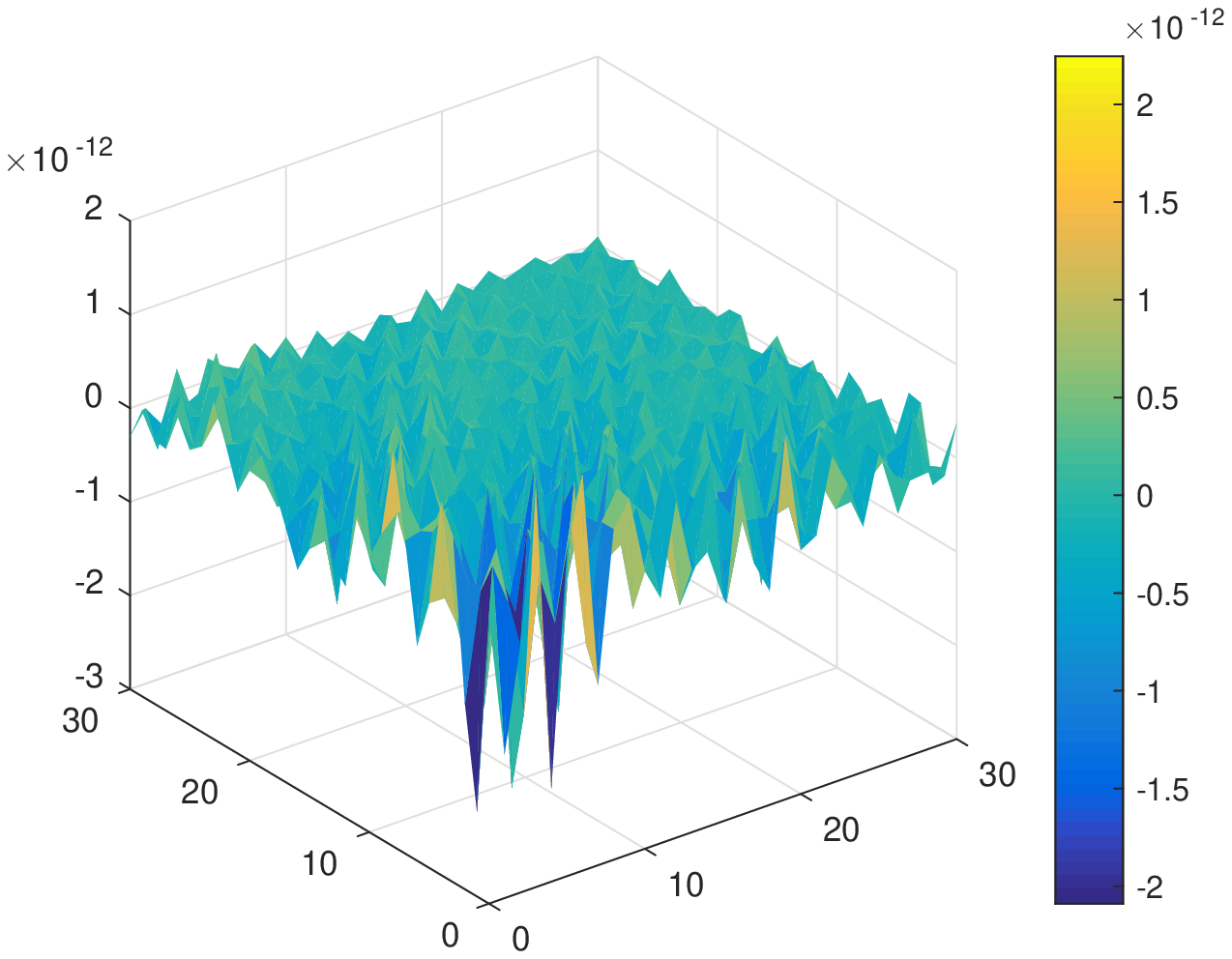}
\caption{Real (Left) and imaginary (Right) parts of  reconstruction errors  for each entry on the slice with $N_3=491$.}
\label{fig:3D-diff}
\end{figure}

\begin{figure}[htb]
\centering
	\includegraphics[width=0.4 \textwidth]{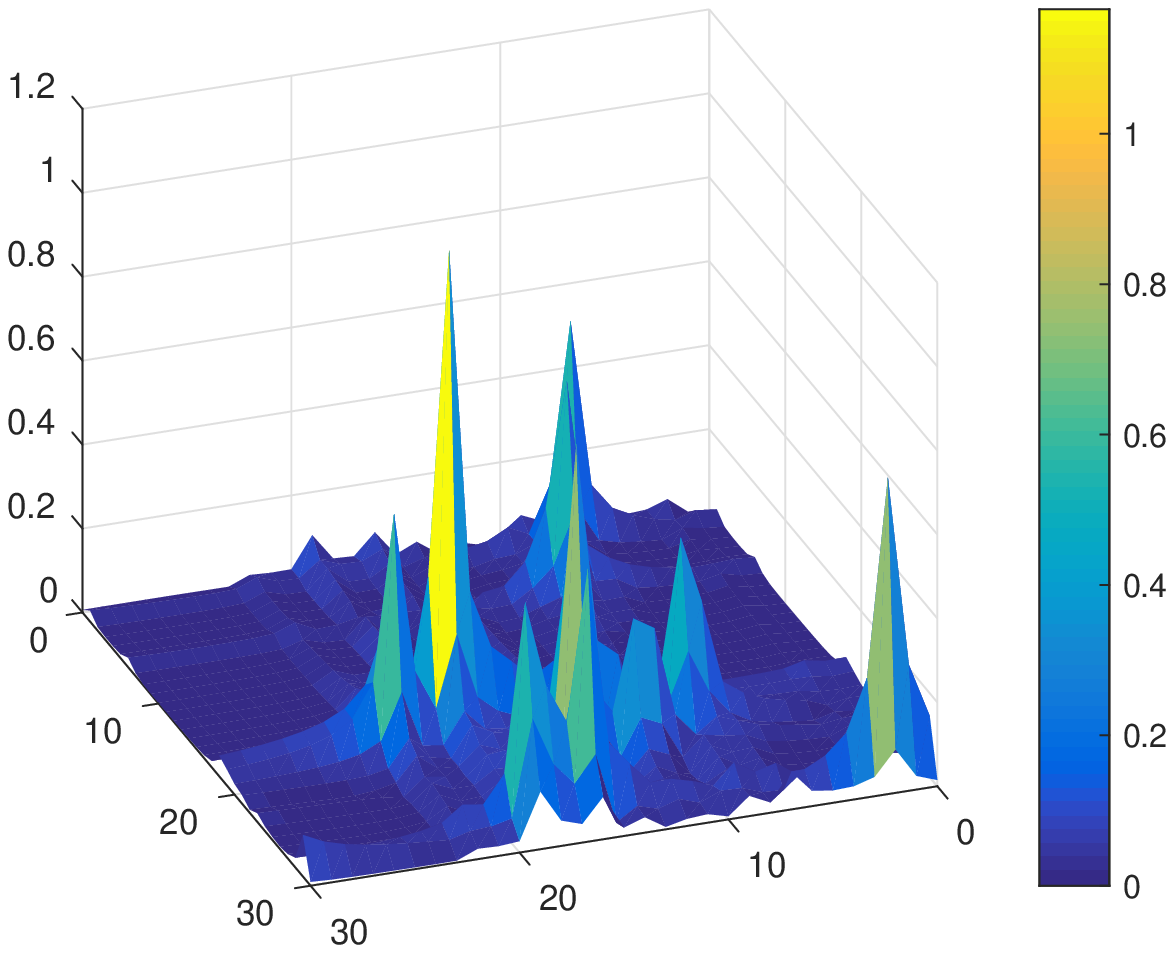}
	\includegraphics[width=0.4 \textwidth]{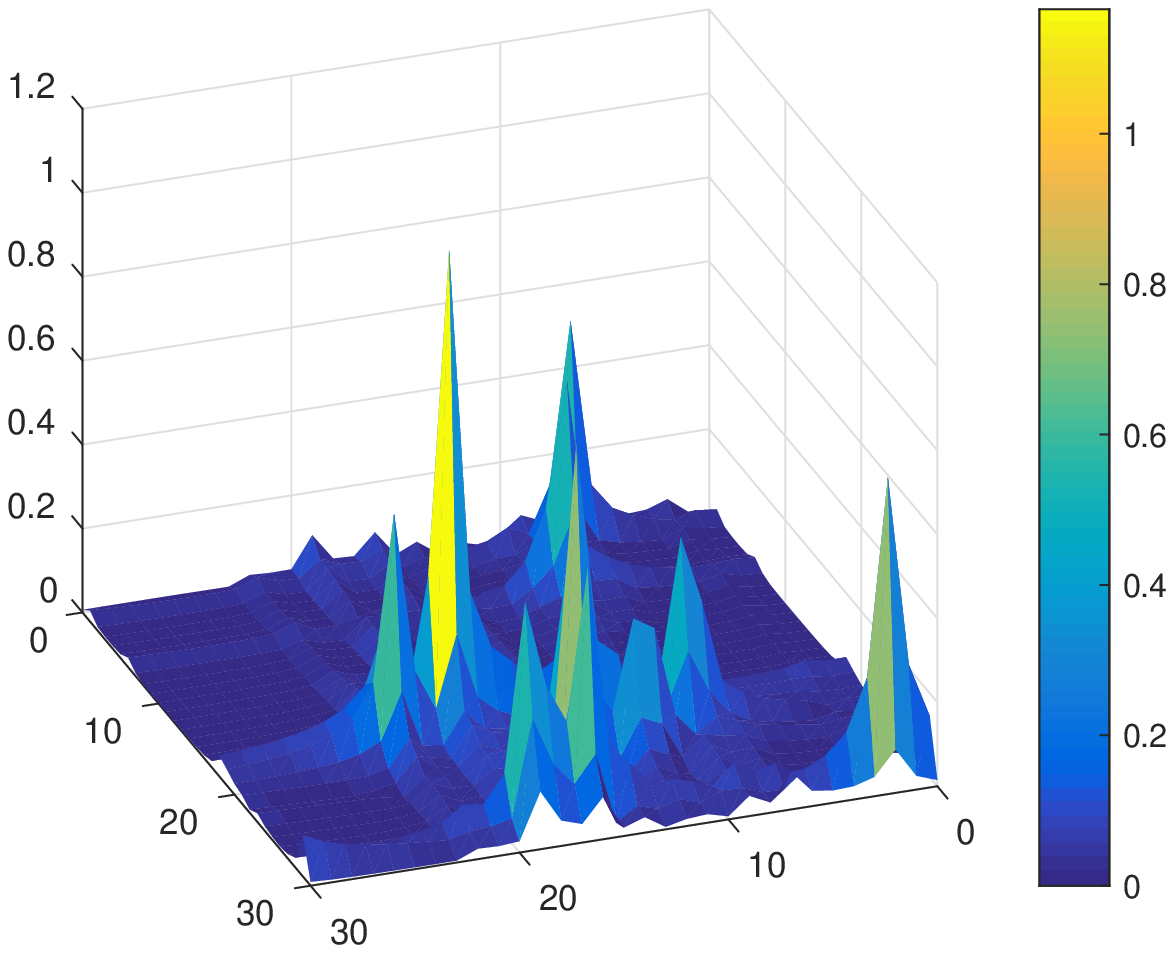}
\caption{Projection spectra of the original signal (Left) and its reconstruction (Right).}
\label{fig:3D-spectrum}
\end{figure}

\section{Proofs}\label{sec:proofs}
This section presents the proofs for the theoretical results in Sec.~\ref{sec:theory}. 
We first introduce several new variables and notation. Recall that $\H$ is a Hankel operator which maps a vector to a Hankel matrix and $\H^*$ is the adjoint of $\H$. Moreover, $\D^2=\H^*\H=\diag(w_0,\cdots,w_{n-1})$ is a diagonal operator which multiply the $a$-th entry of  a vector by the number of elements in the $a$-th  anti-diagonal of the corresponding Hankel matrix. Recall that $\lcb\BH_a\rcb_{a=0}^{n-1}\subset\C^{n_1\times n_2}$ forms an orthonormal basis for all the $n_1\times n_2$ Hankel matrices with $n_1+n_2=n+1$.

Define $\G=\H\D^{-1}$.  Then the adjoint of $\G$ is given by $\G^*=\D^{-1}\H^*$. It can be easily verified that $\G$ and $\G^*$ have the following properties: 
\begin{itemize}
\item $\G^*\G=\I$, $\ln\G\rn\leq 1$, and $\ln\G^*\rn\leq 1$;
\item $\G\bz=\sum_{a=0}^{n-1}z_a\BH_a,~\forall\bz\in\C^n$;
\item$\G^*\BZ = \lcb\la\BZ,\BH_a\ra\rcb_{a=0}^{n-1},~\forall\BZ\in\C^{n_1\times n_2}$.
\end{itemize}
Notice that the iteration of FIHT (Alg.~\ref{Alg-GD}) can be written in a compact form 
\begin{align*}
\bm{x}_{l+1}=\mathcal{H}^{\dag}\mathcal{T}_r\mathcal{P}_{\mathcal{S}_l}\mathcal{H}(\bm{x}_l+p^{-1}\mathcal{P}_{\Omega}(\bm{x}-\bm{x}_l)).\numberthis\label{eq:x_compact}
\end{align*}
So if we define $\by=\D\bx$ and $\by_l=\D\bx_l$, the following iteration can be established for $\by_l$
\begin{align*}
\bm{y}_{l+1}=\mathcal{G}^{*}\mathcal{T}_{r}\mathcal{P}_{\mathcal{S}_l}\mathcal{G}(\bm{y}_l+p^{-1}\mathcal{P}_{\Omega}(\bm{y}-\bm{y}_l))\numberthis\label{eq:y_compact}
\end{align*}
since $\P_\Omega$ and $\D^{-1}$ commute with each other. 
{\em For ease of exposition, we will prove the lemmas and theorems in Sec.~\ref{sec:theory} in terms of $\by_l$ and $\by$ but note that the results in terms of $\bx_l$ and $\bx$ follow immediately  since $\H\bx=\G\by$ and
\begin{align*}
\ln\bx_l-\bx\rn=\ln\D^{-1}(\by_l-\by)\rn\leq\ln\by_l-\by\rn.\numberthis\label{eq:err_x_y}
\end{align*}}

The following supplementary results from the literature but using our notation will be used repeatedly in the proofs of the main results. 
\begin{lemma}[{\cite[Proposition 3.3]{P_Omega}}]\label{lem:sampling}
Under the sampling with replacement model, the maximum number of repetitions of any entry in $\Omega$ is less than $8\log(n)$ with probability at least $1-n^{-2}$ provided $n\geq 9$.
\end{lemma}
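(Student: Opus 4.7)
The plan is to treat this as a standard balls-and-bins concentration problem and combine a single-bin tail estimate with a union bound over all $n$ indices.

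First, for each $a\in\{0,\ldots,n-1\}$ I would introduce the count $X_a=\#\{k:a_k=a\}$. Under the sampling-with-replacement model each $a_k$ is uniform on $[n]$ independently, so $X_a$ is $\mathrm{Binomial}(m,1/n)$. To bound $\Pr(X_a\geq t)$ for a single $a$ I would use the following union-bound / Stirling chain:
\begin{equation*}
\Pr(X_a\geq t)\;\leq\;\binom{m}{t}n^{-t}\;\leq\;\frac{(m/n)^{t}}{t!}\;\leq\;\left(\frac{em}{tn}\right)^{t},
\end{equation*}
where the first inequality is a union bound over all $\binom{m}{t}$ possible subsets of $t$ draws that could all land on index $a$, and the last step uses the Stirling estimate $t!\geq (t/e)^t$.

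Setting $t=8\log n$ and working in the regime $m\leq n$ that is relevant for the recovery theorems (so that $em/(tn)\leq e/(8\log n)$), the hypothesis $n\geq 9$ forces the base to satisfy $e/(8\log n)\leq e/(8\log 9)<1/5$. This yields $\Pr(X_a\geq 8\log n)\leq n^{-\alpha}$ for some absolute constant $\alpha>3$ (the constant may be made explicit by direct calculation). A union bound over the $n$ possible values of $a$ then gives
\begin{equation*}
\Pr\!\Bigl(\max_{a}X_a\geq 8\log n\Bigr)\;\leq\;n\cdot n^{-\alpha}\;\leq\;n^{-2},
\end{equation*}
which is the claimed bound.

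The argument is routine; the only thing requiring care is the numerical check that the constant $8$ combined with the hypothesis $n\geq 9$ makes the single-index decay exponent strictly larger than $3$, so that the union-bound factor of $n$ is absorbed and $n^{-2}$ remains. Since the statement is quoted verbatim from \cite{P_Omega} and is used only as a bookkeeping lemma for the main proofs, I anticipate no substantive obstacle beyond this elementary Stirling/Chernoff estimate.
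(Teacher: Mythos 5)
The paper does not prove this lemma at all --- it imports it verbatim from \cite{P_Omega} (Proposition 3.3 there), so there is no in-paper argument to compare against. Your balls-and-bins derivation (single-bin tail via $\binom{m}{t}n^{-t}\le(em/(tn))^t$, then a union bound over the $n$ indices) is the standard proof of exactly this statement and is correct: with $t=8\log n$, $m\le n$ and $n\ge 9$ the base is below $1/5$, giving a per-index tail of $n^{-8\log 5}$, which comfortably absorbs the union-bound factor. Two minor points worth noting: you should take $t=\lceil 8\log n\rceil$ so that the binomial coefficient makes sense (harmless, since the bound only improves), and the hypothesis $m\le n$ --- which you correctly flag --- is genuinely needed for the statement to hold and is implicit throughout the paper, where $p=m/n$ is treated as a sampling fraction.
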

\begin{lemma}[{\cite[Lemma 3]{Chi}}]\label{lem:RIPPS}
Let $\BU\in\C^{n_1\times r}$ and $\BV\in\C^{n_2\times r}$ be two orthogonal matrices which satisfy
\begin{align*}
\ln\P_{\BU}(\BH_a)\rn_F^2\leq\frac{\mu c_sr}{n}\quad&\mbox{and}\quad \ln\P_{\BV}(\BH_a)\rn_F^2\leq \frac{\mu c_sr}{n}.
\end{align*}
 Let $\S$ be the subspace defined in \eqref{eq:subspace}. Then
\begin{equation}
\|\mathcal{P}_{\mathcal{S}}\mathcal{G}\mathcal{G}^{*}\mathcal{P}_{\mathcal{S}}-p^{-1}\mathcal{P}_{\mathcal{S}}\mathcal{G}\mathcal{P}_{\Omega}\mathcal{G}^{*}\mathcal{P}_{\mathcal{S}}\|
\leq \sqrt{\frac{32\mu c_sr\log(n)}{m}}
\end{equation}
holds with probability at least $1-n^{-2}$ provided that 
$$
m\geq32\mu c_sr\log(n).
$$ 
\end{lemma}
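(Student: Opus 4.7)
The plan is to recognize the left-hand side as the deviation of a sum of i.i.d.\ zero-mean self-adjoint random operators from its mean, and then apply the operator Bernstein inequality. The incoherence conditions on $\BU$ and $\BV$ enter through uniform bounds on $\|\P_\S \BH_a\|_F$.

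First I would rewrite $p^{-1}\P_\Omega$ as an average over the i.i.d.\ random indices $a_1,\dots,a_m$. Using $\G \be_a = \BH_a$ and $\G^*\BZ = \{\langle \BZ, \BH_a\rangle\}_{a=0}^{n-1}$, one checks that $\G \be_{a_k}\be_{a_k}^T \G^* \BZ = \langle \BZ, \BH_{a_k}\rangle \BH_{a_k}$. Defining the self-adjoint operators
\[
\mathcal{X}_k := \frac{n}{m}\left(\P_\S \G \be_{a_k}\be_{a_k}^T \G^* \P_\S - \tfrac{1}{n}\,\P_\S \G\G^* \P_\S\right),
\]
we have $\E\mathcal{X}_k = 0$ (since each $a_k$ is uniform on $[n]$ and $p = m/n$) and
\[
\P_\S\G\G^*\P_\S - p^{-1}\P_\S\G\P_\Omega\G^*\P_\S \;=\; -\sum_{k=1}^m \mathcal{X}_k .
\]

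Next I would derive the two ingredients for Bernstein. For the uniform bound, note that $\|\P_\S \BH_a\|_F^2 \leq \|\P_\BU \BH_a\|_F^2 + \|\BH_a \P_\BV\|_F^2 \leq 2\mu c_s r/n$ by the hypotheses (using the orthogonal decomposition $\P_\S \BZ = \P_\BU \BZ + (\I - \P_\BU)\BZ \P_\BV$). The rank-one operator $\BZ \mapsto \langle \BZ, \P_\S \BH_{a_k}\rangle \P_\S \BH_{a_k}$ has operator norm $\|\P_\S \BH_{a_k}\|_F^2$, and combined with $\|\P_\S\G\G^*\P_\S\| \leq 1$ this gives $\|\mathcal{X}_k\| \leq (n/m)\cdot 3\mu c_s r/n \leq 3\mu c_s r /m =: R$. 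For the variance, the identity $\mathcal{Y}_k^2 = \|\P_\S \BH_{a_k}\|_F^2 \,\mathcal{Y}_k$ with $\mathcal{Y}_k := \P_\S\G\be_{a_k}\be_{a_k}^T\G^*\P_\S$ yields $\|\E \mathcal{Y}_k^2\| \leq (2\mu c_s r/n)\cdot (1/n)\|\P_\S\G\G^*\P_\S\|$, so
\[
\Big\|\sum_{k=1}^m \E\mathcal{X}_k^2\Big\| \;\leq\; m \cdot \frac{n^2}{m^2}\cdot \frac{2\mu c_s r}{n^2} \;=\; \frac{2\mu c_s r}{m} \;=: \sigma^2 .
\]

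Finally I would invoke the operator Bernstein inequality. With $t = \sqrt{32\mu c_s r \log(n)/m}$, the exponent is dominated by $-t^2/(2\sigma^2) = -8\log n$ provided the sub-exponential term $\tfrac{2}{3}Rt$ is comparable to $\sigma^2$, which is exactly where the hypothesis $m \geq 32\mu c_s r \log(n)$ is used. The dimension prefactor is at most $2 n_1 n_2 \leq 2n^2$ (or better, the rank of $\P_\S$, which is $O(rn)$), so the failure probability is $\leq 2n^2 \cdot n^{-8} \leq n^{-2}$ as claimed.

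The main obstacle is the bookkeeping in the variance computation, in particular the factor of $n/m$ in the definition of $\mathcal{X}_k$ that converts a "per-sample" variance of order $\mu c_s r/n^2$ into a cumulative variance of order $\mu c_s r/m$; one must be careful to extract this extra $1/n$ factor from $\|\P_\S \BH_a\|_F^2$ inside $\E \mathcal{Y}_k^2$ rather than merely bounding $\|\E\mathcal{Y}_k^2\| \leq \E\|\mathcal{Y}_k^2\|$ term-by-term, which would lose a factor of $n$ and destroy the sample complexity.
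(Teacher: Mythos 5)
Your proof is correct, and it is essentially the standard argument: the paper itself imports this lemma from \cite{Chi} without proof, but the route you take --- writing $\mathcal{P}_{\mathcal{S}}\mathcal{G}\mathcal{G}^{*}\mathcal{P}_{\mathcal{S}}-p^{-1}\mathcal{P}_{\mathcal{S}}\mathcal{G}\mathcal{P}_{\Omega}\mathcal{G}^{*}\mathcal{P}_{\mathcal{S}}$ as a sum of i.i.d.\ centered rank-one operators $\langle\cdot,\P_\S\BH_{a_k}\ra\P_\S\BH_{a_k}$, bounding $\ln\P_\S\BH_a\rn_F^2\leq 2\mu c_s r/n$ via the orthogonal split $\P_\S=\P_{\BU}+(\I-\P_{\BU})(\cdot)\P_{\BV}$, and extracting the variance through $\mathcal{Y}_k^2=\ln\P_\S\BH_{a_k}\rn_F^2\,\mathcal{Y}_k$ before applying Lem.~\ref{lem:bernstein} --- is exactly the template the paper uses for the analogous asymmetric estimate in Lem.~\ref{lem:incoherence}, and the constants you obtain ($\sigma^2=2\mu c_s r/m$, $R=3\mu c_s r/m$, failure probability at most $2n_1n_2\,n^{-16/3}\leq n^{-2}$) close the argument as claimed.
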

\begin{lemma}[{\cite[Lemma 4.1]{Completion}}]\label{lem:Riemannian}
Let $\BL_l=\BU_l\BS_l\BV_l^*$ be another rank $r$ matrix and $\S_l$ be the tangent space of the rank $r$ matrix manifold at $\BL_l$ as defined in \eqref{eq:subspace}. Then
$$
\|(\mathcal{I}-\mathcal{P}_{\mathcal{S}_l})(\bm{L}_l-\mathcal{G}\bm{y})\|_F\leq\frac{\|\bm{L}_l-\mathcal{G}\bm{y}\|_F^2}{\sigma_{\min}(\mathcal{G}\bm{y})}, \quad
\|\mathcal{P}_{\mathcal{S}_l}-\mathcal{P}_{\mathcal{S}}\|\leq \frac{2\|\bm{L}_l-\mathcal{G}\bm{y}\|_F}{\sigma_{\min}(\mathcal{G}\bm{y})}.
$$
\end{lemma}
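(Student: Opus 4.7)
The plan is to exploit the explicit form of the tangent-space projection together with the rank-$r$ structure of both $\BL_l$ and $\G\by$. Let $\G\by = \BU\BS\BV^*$ be the reduced SVD and set $\BE = \BL_l - \G\by$. The formula $\P_{\S_l}(\BZ) = \BU_l\BU_l^*\BZ + \BZ\BV_l\BV_l^* - \BU_l\BU_l^*\BZ\BV_l\BV_l^*$ yields, for any $\BZ$, the identity $(\I - \P_{\S_l})(\BZ) = (\I - \BU_l\BU_l^*)\BZ(\I - \BV_l\BV_l^*)$, which drives both bounds.

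For the first inequality, I would first use $\BL_l\in\S_l$ to reduce the left-hand side to $\|(\I - \BU_l\BU_l^*)(\G\by)(\I - \BV_l\BV_l^*)\|_F$. The key algebraic step is a ``sandwich'' identity expressing this quantity in terms of $\BE$ and the pseudoinverse $(\G\by)^\dag = \BV\BS^{-1}\BU^*$. Because $\BU = \G\by\BV\BS^{-1}$ and $(\I - \BU_l\BU_l^*)\BL_l = 0$, one obtains
\[
(\I - \BU_l\BU_l^*)\BU = (\I - \BU_l\BU_l^*)(\G\by - \BL_l)\BV\BS^{-1} = -(\I - \BU_l\BU_l^*)\BE\BV\BS^{-1},
\]
and symmetrically $\BV^*(\I - \BV_l\BV_l^*) = -\BS^{-1}\BU^*\BE(\I - \BV_l\BV_l^*)$. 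Multiplying these through $\BS$ produces
\[
(\I - \BU_l\BU_l^*)(\G\by)(\I - \BV_l\BV_l^*) = (\I - \BU_l\BU_l^*)\BE(\G\by)^\dag\BE(\I - \BV_l\BV_l^*).
\]
Taking Frobenius norm, bounding one factor of $\BE$ in operator norm and the other in Frobenius norm, and using $\|(\G\by)^\dag\| = 1/\sigma_{\min}(\G\by)$ together with $\|\BE\|\le\|\BE\|_F$ then finishes the first bound.

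For the second inequality, direct algebraic expansion gives
\[
(\P_{\S_l} - \P_{\S})(\BZ) = (\BU_l\BU_l^* - \BU\BU^*)\BZ(\I - \BV_l\BV_l^*) + (\I - \BU\BU^*)\BZ(\BV_l\BV_l^* - \BV\BV^*),
\]
so the induced operator norm is bounded by $\|\BU_l\BU_l^* - \BU\BU^*\| + \|\BV_l\BV_l^* - \BV\BV^*\|$. Invoking the classical identity $\|\BU_l\BU_l^* - \BU\BU^*\| = \|(\I - \BU_l\BU_l^*)\BU\|$ for equal-rank orthogonal projections, together with the spectral-norm version of the factorization above (i.e. $\|(\I - \BU_l\BU_l^*)\BU\|\le\|\BE\|/\sigma_{\min}(\G\by)$) and its symmetric counterpart, yields the claimed factor of two.

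The main obstacle lies in the first inequality: a naive Frobenius estimate of $(\I-\BU_l\BU_l^*)\G\by(\I-\BV_l\BV_l^*)$ gives only $\|\BE\|_F$, and a one-sided factorization through $\BS$ gives at best $\kappa\|\BE\|_F$. The quadratic improvement, and the appearance of $1/\sigma_{\min}(\G\by)$, comes precisely from the two-sided sandwich identity, which exploits the fact that both $\BU$ and $\BV$ can be ``routed through'' $\G\by$ itself (and hence through $\BE$) because $\G\by$ has full rank $r$ on its own column and row spaces.
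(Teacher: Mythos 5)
Your proof is correct. Note that the paper itself gives no proof of this lemma (it is quoted from \cite[Lemma 4.1]{Completion}); your argument --- the factorization $(\mathcal{I}-\mathcal{P}_{\mathcal{S}_l})(\bm{Z})=(\bm{I}-\bm{U}_l\bm{U}_l^*)\bm{Z}(\bm{I}-\bm{V}_l\bm{V}_l^*)$, the two-sided sandwich identity through $(\mathcal{G}\bm{y})^\dag$, and the equal-rank projection identity for the second bound --- is essentially the same standard argument used in that reference.
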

\begin{lemma}[{\cite[Theorem 1.6]{Tropp}}]\label{lem:bernstein}
Consider a finite sequence $\lcb\BZ_k\rcb$ of independent, random matrices with dimensions $d_1\times d_2$. Assume that each random matrix satisfies 
\begin{align*}
\E\lb\BZ_k\rb=0\quad\mbox{and}\quad\ln\BZ_k\rn\leq R\quad\mbox{almost surely}.
\end{align*}
Define 
\begin{align*}
\sigma^2:=\max\lcb \ln\sum_k\E\lb\BZ_k\BZ_k^*\rb\rn, \ln\sum_k\E\lb\BZ_k^*\BZ_k\rb\rn\rcb.
\end{align*}
Then for all $t\ge 0$,
\begin{align*}
\mathbb{P}\lcb\ln\sum_k\BZ_k\rn\geq t\rcb\leq (d_1+d_2)\exp\lb\frac{-t^2/2}{\sigma^2+Rt/3}\rb.
\end{align*}
\end{lemma}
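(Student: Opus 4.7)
\noindent\textbf{Proof proposal for Lemma~\ref{lem:bernstein}.} The plan is to follow the matrix Laplace transform / Lieb concavity approach used by Tropp, which is the standard route for matrix Bernstein-type inequalities. Since the matrices $\BZ_k$ are rectangular, the first move is to pass to the self-adjoint setting via the Hermitian dilation
\begin{align*}
\D(\BZ)=\begin{bmatrix}\bm{0}&\BZ\\\BZ^*&\bm{0}\end{bmatrix}\in\C^{(d_1+d_2)\times(d_1+d_2)}.
\end{align*}
A direct computation shows $\lambda_{\max}(\D(\BZ))=\ln\BZ\rn$, so the tail event $\{\ln\sum_k\BZ_k\rn\ge t\}$ is the same as $\{\lambda_{\max}(\sum_k\D(\BZ_k))\ge t\}$. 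The dilated summands $\D(\BZ_k)$ are independent, centered (expectation passes through $\D$), satisfy $\ln\D(\BZ_k)\rn\le R$, and their variance parameter coincides with the $\sigma^2$ in the statement after observing that $\D(\BZ)^2=\diag(\BZ\BZ^*,\BZ^*\BZ)$. Thus it suffices to prove a Bernstein tail bound for a sum of independent, bounded, centered Hermitian matrices of dimension $d=d_1+d_2$.

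Next I would apply the matrix Markov inequality: for any $\theta>0$,
\begin{align*}
\mathbb{P}\lcb\lambda_{\max}\lb\sum_k\BY_k\rb\ge t\rcb\le e^{-\theta t}\,\E\,\trace\,\exp\lb\theta\sum_k\BY_k\rb,
\end{align*}
where $\BY_k=\D(\BZ_k)$. The key analytic input is Lieb's concavity theorem, which gives the subadditivity of the matrix cumulant generating function,
\begin{align*}
\E\,\trace\,\exp\lb\sum_k\theta\BY_k\rb\le \trace\,\exp\lb\sum_k\log\E\,e^{\theta\BY_k}\rb.
\end{align*}
This step is the main technical obstacle, and the one that distinguishes matrix Bernstein from its scalar counterpart: without Lieb's theorem one would only get the substantially weaker Ahlswede--Winter bound whose variance parameter is $\sum_k\ln\E\BY_k^2\rn$ rather than $\ln\sum_k\E\BY_k^2\rn$.

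With subadditivity in hand, I would bound each single-term matrix moment generating function. Using $\E\BY_k=\bm{0}$, $\ln\BY_k\rn\le R$, and the scalar inequality $e^x\le 1+x+\tfrac{x^2/2}{1-|x|/3}$ valid for $|x|<3$, one obtains in Loewner order
\begin{align*}
\E\,e^{\theta\BY_k}\preceq \exp\lb\frac{\theta^2/2}{1-\theta R/3}\,\E\BY_k^2\rb
\end{align*}
for $0<\theta<3/R$; taking the matrix logarithm is legitimate by operator monotonicity. Summing these Loewner bounds and using $\trace\exp(\cdot)\le d\cdot\exp(\lambda_{\max}(\cdot))$ collapses the trace-exp to a scalar, yielding
\begin{align*}
\mathbb{P}\lcb\lambda_{\max}\lb\sum_k\BY_k\rb\ge t\rcb\le d\cdot\exp\lb-\theta t+\frac{\theta^2/2}{1-\theta R/3}\sigma^2\rb.
\end{align*}
The final step is to optimize over $\theta\in(0,3/R)$. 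The choice $\theta=t/(\sigma^2+Rt/3)$ (which naturally lies in the admissible range) gives the claimed exponent $-t^2/2/(\sigma^2+Rt/3)$; substituting $d=d_1+d_2$ produces the stated inequality. I expect the Lieb-concavity step to be the delicate part; everything else is routine once that tool is available.
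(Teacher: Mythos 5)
The paper does not prove Lemma~\ref{lem:bernstein} at all; it imports the statement verbatim as Theorem~1.6 of the cited Tropp reference. Your sketch correctly reproduces the argument of that source --- Hermitian dilation to reduce to the self-adjoint case, the matrix Laplace transform, Lieb's concavity theorem for subadditivity of the matrix cumulant generating function, the moment-generating-function bound under the boundedness hypothesis, and the choice $\theta=t/(\sigma^2+Rt/3)$ --- so it is essentially the same approach as the cited proof, and I see no gaps.
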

\subsection{Local Convergence}
We begin with a deterministic convergence result which  characterizes the ``basin of attraction'' for FIHT. If the initial guess is located in this attraction region, FIHT will converge linearly to the underlying true solution.
\begin{theorem}\label{thm:local}
Assume $0<\vep_0<\frac{1}{10}$ and the following conditions
\begin{align*}
&\|\P_{\Omega}\|\leq 8\log(n),\numberthis\label{c1}\\
&\|\mathcal{P}_{\mathcal{S}}\mathcal{G}\mathcal{G}^{*}\mathcal{P}_{\mathcal{S}}-p^{-1}\mathcal{P}\mathcal{G}\mathcal{P}_{\Omega}\mathcal{G}^{*}\mathcal{P}_{\mathcal{S}}\|
\leq\vep_0, \numberthis\label{c2}\\
&\frac{\|\bm{L}_0-\mathcal{G}\bm{y}\|_F}{\sigma_{\min}(\mathcal{G}\bm{y})}\leq
\frac{p^{1/2}\vep_0}{16\log(n)(1+\vep_0)} \numberthis\label{c3}
\end{align*}
are satisfied. Then the iterate $\by_l$ in \eqref{eq:y_compact} satisfies
$
\|\bm{y}_l-\bm{y}\|\leq \nu^{l} \|\bm{L}_0-\G\by\|_F
$
with $\nu=10\vep_0<1$. 
\end{theorem}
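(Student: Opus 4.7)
The plan is to prove by induction on $l$ the slightly stronger claim that the matrix iterates satisfy $\|\BL_l - \G\by\|_F \leq \nu^l \|\BL_0 - \G\by\|_F$; the stated bound on $\|\by_l - \by\|$ then follows because $\by_l = \G^*\BL_l$, $\G^*\G = \I$, and $\|\G^*\|_{F\to 2} \leq 1$, so
\begin{align*}
\|\by_l - \by\| = \|\G^*(\BL_l - \G\by)\| \leq \|\BL_l - \G\by\|_F.
\end{align*}
The base case $l=0$ is trivial, and since $\nu<1$ the induction also maintains $\|\BL_l - \G\by\|_F \leq \|\BL_0 - \G\by\|_F$, so \eqref{c3} remains available at every step (with $\BL_l$ in place of $\BL_0$).

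For the inductive step set $\BE_l = \BL_l - \G\by$. Since $\G\by$ has rank $r$ and $\BL_{l+1}$ is a best rank-$r$ approximation to $\BW_l$, the triangle inequality gives $\|\BL_{l+1} - \G\by\|_F \leq 2\|\BW_l - \G\by\|_F$, and it suffices to show $\|\BW_l - \G\by\|_F \leq 5\vep_0 \|\BE_l\|_F$. Using $\G(\by - \by_l) = -\G\G^*\BE_l$ (which follows from $\by_l = \G^*\BL_l$ together with $\G^*\G = \I$) and the identity $(\I - \P_{\S})\G\by = 0$, I would unfold the iteration into the decomposition
\begin{align*}
\BW_l - \G\by = \P_{\S_l}(\I - p^{-1}\P_\Omega)\G\G^*\BE_l - (\I - \P_{\S_l})\G\by.
\end{align*}
The second summand is directly controlled by Lemma~\ref{lem:Riemannian}: since $\BL_l \in \S_l$, one has $\|(\I - \P_{\S_l})\G\by\|_F = \|(\I - \P_{\S_l})(\BL_l - \G\by)\|_F \leq \|\BE_l\|_F^2/\sigma_{\min}(\G\by)$, and combining this with the inductive bound on $\|\BE_l\|_F$ and \eqref{c3} turns it into a clean $\vep_0\|\BE_l\|_F$-type estimate.

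For the first summand I split $\BE_l = \P_{\S}\BE_l + (\I - \P_{\S})\BE_l$ and simultaneously write $\P_{\S_l} = \P_{\S} + (\P_{\S_l} - \P_{\S})$ on the outer projection. The resulting ``diagonal'' piece $\P_{\S}(\I - p^{-1}\P_\Omega)\G\G^*\P_{\S}\BE_l$ coincides with the operator bounded in \eqref{c2} --- after using the commutation $\P_\Omega\G\G^* = \G\P_\Omega\G^* = \P_\Omega$ on matrices, valid because $\{\BH_a\}$ is an orthonormal Hankel basis --- and thus contributes at most $\vep_0\|\BE_l\|_F$. The two off-diagonal pieces $(\P_{\S_l} - \P_{\S})(\I - p^{-1}\P_\Omega)\G\G^*\P_{\S}\BE_l$ and $\P_{\S_l}(\I - p^{-1}\P_\Omega)\G\G^*(\I - \P_{\S})\BE_l$ are each bounded by $\|\P_{\S_l} - \P_{\S}\| \cdot \|\I - p^{-1}\P_\Omega\| \cdot \|\BE_l\|_F$, where Lemma~\ref{lem:Riemannian} supplies $\|\P_{\S_l} - \P_{\S}\| \leq 2\|\BE_l\|_F/\sigma_{\min}(\G\by)$ and \eqref{c1} supplies $\|\I - p^{-1}\P_\Omega\| \leq 1 + 8p^{-1}\log(n)$. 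The main obstacle is simply the constant chase: each off-diagonal piece contributes a factor of roughly $\|\BE_l\|_F\cdot p^{-1}\log(n)/\sigma_{\min}(\G\by)$, and condition \eqref{c3} is calibrated precisely so that the inductive bound on $\|\BE_l\|_F$ makes this factor at most $\vep_0$. Summing the main contribution ($\leq \vep_0\|\BE_l\|_F$) with the handful of correction pieces (each $\leq \vep_0\|\BE_l\|_F$) gives $\|\BW_l - \G\by\|_F \leq 5\vep_0\|\BE_l\|_F$, hence $\|\BE_{l+1}\|_F \leq 10\vep_0\|\BE_l\|_F = \nu\|\BE_l\|_F$, closing the induction.
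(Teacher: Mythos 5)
Your overall architecture is the same as the paper's: induction on $l$, the factor of two from best rank-$r$ approximation, a main term controlled by the restricted isometry condition \eqref{c2}, and correction terms controlled by Lemma~\ref{lem:Riemannian}. The diagonal piece and the $(\I-\P_{\S_l})\G\by$ piece are handled correctly. The gap is in the off-diagonal pieces: you bound each one by $\|\P_{\S_l}-\P_{\S}\|\cdot\|\I-p^{-1}\P_\Omega\|\cdot\|\BL_l-\G\by\|_F$ with the crude estimate $\|\I-p^{-1}\P_\Omega\|\leq 1+8p^{-1}\log(n)$, and claim \eqref{c3} is calibrated to make this at most $\vep_0\|\BL_l-\G\by\|_F$. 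It is not. Plugging $\|\P_{\S_l}-\P_{\S}\|\leq 2\|\BL_l-\G\by\|_F/\sigma_{\min}(\G\by)$ and \eqref{c3} into your bound gives, up to lower-order terms,
\begin{equation*}
\frac{2p^{1/2}\vep_0}{16\log(n)(1+\vep_0)}\cdot 8p^{-1}\log(n)\cdot\|\BL_l-\G\by\|_F=\frac{p^{-1/2}\vep_0}{1+\vep_0}\|\BL_l-\G\by\|_F,
\end{equation*}
which carries an uncancelled factor $p^{-1/2}=\sqrt{n/m}$. Condition \eqref{c3} supplies only one factor of $p^{1/2}$, whereas the naive operator-norm bound on $p^{-1}\P_\Omega$ costs a full factor of $p^{-1}$; in the regime of interest $m\ll n$ this correction term dwarfs $\vep_0\|\BL_l-\G\by\|_F$ and the induction does not close.

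The missing ingredient is the refined estimate \eqref{eq:POGPS} in the paper's Lemma~\ref{lem:RIPPL}: combining \eqref{c1} with \eqref{c2} via $\|\P_{\Omega}\G^{*}\P_{\S}(\BZ)\|^2\leq 8\log(n)\la \BZ,\P_{\S}\G\P_{\Omega}\G^{*}\P_{\S}(\BZ)\ra\leq 8\log(n)(1+\vep_0)p\|\BZ\|_F^2$ shows that $\P_\Omega\G^*$ restricted to the tangent space has norm of order $\sqrt{\log(n)\,p}$ rather than $\log(n)$; the extra $p^{1/2}$ is exactly what \eqref{c3} is designed to pair with. This bound is then transferred from $\S$ to $\S_l$ using $\|\P_{\S_l}-\P_{\S}\|$ and \eqref{c3} again, yielding $\|\P_{\Omega}\G^{*}\P_{\S_l}\|\leq 8\log(n)(1+\vep_0)p^{1/2}$ and, in turn, the tangent-space RIP \eqref{eq:RIPPL} on $\S_l$ with constant $4\vep_0$. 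With these two facts every correction term comes in at $O(\vep_0)\|\BL_l-\G\by\|_F$ and the contraction $\nu=10\vep_0$ follows; without them your constant chase fails.
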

The proof of Thm.~\ref{thm:local} makes use of the restricted isometry property of $\P_\Omega(\cdot)$ on $\S_l$ when 
$\BL_l$ is in a small neighborhood  of $\G\by$.
\begin{lemma}\label{lem:RIPPL}
Suppose \eqref{c1}, \eqref{c2} hold and 
\begin{align*}
\frac{\|\bm{L}_l-\mathcal{G}\bm{y}\|_F}{\sigma_{\min}(\mathcal{G}\bm{y})}\leq
\frac{p^{1/2}\vep_0}{16\log(n)(1+\vep_0)}. \numberthis\label{c4}
\end{align*}
 Then we have 
\begin{equation}\label{eq:POGPS}
\|\mathcal{P}_{\Omega}\mathcal{G}^{*}\mathcal{P}_{\mathcal{S}_l}\|
\leq8\log(n) (1+\vep_0)p^{1/2}
\end{equation}
and 
\begin{equation}\label{eq:RIPPL}
\|\mathcal{P}_{\mathcal{S}_{l}}\mathcal{G}\mathcal{G}^*\mathcal{P}_{\mathcal{S}_{l}}-p^{-1}\mathcal{P}_{\mathcal{S}_{l}}\mathcal{G}\mathcal{P}_{\Omega}\mathcal{G}^*\mathcal{P}_{\mathcal{S}_{l}}\|
\leq 4\vep_0.
\end{equation}
\end{lemma}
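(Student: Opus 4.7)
The plan is to treat the two bounds sequentially, using \eqref{eq:POGPS} as a key tool for \eqref{eq:RIPPL}. At the outset, I would convert hypothesis \eqref{c4} into a quantitative subspace-perturbation estimate via Lemma~\ref{lem:Riemannian}, yielding
\begin{equation*}
\|\Delta\|\leq \frac{p^{1/2}\vep_0}{8\log(n)(1+\vep_0)},\qquad\text{where }\Delta:=\mathcal{P}_{\mathcal{S}_l}-\mathcal{P}_{\mathcal{S}}.
\end{equation*}
The crucial structural feature to track is the $p^{1/2}$ factor in this bound; it is what will absorb the $p^{-1}$ scaling in $p^{-1}\mathcal{G}\mathcal{P}_\Omega\mathcal{G}^*$ that would otherwise blow up any direct perturbation argument.

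For \eqref{eq:POGPS}, I would first bound $\|\mathcal{P}_\Omega\mathcal{G}^*\mathcal{P}_{\mathcal{S}}\|$ and then pass to $\mathcal{S}_l$ by splitting. Since $\mathcal{P}_\Omega$ is a diagonal PSD operator on $\mathbb{C}^n$ with $\|\mathcal{P}_\Omega\|\leq 8\log(n)$ by \eqref{c1}, the operator inequality $\mathcal{P}_\Omega^2\preceq 8\log(n)\mathcal{P}_\Omega$ yields
\begin{equation*}
\|\mathcal{P}_\Omega\mathcal{G}^*\mathcal{P}_{\mathcal{S}}\|^2 = \|\mathcal{P}_{\mathcal{S}}\mathcal{G}\mathcal{P}_\Omega^2\mathcal{G}^*\mathcal{P}_{\mathcal{S}}\|\leq 8\log(n)\,\|\mathcal{P}_{\mathcal{S}}\mathcal{G}\mathcal{P}_\Omega\mathcal{G}^*\mathcal{P}_{\mathcal{S}}\|,
\end{equation*}
and combining \eqref{c2} with $\|\mathcal{P}_{\mathcal{S}}\mathcal{G}\mathcal{G}^*\mathcal{P}_{\mathcal{S}}\|\leq 1$ (because $\mathcal{G}\mathcal{G}^*$ is the orthogonal projection onto the Hankel subspace) bounds the inner norm by $p(1+\vep_0)$, so $\|\mathcal{P}_\Omega\mathcal{G}^*\mathcal{P}_{\mathcal{S}}\|\leq \sqrt{8\log(n)(1+\vep_0)\,p}$. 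Writing $\mathcal{P}_\Omega\mathcal{G}^*\mathcal{P}_{\mathcal{S}_l}=\mathcal{P}_\Omega\mathcal{G}^*\mathcal{P}_{\mathcal{S}}+\mathcal{P}_\Omega\mathcal{G}^*\Delta$, using $\|\mathcal{P}_\Omega\mathcal{G}^*\|\leq 8\log(n)$, and invoking the $\Delta$ bound yields \eqref{eq:POGPS} after the elementary check $\sqrt{8\log(n)(1+\vep_0)}+\vep_0\leq 8\log(n)(1+\vep_0)$.

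For \eqref{eq:RIPPL}, set $\mathcal{B}:=\mathcal{G}\mathcal{G}^*-p^{-1}\mathcal{G}\mathcal{P}_\Omega\mathcal{G}^*$ and decompose
\begin{equation*}
\mathcal{P}_{\mathcal{S}_l}\mathcal{B}\mathcal{P}_{\mathcal{S}_l} = \mathcal{P}_{\mathcal{S}}\mathcal{B}\mathcal{P}_{\mathcal{S}} + \Delta\mathcal{B}\mathcal{P}_{\mathcal{S}_l} + \mathcal{P}_{\mathcal{S}}\mathcal{B}\Delta.
\end{equation*}
The first term is $\leq\vep_0$ by \eqref{c2}. Each cross term splits into a benign $\mathcal{G}\mathcal{G}^*$ piece of size $\leq\|\Delta\|$ and a potentially dangerous $p^{-1}\mathcal{G}\mathcal{P}_\Omega\mathcal{G}^*$ piece; for the latter I would use
\begin{equation*}
p^{-1}\|\Delta\mathcal{G}\mathcal{P}_\Omega\mathcal{G}^*\mathcal{P}_{\mathcal{S}_l}\|\leq p^{-1}\|\Delta\|\cdot\|\mathcal{P}_\Omega\mathcal{G}^*\mathcal{P}_{\mathcal{S}_l}\|\leq p^{-1}\cdot\tfrac{p^{1/2}\vep_0}{8\log(n)(1+\vep_0)}\cdot 8\log(n)(1+\vep_0)p^{1/2}=\vep_0,
\end{equation*}
plugging in \eqref{eq:POGPS} just established. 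The $\mathcal{P}_{\mathcal{S}}\mathcal{B}\Delta$ cross term is handled analogously via the slightly tighter bound on $\|\mathcal{P}_\Omega\mathcal{G}^*\mathcal{P}_{\mathcal{S}}\|$ from the previous step. Summing produces $\|\mathcal{P}_{\mathcal{S}_l}\mathcal{B}\mathcal{P}_{\mathcal{S}_l}\|\leq 3\vep_0+2\|\Delta\|\leq 4\vep_0$, which is \eqref{eq:RIPPL}.

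The principal obstacle is that the naive perturbation $\|\mathcal{P}_{\mathcal{S}_l}\mathcal{B}\mathcal{P}_{\mathcal{S}_l}-\mathcal{P}_{\mathcal{S}}\mathcal{B}\mathcal{P}_{\mathcal{S}}\|\leq 2\|\Delta\|\cdot\|\mathcal{B}\|$ is doomed: $\|\mathcal{B}\|$ can be as large as $1+8p^{-1}\log(n)$, and this $p^{-1}$ is fatal. The workaround is to exploit that in every cross term one side of $\mathcal{G}\mathcal{P}_\Omega\mathcal{G}^*$ is still sandwiched against a tangent-space projection, so \eqref{eq:POGPS} applies and saves a factor $p^{-1/2}$, while the remaining $p^{-1/2}$ is absorbed by the $p^{1/2}$ in $\|\Delta\|$ coming from \eqref{c4}. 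This delicate cancellation is precisely why hypothesis \eqref{c4} takes its peculiar form and why \eqref{eq:POGPS} must be proved before \eqref{eq:RIPPL}.
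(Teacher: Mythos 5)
Your proposal is correct and follows essentially the same route as the paper: first bound $\|\mathcal{P}_\Omega\mathcal{G}^*\mathcal{P}_{\mathcal{S}}\|$ via $\mathcal{P}_\Omega^2\preceq\|\mathcal{P}_\Omega\|\mathcal{P}_\Omega$ together with \eqref{c2}, pass to $\mathcal{S}_l$ by the perturbation bound $\|\mathcal{P}_{\mathcal{S}_l}-\mathcal{P}_{\mathcal{S}}\|\leq 2\|\bm{L}_l-\mathcal{G}\bm{y}\|_F/\sigma_{\min}(\mathcal{G}\bm{y})$ from Lem.~\ref{lem:Riemannian}, and then control \eqref{eq:RIPPL} by the identical five-term decomposition in which the $p^{-1}\mathcal{G}\mathcal{P}_\Omega\mathcal{G}^*$ cross terms are tamed by the just-established \eqref{eq:POGPS}. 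The constants and the final tally $\vep_0+2\|\Delta\|+2\vep_0\leq 4\vep_0$ all check out.
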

\begin{proof}
Since $\|\mathcal{P}_{\mathcal{S}}\mathcal{G}\mathcal{P}_{\Omega}\|=\|(\mathcal{P}_{\mathcal{S}}\mathcal{G}\mathcal{P}_{\Omega})^{*}\|=\|\mathcal{P}_{\Omega}\mathcal{G}^*\mathcal{P}_{\mathcal{S}}\|$, for any $\bm{Z}\in\mathbb{C}^{n_1\times n_2}$,
\begin{equation*}
\begin{split}
\|\mathcal{P}_{\Omega}\mathcal{G}^{*}\mathcal{P}_{\mathcal{S}}(\BZ)\|^2
&=\langle\mathcal{P}_{\Omega}\mathcal{G}^{*}\mathcal{P}_{\mathcal{S}}(\BZ),\mathcal{P}_{\Omega}\mathcal{G}^{*}\mathcal{P}_{\mathcal{S}}(\BZ)\rangle \cr
&\leq 8 \log(n) \langle\mathcal{G}^{*}\mathcal{P}_{\mathcal{S}}(\BZ),\mathcal{P}_{\Omega}\mathcal{G}^{*}\mathcal{P}_{\mathcal{S}}(\BZ)\rangle \cr
&=8 \log(n) \langle \BZ,\mathcal{P}_{\mathcal{S}}\mathcal{G}\mathcal{P}_{\Omega}\mathcal{G}^{*}\mathcal{P}_{\mathcal{S}}(\BZ)\rangle \cr
&\leq 8 \log(n) (1+\vep_0)p \|\BZ\|_F^2
\end{split}
\end{equation*}
where the first inequality follows from \eqref{c1} and the second inequality follows from \eqref{c2}.
So it follows that $\|\mathcal{P}_{\mathcal{S}}\mathcal{G}\mathcal{P}_{\Omega}\|=\|\mathcal{P}_{\Omega}\mathcal{G}^*\mathcal{P}_{\mathcal{S}}\|\leq \sqrt{8 \log(n) (1+\vep_0)p}$ and
\begin{equation*}
\begin{split}
\|\mathcal{P}_{\Omega}\mathcal{G}^{*}\mathcal{P}_{\mathcal{S}_l}\|
&\leq \|\mathcal{P}_{\Omega}\mathcal{G}^{*}(\mathcal{P}_{\mathcal{S}_l}-\mathcal{P}_{\mathcal{S}})\|+\|\mathcal{P}_{\Omega}\mathcal{G}^{*}\mathcal{P}_{\mathcal{S}}\|\cr
&\leq 8 \log(n) \frac{2\|\bm{L}_l-\mathcal{G}\bm{y}\|_F}{\sigma_{\min}(\mathcal{G}\bm{y})}+\|\mathcal{P}_{\Omega}\mathcal{G}^{*}\mathcal{P}_{\mathcal{S}}\|\cr
&\leq 8 \log(n) \frac{p^{1/2}\vep_0}{8\log(n)(1+\vep_0)}+\sqrt{8 \log(n) (1+\vep_0)p} \cr
&\leq 8 \log(n) (1+\vep_0)p^{1/2},
\end{split}
\end{equation*}
where the second inequality follows from \eqref{c1} and Lem.~\ref{lem:Riemannian},  the third inequality follows from \eqref{c4}.

Finally,
\begin{align*}
&\|\mathcal{P}_{\mathcal{S}_{l}}\mathcal{G}\mathcal{G}^*\mathcal{P}_{\mathcal{S}_{l}}-p^{-1}\mathcal{P}_{\mathcal{S}_{l}}\mathcal{G}\mathcal{P}_{\Omega}\mathcal{G}^*\mathcal{P}_{\mathcal{S}_{l}}\|\\
&\leq
\|\mathcal{P}_{\mathcal{S}}\mathcal{G}\mathcal{G}^*\mathcal{P}_{\mathcal{S}}-p^{-1}\mathcal{P}_{\mathcal{S}}\mathcal{G}\mathcal{P}_{\Omega}\mathcal{G}^*\mathcal{P}_{\mathcal{S}}\|
+\|(\mathcal{P}_{\mathcal{S}}-\mathcal{P}_{\mathcal{S}_{l}})\mathcal{G}\mathcal{G}^*\mathcal{P}_{\mathcal{S}_{l}}\|+\|\mathcal{P}_{\mathcal{S}}\mathcal{G}\mathcal{G}^*(\mathcal{P}_{\mathcal{S}}-\mathcal{P}_{\mathcal{S}_{l}})\|\\
&\quad+\|p^{-1}(\mathcal{P}_{\mathcal{S}}-\mathcal{P}_{\mathcal{S}_{l}})\mathcal{G}\mathcal{P}_{\Omega}\mathcal{G}^*\mathcal{P}_{\mathcal{S}_{l}}\|
+\|p^{-1}\mathcal{P}_{\mathcal{S}}\mathcal{G}\mathcal{P}_{\Omega}\mathcal{G}^*(\mathcal{P}_{\mathcal{S}}-\mathcal{P}_{\mathcal{S}_{l}})\|\\
&\leq\vep_0+ \frac{4\|\bm{L}_l-\mathcal{G}\bm{y}\|}{\sigma_{\min}(\mathcal{G}\bm{y})}+p^{-1}\cdot \frac{2\|\bm{L}_l-\mathcal{G}\bm{y}\|}{\sigma_{\min}(\mathcal{G}\bm{y})}\cdot (\|\mathcal{P}_{\Omega}\mathcal{G}^{*}\mathcal{P}_{\mathcal{S}_l}\|+\|\mathcal{P}_{\mathcal{S}}\mathcal{G}\mathcal{P}_{\Omega}\|)\\
&\leq 4\vep_0,
\end{align*}
which completes the proof of \eqref{eq:RIPPL}.
\end{proof}
\begin{proof}[Proof of Theorem~\ref{thm:local}]
First note that $\BL_{l+1}=\T_r(\BW_l)$, where
\begin{align*}
\BW_l&=\mathcal{P}_{\mathcal{S}_l}\mathcal{H}(\bm{x}_l+p^{-1}\mathcal{P}_{\Omega}(\bm{x}-\bm{x}_l))\\
&=\mathcal{P}_{\mathcal{S}_l}\mathcal{G}(\bm{y}_l+p^{-1}\mathcal{P}_{\Omega}(\bm{y}-\bm{y}_l)).
\end{align*}
So we have 
\begin{align*}
\|\bm{L}_{l+1}-\mathcal{G}\bm{y}\|_F
&\leq \|\bm{W}_{l}-\bm{L}_{l+1}\|_F+\|\bm{W}_{l}-\mathcal{G}\bm{y}\|_F\leq2\|\bm{W}_{l}-\mathcal{G}\bm{y}\|_F\cr
&=2\|\mathcal{P}_{\mathcal{S}_l}\mathcal{G}(\bm{y}_{l}+p^{-1}\mathcal{P}_{\Omega}(\bm{y}-\bm{y}_l))-\mathcal{G}\bm{y}\|_F\cr
&\leq 2\|\mathcal{P}_{\mathcal{S}_l}\mathcal{G}\bm{y}-\mathcal{G}\bm{y}\|_F+2\|(\mathcal{P}_{\mathcal{S}_l}\mathcal{G}-p^{-1}\mathcal{P}_{\mathcal{S}_l}\mathcal{G}\mathcal{P}_{\Omega})(\bm{y}_l-\bm{y}) \|_F\cr
&=2\|(\mathcal{I}-\mathcal{P}_{\mathcal{S}_l})(\bm{L}_l-\mathcal{G}\bm{y})\|_F+2\|(\mathcal{P}_{\mathcal{S}_l}\mathcal{G}\mathcal{G}^*-p^{-1}\mathcal{P}_{\mathcal{S}_l}\mathcal{G}\mathcal{P}_{\Omega}\mathcal{G}^*)(\bm{L}_{l}-\mathcal{G}\bm{y})\|_F\cr
&\leq 2\|(\mathcal{I}-\mathcal{P}_{\mathcal{S}_l})(\bm{L}_l-\mathcal{G}\bm{y})\|_F+2\|(\mathcal{P}_{\mathcal{S}_l}\mathcal{G}\mathcal{G}^*\mathcal{P}_{\mathcal{S}_l}-p^{-1}\mathcal{P}_{\mathcal{S}_l}\mathcal{G}\mathcal{P}_{\Omega}\mathcal{G}^*\mathcal{P}_{\mathcal{S}_l})(\bm{L}_l-\mathcal{G}\bm{y})\|_F\cr
&\quad+2\|\mathcal{P}_{\mathcal{S}_l}\mathcal{GG}^*(\mathcal{I}-\mathcal{P}_{\mathcal{S}_l})(\bm{L}_l-\mathcal{G}\bm{y})\|_F
+2p^{-1}\|\mathcal{P}_{\mathcal{S}_l}\mathcal{G}\mathcal{P}_{\Omega}\mathcal{G}^*(\mathcal{I}-\mathcal{P}_{\mathcal{S}_l})(\bm{L}_l-\mathcal{G}\bm{y})\|_F,\\
&:=I_1+I_2+I_3+I_4,
\end{align*}
where the second inequality comes from the fact that $\bm{L}_{l+1}$ is the best rank $r$ approximation to $\bm{W}_{l}$, the second equality follows from $(\I-\P_{\S_l})\BL_l=0$, $\by_l=\G^*\BL_l$ and $\G^*\G=\I$.

Let us first assume \eqref{c4} holds. Then the application of Lem.~\ref{lem:Riemannian} gives 
\begin{align*}
I_1+I_3+I_4&\leq \left(\frac{4\|\bm{L}_l-\mathcal{G}\bm{y}\|_F}{\sigma_{\min}(\mathcal{G}\bm{y})}+2p^{-1}\|\mathcal{P}_{\Omega}\mathcal{G}^{*}\mathcal{P}_{\mathcal{S}_l}\| \frac{\|\bm{L}_l-\mathcal{G}\bm{y}\|_F}{\sigma_{\min}(\mathcal{G}\bm{y})}\right)\|\bm{L}_l-\mathcal{G}\bm{y}\|_F\cr
&\leq 2\vep_0 \|\bm{L}_l-\mathcal{G}\bm{y}\|_F,
\end{align*}
where the last inequality follows from \eqref{c2}, \eqref{eq:POGPS} and the fact $\ln\P_{\S_l}\G\P_\Omega\rn=\ln\P_\Omega\G^*\P_{\S_l}\rn$. Moreover, \eqref{eq:RIPPL} implies 
\begin{align*}
I_2\leq8\vep_0 \|\bm{L}_l-\mathcal{G}\bm{y}\|_F.
\end{align*}
Therefore putting the bounds for $I_1,~I_2,~I_3, \mbox{ and }I_4$ together gives 
\begin{align*}
\|\bm{L}_{l+1}-\mathcal{G}\bm{y}\|_F\leq \nu \|\bm{L}_l-\mathcal{G}\bm{y}\|_F,
\end{align*}
where $\nu=10\vep_0<1$. Since \eqref{c4} holds for $l=0$ by the assumption of Thm.~\ref{thm:local} and 
$\ln \bm{L}_l-\mathcal{G}\bm{y}\rn_F$ is a contractive sequence,  \eqref{c4} holds for all $l\geq 0$. Thus 
\begin{align*}
\ln\by_l-\by\rn=\ln\G^*(\BL_l-\G\by)\rn\leq \ln \BL_l-\G\by\rn_F\leq\nu^l\ln \BL_0-\G\by\rn_F,
\end{align*}
where we have utilized the facts $\by_l=\G^*\BL_l$, $\G^*\G=\I$ and $\ln\G^*\rn\leq 1$.
\end{proof}
\subsection{Proofs of Lemma~\ref{lem:initial} and Theorem~\ref{thm:IHT}}\label{sec:proofinit1}
\begin{proof}[Proof of Lemma~\ref{lem:initial}]
Recall that $\BL_0 = \T_r(p^{-1}\H\P_\Omega(\bx))=\T_r(p^{-1}\G\P_\Omega(\by))$ and $\H\bx=\G\by$. Let us first bound $\ln p^{-1}\G\P_\Omega(\by)-\G\by \rn$. Since $p=\frac{m}{n}$, we have 
\begin{align*}
p^{-1}\G\P_\Omega(\by)-\G\by&=\sum_{k=1}^m\lb\frac{n}{m}y_{a_k}\BH_{a_k}-\frac{1}{m}\G\by\rb:=\sum_{k=1}^m\BZ_{a_k}.
\end{align*}
Because each $a_{k}$ is drawn uniformly from $\lcb 0,\cdots,n-1\rcb$, it is trivial that $\E\lb\BZ_{a_k}\rb=0$. Moreover, we have 
\begin{align*}
\E\lb \BZ_{a_k}\BZ_{a_k}^*\rb&=\E\lb \frac{n^2}{m^2}|y_{a_k}|^2\BH_{a_k}\BH_{a_k}^*\rb-\frac{1}{m^2}(\G\by)(\G\by)^*
\\&=\frac{n}{m^2}\sum_{a=0}^{n-1}|y_a|^2\BH_{a}\BH_{a}^*-\frac{1}{m^2}(\G\by)(\G\by)^*\\
&=\frac{n}{m^2}\BC-\frac{1}{m^2}(\G\by)(\G\by)^*,
\end{align*}
where $\BC$ is a diagonal matrix which corresponds to the diagonal part of $(\G\by)(\G\by)^*$. Therefore
\begin{align*}
\ln \E\lb \sum_{k=1}^m\BZ_{a_k}\BZ_{a_k}^*\rb\rn&\leq\max\lcb\frac{n}{m}\ln\BC\rn,\frac{1}{m}\ln (\G\by)(\G\by)^*\rn\rcb\\
&\leq \frac{n}{m}\ln\G\by\rn^2_{2\rightarrow\infty},
\end{align*}
where $\ln\G\by\rn_{2\rightarrow\infty}$ denotes the maximum row $\ell_2$ norm of $\G\by$. Similarly we can get 
\begin{align*}
\ln \E\lb \sum_{k=1}^m\BZ_{a_k}^*\BZ_{a_k}\rb\rn\leq\frac{n}{m}\ln(\G\by)^*\rn^2_{2\rightarrow\infty}.
\end{align*}
The definition of $\BH_a$ in \eqref{eq:Ha} implies $\ln\BH_a\rn\leq\frac{1}{\sqrt{w_a}}$. So
\begin{align*}
\ln\BZ_{a_k}\rn\leq \frac{n}{m}|y_{a_k}|\ln\BH_{a_k}\rn+\frac{1}{m}\sum_{a=0}^{n-1}|y_a|\ln\BH_a\rn\leq\frac{2n}{m}\ln\D^{-1}\by\rn_\infty.
\end{align*}
By matrix Bernstein inequality in Lem.~\ref{lem:bernstein}, one can  show that there exists a universal constant $C>0$  such that
\begin{align*}
\ln\sum_{k=1}^m\BZ_{a_k}\rn\leq C\lb\sqrt{\frac{n\log(n)}{m}}\max\lcb \ln\G\by\rn_{2\rightarrow\infty},\ln(\G\by)^*\rn_{2\rightarrow\infty}\rcb+\frac{n\log(n)}{m}\ln\D^{-1}\by\rn_\infty\rb
\end{align*}
with probability at least $1-n^{-2}$. Consequently 
 on the same event we have 
\begin{align*}
\ln\BL_0-\G\by\rn&\leq \ln\BL_0-p^{-1}\G\P_\Omega(\by)\rn+\ln p^{-1}\G\P_\Omega(\by)-\G\by\rn\leq 2\ln p^{-1}\G\P_\Omega(\by)-\G\by\rn\\
&\leq  C\lb\sqrt{\frac{n\log(n)}{m}}\max\lcb \ln\G\by\rn_{2\rightarrow\infty},\ln(\G\by)^*\rn_{2\rightarrow\infty}\rcb+\frac{n\log(n)}{m}\ln\D^{-1}\by\rn_\infty\rb.\numberthis\label{eq:initial_error}
\end{align*}
Thus it only remains to bound $\max\lcb \ln\G\by\rn_{2\rightarrow\infty},\ln(\G\by)^*\rn_{2\rightarrow\infty}\rcb$
and $\ln\D^{-1}\by\rn_\infty$ in terms of $\ln\G\by\rn$. From $\G\by=\mathcal{H}\bm{x}=\bm{U}\bm{\Sigma}\bm{V}^*=\bm{E}_L\bm{D}\bm{E}_R^T$, we get
\begin{align*}
\ln\G\by\rn_{2\rightarrow\infty}^2&=\max_i\|\bm{e}_i^*(\mathcal{G}\bm{y})\|^2=\max_i\|\bm{e}_i^*\bm{U}\bm{\Sigma}\bm{V}^*\|^2
\leq\max_i\|\bm{e}_i^*\bm{U}\|^2\|\bm{\Sigma}\|^2 \\
&=\max_i\ln\BU^{(i,:)}\rn^2\|\mathcal{G}\bm{y}\|_2^2
\leq \frac{\mu_0c_sr}{n}\|\mathcal{G}\bm{y}\|_2^2,\numberthis\label{eq:bd_gy}
\end{align*}
where the last inequality follows from Lem.~\ref{lem:U_V}. Similarly we also have
\begin{equation}\label{eq:bd_gyt}
\ln(\G\by)^*\rn^2_{2\rightarrow\infty}\leq \frac{\mu_0c_sr}{n}\|\mathcal{G}\bm{y}\|_2^2.
\end{equation}

The infinity norm of $\D^{-1}\by$ can be bounded as follows 
\begin{align*}
\ln \D^{-1}\by\rn_\infty&=\ln\G\by\rn_\infty=\max_{i,j}|\be_i^*(\G\by)\be_j|\leq \max_{i,j}\ln\be_i^*\BE_L\rn\ln\BD\rn\ln\BE_R^T\be_j\rn\\
&\leq r\ln\BD\rn\leq r\ln\BE_L^\dag\rn\ln\G\by\rn\ln(\BE_R^T)^\dag\rn\leq \frac{\mu_0c_sr}{n}\ln\G\by\rn,\numberthis\label{eq:bd_dy}
\end{align*}
where the last inequality follows from the $\mu_0$-incoherence of $\G\by$. 

Finally inserting \eqref{eq:bd_gy}, \eqref{eq:bd_gyt} and \eqref{eq:bd_dy} into \eqref{eq:initial_error} gives
\begin{align*}
\ln\BL_0-\G\by\rn&\leq C\sqrt{\frac{\mu_0c_sr\log(n)}{m}}\ln\G\by\rn
\end{align*}
provided $m\geq \mu_0c_sr\log(n)$.
\end{proof}
\begin{proof}[Proof of Theorem~\ref{thm:IHT}]
Following from \eqref{eq:err_x_y}, 
we only need to verify when the three conditions in Thm.~\ref{thm:local} are satisfied. Lemma~\ref{lem:sampling} implies \eqref{c1} holds with probability at least $1-n^{-2}$. Lemmas~\ref{lem:U_V} and \ref{lem:RIPPS} guarantees \eqref{c2} is true with probability at least $1-n^{-2}$ if $m\geq C\vep_0^{-2}\mu_0c_sr\log(n)$ for a sufficiently large numerical constant $C>0$.  Similarly  \eqref{c3} can be satisfied with probability at least $1-n^2$ if $m\geq C(1+\vep_0)\vep_0^{-1}\mu_0^{1/2}c_s^{1/2}\kappa rn^{1/2}\log^{3/2}(n)$ following Lem.~\ref{lem:initial} and the fact $\ln\BL_0-\G\by\rn_F\leq\sqrt{2r}\ln\BL_0-\G\by\rn$, where $\kappa$ denotes the condition number of $\G\by$. Taking an upper bound on the number of measurements completes the proof of Thm.~\ref{thm:IHT}.
\end{proof}
\subsection{Proofs of Lemma~\ref{lem:resampling} and Theorem~\ref{thm:resampling}}
The proof of Lem.~\ref{lem:resampling} relies on the following estimation of $\Big\|\mathcal{P}_{\widehat{\mathcal{S}}_l}\mathcal{G}\left(\widehat{p}^{-1}\mathcal{P}_{\widehat{\Omega}_{l+1}}-\mathcal{I}\right)\mathcal{G}^*\left(\mathcal{P}_{\BU}-\mathcal{P}_{\widehat{\BU}_l}\right)\Big\|$, which is a generalization of the asymmetric restricted isometry property \cite{Completion} from matrix completion to low rank Hankel matrix completion.
\begin{lemma}\label{lem:incoherence}
Assume there exists a numerical constant $\mu$ such that
\begin{equation}\label{eq:inc1}
\|\mathcal{P}_{\widehat{\bm{U}}_l}\BH_a\|_F^2
\leq \frac{\mu c_s r}{n},\quad \|\mathcal{P}_{\widehat{\BV}_l}\BH_a\|_F^2
\leq \frac{\mu c_s r}{n},
\end{equation}
and
\begin{equation}\label{eq:inc2}
\|\mathcal{P}_{\BU}\BH_a\|_F^2
\leq \frac{\mu c_s r}{n},\quad\|\mathcal{P}_{\BV}\BH_a\|_F^2
\leq \frac{\mu c_s r}{n}.
\end{equation}
for all $0\leq a\leq n-1$. Let $\widehat{\Omega}_{l+1}=\lcb a_k~|~k=1,\cdots,\widehat{m}\rcb$ be a set of indices sampled with replacement.  If $\mathcal{P}_{\widehat{\Omega}_{l+1}}$ is independent of $\BU$, $\BV$, $\widehat{\bm{U}}_l$  and $\widehat{\BV}_l$, then 
$$
\Big\|\mathcal{P}_{\widehat{\mathcal{S}}_l}\mathcal{G}\left(\mathcal{I}-\widehat{p}^{-1}\mathcal{P}_{\widehat{\Omega}_{l+1}}\right)\mathcal{G}^*\left(\mathcal{P}_{\BU}-\mathcal{P}_{\widehat{\BU}_l}\right)\Big\|
\leq \sqrt{\frac{160\mu c_s r\log(n)}{\widehat{m}}}
$$
with probability at least $1-n^{-2}$ provided
$$
\widehat{m}\geq \frac{125}{18}\mu c_s r\log(n).
$$
\end{lemma}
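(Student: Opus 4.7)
The plan is to write the operator on the left as a sum of $\widehat m$ independent mean-zero random operators on $\C^{n_1\times n_2}$ and then invoke the matrix Bernstein inequality (Lemma~\ref{lem:bernstein}). Exploiting $\G\G^{*}=\sum_{a=0}^{n-1}\BH_a\otimes\BH_a^{*}$ and $\G\P_{\widehat{\Omega}_{l+1}}\G^{*}=\sum_{k=1}^{\widehat m}\BH_{a_k}\otimes\BH_{a_k}^{*}$ as operators on matrices, where $\BH_a\otimes\BH_a^{*}$ denotes the rank-one operator $\BZ\mapsto\la\BZ,\BH_a\ra\BH_a$, I would write $\mathcal{P}_{\widehat{\S}_l}\G(\I-\widehat p^{-1}\P_{\widehat{\Omega}_{l+1}})\G^{*}(\P_{\BU}-\P_{\widehat{\BU}_l})=\sum_{k=1}^{\widehat m}\BZ_k$ with
\begin{align*}
\BZ_k=\mathcal{P}_{\widehat{\S}_l}\Bigl(\tfrac{1}{\widehat m}\G\G^{*}-\tfrac{n}{\widehat m}\BH_{a_k}\otimes\BH_{a_k}^{*}\Bigr)\bigl(\P_{\BU}-\P_{\widehat{\BU}_l}\bigr).
\end{align*}
Since $a_k$ is uniform on $\{0,\ldots,n-1\}$, $\E\bigl[\tfrac{n}{\widehat m}\BH_{a_k}\otimes\BH_{a_k}^{*}\bigr]=\tfrac{1}{\widehat m}\G\G^{*}$, so each $\BZ_k$ is mean zero; the assumed independence of $\P_{\widehat{\Omega}_{l+1}}$ from $\BU,\widehat{\BU}_l,\widehat{\BV}_l$ lets me condition on the four projectors and treat $\{\BZ_k\}$ as an i.i.d.\ sequence indexed solely by $a_k$.

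Next I would bound the uniform constant $R$ and the variance $\sigma^{2}$ using \eqref{eq:inc1} and \eqref{eq:inc2}. The rank-one operator $\mathcal{P}_{\widehat{\S}_l}(\BH_a\otimes\BH_a^{*})(\P_{\BU}-\P_{\widehat{\BU}_l})$ has norm $\|\mathcal{P}_{\widehat{\S}_l}\BH_a\|_F\cdot\|(\P_{\BU}-\P_{\widehat{\BU}_l})\BH_a\|_F$. The orthogonal decomposition $\mathcal{P}_{\widehat{\S}_l}\BH_a=\mathcal{P}_{\widehat{\BU}_l}\BH_a+(\I-\mathcal{P}_{\widehat{\BU}_l})\mathcal{P}_{\widehat{\BV}_l}\BH_a$ with \eqref{eq:inc1} gives $\|\mathcal{P}_{\widehat{\S}_l}\BH_a\|_F^{2}\le 2\mu c_s r/n$, while the triangle inequality together with \eqref{eq:inc1} and \eqref{eq:inc2} gives $\|(\P_{\BU}-\P_{\widehat{\BU}_l})\BH_a\|_F^{2}\le 4\mu c_s r/n$. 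Combined with the trivial $\|\mathcal{P}_{\widehat{\S}_l}\G\G^{*}(\P_{\BU}-\P_{\widehat{\BU}_l})\|\le 1$, this produces $\|\BZ_k\|\le R=O(\mu c_s r/\widehat m)$. For the variance I would use the domination $\E[\BZ_k\BZ_k^{*}]\preceq\E[\BY_k\BY_k^{*}]$ with $\BY_k=(n/\widehat m)\mathcal{P}_{\widehat{\S}_l}(\BH_{a_k}\otimes\BH_{a_k}^{*})(\P_{\BU}-\P_{\widehat{\BU}_l})$ and the collapsing identity
\begin{align*}
(\BH_a\otimes\BH_a^{*})(\P_{\BU}-\P_{\widehat{\BU}_l})^{2}(\BH_a\otimes\BH_a^{*})=\bigl\|(\P_{\BU}-\P_{\widehat{\BU}_l})\BH_a\bigr\|_F^{2}\,(\BH_a\otimes\BH_a^{*}),
\end{align*}
which, after summing over $a$ and using $\sum_a\BH_a\otimes\BH_a^{*}=\G\G^{*}$ together with $\|\mathcal{P}_{\widehat{\S}_l}\G\G^{*}\mathcal{P}_{\widehat{\S}_l}\|\le 1$, yields $\|\sum_k\E[\BZ_k\BZ_k^{*}]\|\le 4\mu c_s r/\widehat m$. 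An analogous calculation with the roles of the two factors interchanged gives $\|\sum_k\E[\BZ_k^{*}\BZ_k]\|\le 2\mu c_s r/\widehat m$, hence $\sigma^{2}\le 4\mu c_s r/\widehat m$.

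Finally I would feed $R$ and $\sigma^{2}$ into Lemma~\ref{lem:bernstein} with $t=\sqrt{160\mu c_s r\log(n)/\widehat m}$. Since the $\BZ_k$ act on $\C^{n_1\times n_2}$, the Bernstein dimensional factor is at most $2n_1n_2\le 2n^{2}$, contributing only an $O(\log n)$ summand to the exponent. Under the hypothesis $\widehat m\ge(125/18)\mu c_s r\log(n)$ one checks $Rt/3\le\sigma^{2}$, so the exponent reduces to roughly $-t^{2}/(4\sigma^{2})\le -10\log n$, which dominates the dimensional factor and delivers tail probability at most $n^{-2}$. The step I expect to be the main obstacle is the variance estimate: unlike the symmetric situation of Lemma~\ref{lem:RIPPS}, the asymmetric sandwich forces me to track the left projector $\mathcal{P}_{\widehat{\S}_l}$ and the right factor $\P_{\BU}-\P_{\widehat{\BU}_l}$ separately, and the collapsing identity is the crucial device that keeps the two row-norm bounds from multiplying into a $(\mu c_s r/n)^{2}$ factor that would destroy the target scaling.
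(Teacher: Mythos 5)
Your proposal is correct and follows essentially the same route as the paper's proof: the same decomposition into i.i.d.\ mean-zero rank-one operators $\P_{\widehat{\S}_l}(\BH_{a_k})\otimes(\P_{\BU}-\P_{\widehat{\BU}_l})(\BH_{a_k})$ (up to the harmless $n/\widehat m$ rescaling), the same incoherence bounds $\|\P_{\widehat{\S}_l}\BH_a\|_F^2\le 2\mu c_s r/n$ and $\|(\P_{\BU}-\P_{\widehat{\BU}_l})\BH_a\|_F^2\le 4\mu c_s r/n$, the same collapsing identity in the variance estimate, and the same matrix Bernstein application with the same $t$. The only discrepancy is at the level of constants: your assertion that $Rt/3\le\sigma^2$ under $\widehat m\ge\tfrac{125}{18}\mu c_s r\log(n)$ does not hold verbatim with your $\sigma^2$, but the exponent still exceeds $4\log n$ there, so the conclusion is unaffected.
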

\begin{proof}
Since for any $\BZ\in\C^{n_1\times n_2}$
\begin{align*}
\mathcal{P}_{\widehat{\mathcal{S}}_l}\mathcal{G}\mathcal{P}_{\widehat{\Omega}_{l+1}}\mathcal{G}^*\left(\mathcal{P}_{\BU}-\mathcal{P}_{\widehat{\BU}_l}\right)(\BZ) = \sum_{k=1}^\whm\la\BZ,\left(\mathcal{P}_{\BU}-\mathcal{P}_{\widehat{\BU}_l}\right)(\BH_{a_k})\ra\P_{\S_l}(\BH_{a_k}), 
\end{align*}
we can rewrite $\mathcal{P}_{\widehat{\mathcal{S}}_l}\mathcal{G}\mathcal{P}_{\widehat{\Omega}_{l+1}}\mathcal{G}^*\left(\mathcal{P}_{\BU}-\mathcal{P}_{\widehat{\BU}_l}\right)$ as 
\begin{align*}
\mathcal{P}_{\widehat{\mathcal{S}}_l}\mathcal{G}\mathcal{P}_{\widehat{\Omega}_{l+1}}\mathcal{G}^*\left(\mathcal{P}_{\BU}-\mathcal{P}_{\widehat{\BU}_l}\right) = \sum_{k=1}^\whm\P_{\S_l}(\BH_{a_k})\otimes \left(\mathcal{P}_{\BU}-\mathcal{P}_{\widehat{\BU}_l}\right)(\BH_{a_k}).
\end{align*}
Define the random operator 
\begin{align*}
\R_{a_k} = \P_{\widehat{\S}_l}(\BH_{a_k})\otimes \left(\mathcal{P}_{\BU}-\mathcal{P}_{\widehat{\BU}_l}\right)(\BH_{a_k})-\frac{1}{n}
\mathcal{P}_{\widehat{\mathcal{S}}_l}\mathcal{G}\mathcal{G}^*\left(\mathcal{P}_{\BU}-\mathcal{P}_{\widehat{\BU}_l}\right).
\end{align*}
Then it is easy to see that $\E\lb\R_{a_k}\rb=0$. By assumption, for any $0\leq a\leq n-1$, 
\begin{align*}
\|\mathcal{P}_{\widehat{\mathcal{S}}_l}\lb\bm{H}_a\rb\|_F^2\leq \|\mathcal{P}_{\widehat{U}_l}\lb\bm{H}_a\rb\|_F^2+\|\mathcal{P}_{\widehat{V}_l}\lb\bm{H}_a\rb\|_F^2
\leq \frac{2\mu c_s r}{n}.
\end{align*}
So
\begin{align*}
\ln\R_{a_k}\rn&\leq \ln \mathcal{P}_{\widehat{\mathcal{S}}_l}\lb\bm{H}_{a_k}\rb\rn_F\ln\lb \mathcal{P}_{\BU}-\mathcal{P}_{\widehat{\BU}_l}\right)(\BH_{a_k})\rn_F+\frac{1}{n}\ln\mathcal{P}_{\widehat{\mathcal{S}}_l}\mathcal{G}\mathcal{G}^*\left(\mathcal{P}_{\BU}-\mathcal{P}_{\widehat{\BU}_l}\right)\rn\leq\frac{5\mu c_sr}{n}.
\end{align*}
Next let us bound $\ln\mathds{E}(\mathcal{R}_{a_k}\mathcal{R}_{a_k}^*)\rn$ as follows 
\begin{align*}
\ln\mathds{E}(\mathcal{R}_{a_k}\mathcal{R}_{a_k}^*)\rn&=\ln\E\lb\ln \lb \mathcal{P}_{\BU}-\mathcal{P}_{\widehat{\BU}_l}\right)(\BH_{a_k})\rn_F^2\mathcal{P}_{\widehat{\mathcal{S}}_l}\lb\bm{H}_{a_k}\rb\otimes \mathcal{P}_{\widehat{\mathcal{S}}_l}\lb\bm{H}_{a_k}\rb\rb-\frac{1}{n^2}\mathcal{P}_{\widehat{\mathcal{S}}_l}\mathcal{G}\mathcal{G}^*\left(\mathcal{P}_{\BU}-\mathcal{P}_{\widehat{\BU}_l}\right)^2\G\G^*\P_{\widehat{\S}_l}\rn\\
&\leq \ln \E\lb\ln \lb \mathcal{P}_{\BU}-\mathcal{P}_{\widehat{\BU}_l}\right)(\BH_{a_k})\rn_F^2\mathcal{P}_{\widehat{\mathcal{S}}_l}\lb\bm{H}_{a_k}\rb\otimes \mathcal{P}_{\widehat{\mathcal{S}}_l}\lb\bm{H}_{a_k}\rb\rb\rn+\frac{4}{n^2}\\
&\leq \frac{4\mu c_sr}{n}\ln\E\lb\mathcal{P}_{\widehat{\mathcal{S}}_l}\lb\bm{H}_{a_k}\rb\otimes \mathcal{P}_{\widehat{\mathcal{S}}_l}\lb\bm{H}_{a_k}\rb\rb\rn+\frac{4}{n^2}\\
&= \frac{4\mu c_sr}{n^2}\ln\mathcal{P}_{\widehat{\mathcal{S}}_l}\G\G^*\mathcal{P}_{\widehat{\mathcal{S}}_l}\rn+\frac{4}{n^2}\\
&\leq \frac{8\mu c_sr}{n^2}.
\end{align*}
This implies 
\begin{align*}
\ln\E\lb\sum_{k=1}^\whm\R_{a_k}\R_{a_k}^*\rb\rn\leq\sum_{k=1}^\whm\ln \mathds{E}(\mathcal{R}_{a_k}\mathcal{R}_{a_k}^*)\rn\leq\frac{8\mu c_sr\whm}{n^2}.
\end{align*}
We can similarly obtain 
\begin{align*}
\ln\E\lb\sum_{k=1}^\whm\R_{a_k}^*\R_{a_k}\rb\rn\leq\frac{12\mu c_sr\whm}{n^2}.
\end{align*}
So the application of the matrix Bernstein inequality  in Lem.~\ref{lem:bernstein} gives 
\begin{align*}
\mathbb{P}\left\{\left\|\sum_{k=1}^{\widehat{m}}\mathcal{R}_{a_k}\right\|\geq t\right\}
\leq
2n_1n_2\exp\left(\frac{-t^2/2}{\frac{12\mu c_s {\widehat{m}} r}{n^2}+\frac{5\mu c_s r}{n}t/3}\right).
\end{align*}
If $t\leq \frac{24{\widehat{m}}}{5n}$, then
$$
\mathbb{P}\left\{\left\|\sum_{k=1}^{\widehat{m}}\mathcal{R}_{a_k}\right\|\geq t\right\}
\leq 2n_1n_2\mathrm{exp}\left(\frac{-t^2/2}{\frac{20\mu c_s{\widehat{m}} r}{n^2}}\right)
\leq n^2\mathrm{exp}\left(\frac{-t^2/2}{\frac{20\mu c_s{\widehat{m}} r}{n^2}}\right).
$$
Setting $t=\sqrt{\frac{160\mu c_s{\widehat{m}} r\log(n)}{n^2}}$ gives
$$
\mathbb{P}\left\{\left\|\sum_{k=1}^{\widehat{m}}\mathcal{R}_{a_k}\right\|\geq t\right\}
\leq n^{-2}.
$$
The condition $t\leq\frac{24{\widehat{m}}}{5n}$ implies
$
{\widehat{m}}\geq\frac{125}{18}\mu c_sr\log(n).
$
The proof is complete because
$$
\frac{n}{\widehat{m}}\sum_{k=1}^{\widehat{m}}\mathcal{R}_{a_k}=\mathcal{P}_{\widehat{\mathcal{S}}_l}\mathcal{G}\left(\widehat{p}^{-1}\mathcal{P}_{\widehat{\Omega}_{l+1}}-\mathcal{I}\right)\mathcal{G}^*\left(\mathcal{P}_{\BU}-\mathcal{P}_{\widehat{\BU}_l}\right).$$
\end{proof}
The following lemma from \cite{Completion} will also be used in the proof of Lem.~\ref{lem:resampling}.
\begin{lemma}\label{lem:trim_not_increase}
Let $\widetilde{\BL}_l=\widetilde{\BU}_l\widetilde{\BS}_l\widetilde{\BV}^*_l$  and $\G\by=\BU\BS\BV^*$ be two rank $r$ matrices which satisfy
\begin{align*}
\ln\widetilde{\BL}_l-\G\by\rn_F\leq \frac{\sigma_{\min}(\G\by)}{10\sqrt{2}}.
\end{align*}
Assume 
$
\ln\BU^{(i,:)}\rn^2\leq\frac{\mu_0c_sr}{n}\mbox{ and }\ln\BV^{(j,:)}\rn^2\leq\frac{\mu_0c_sr}{n}.
$
Then the matrix  $\widehat{\BL}_l=\mathrm{Trim}_{\mu_0}(\widetilde{\BL}_l)=\widehat{\BU}_l\widehat{\BS}_l\widehat{\BV}^*_l$ returned by Alg.~\ref{Trimming}
satisfies 
\begin{align*}
\ln\widehat{\BL}_l-\G\by\rn_F\leq 8\kappa\ln\widetilde{\BL}_l-\G\by\rn_F\quad\mbox{and}\quad\max\lcb\ln\widehat{\BU}^{(i,:)}\rn^2,\ln\widehat{\BV}^{(j,:)}\rn^2\rcb\leq\frac{100\mu_0c_sr}{81n},
\end{align*}
where $\kappa$ denotes the condition number of $\G\by$.
\end{lemma}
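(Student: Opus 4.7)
The plan is to first show that the trimmed matrices $\widehat{\BA}_l$ and $\widehat{\BB}_l$ are small Frobenius perturbations of the singular-vector factors $\widetilde{\BU}_l$ and $\widetilde{\BV}_l$, and then to read off both claims of the lemma from that perturbation estimate.

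Step 1 (Perturbation of the factors). By the definition of $\mathrm{Trim}_{\mu_0}$, the row $\widehat{\BA}_l^{(i,:)}$ either equals $\widetilde{\BU}_l^{(i,:)}$ or is rescaled to have norm exactly $\sqrt{\mu_0 c_s r/n}$. In the nontrivial rescaled case, the incoherence hypothesis $\|\BU^{(i,:)}\|\le\sqrt{\mu_0 c_s r/n}$ together with the reverse triangle inequality applied to $\widetilde{\BU}_l^{(i,:)}=\P_{\widetilde{\BU}_l}\be_i$ and $\BU^{(i,:)}=\P_{\BU}\be_i$ gives
$$
\|\widehat{\BA}_l^{(i,:)}-\widetilde{\BU}_l^{(i,:)}\|
= \|\widetilde{\BU}_l^{(i,:)}\|-\sqrt{\mu_0 c_s r/n}
\le \|\widetilde{\BU}_l^{(i,:)}\|-\|\BU^{(i,:)}\|
\le \|(\P_{\widetilde{\BU}_l}-\P_{\BU})\be_i\|.
$$
Squaring and summing over $i$ and then invoking the standard Frobenius sin-$\Theta$ estimate $\|\P_{\widetilde{\BU}_l}-\P_{\BU}\|_F\le\sqrt{2}\|\widetilde{\BL}_l-\G\by\|_F/\sigma_{\min}(\G\by)$ (valid since $\G\by$ has rank $r$, so $\sigma_{r+1}(\G\by)=0$), together with the assumption $\|\widetilde{\BL}_l-\G\by\|_F\le\sigma_{\min}(\G\by)/(10\sqrt{2})$, produces $\|\widehat{\BA}_l-\widetilde{\BU}_l\|_F\le 1/10$, and symmetrically $\|\widehat{\BB}_l-\widetilde{\BV}_l\|_F\le 1/10$.

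Step 2 (Incoherence of $\widehat{\BU}_l,\widehat{\BV}_l$). By Weyl, $\sigma_{\min}(\widehat{\BA}_l)\ge\sigma_{\min}(\widetilde{\BU}_l)-\|\widehat{\BA}_l-\widetilde{\BU}_l\|\ge 9/10$, so $\widehat{\BA}_l$ has full column rank and $\operatorname{col}(\widehat{\BL}_l)=\operatorname{col}(\widehat{\BA}_l)$. Writing the left singular factor of $\widehat{\BL}_l$ as $\widehat{\BU}_l=\widehat{\BA}_l\BM$, orthonormality of $\widehat{\BU}_l$ forces $\BM=(\widehat{\BA}_l^*\widehat{\BA}_l)^{-1/2}\BO$ for some unitary $\BO$ and hence $\|\BM\|=1/\sigma_{\min}(\widehat{\BA}_l)\le 10/9$; combining with $\|\widehat{\BA}_l^{(i,:)}\|\le\sqrt{\mu_0 c_s r/n}$ yields $\|\widehat{\BU}_l^{(i,:)}\|^2\le 100\mu_0 c_s r/(81n)$. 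The argument for $\widehat{\BV}_l$ is symmetric.

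Step 3 (Frobenius error and main obstacle). Expanding
$$
\widehat{\BL}_l-\widetilde{\BL}_l=(\widehat{\BA}_l-\widetilde{\BU}_l)\widetilde{\BS}_l\widehat{\BB}_l^*+\widetilde{\BU}_l\widetilde{\BS}_l(\widehat{\BB}_l-\widetilde{\BV}_l)^*,
$$
using $\|\widehat{\BB}_l\|,\|\widetilde{\BU}_l\|\le 1$ and $\|\widetilde{\BS}_l\|\le\sigma_{\max}(\G\by)+\sigma_{\min}(\G\by)/(10\sqrt{2})$ (Weyl), and applying the Step-1 bounds, controls $\|\widehat{\BL}_l-\widetilde{\BL}_l\|_F$ by an explicit constant multiple of $\kappa\|\widetilde{\BL}_l-\G\by\|_F$; the triangle inequality then delivers the claimed factor $8\kappa$ after absorbing the additive $+1$ using $\kappa\ge 1$. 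The main obstacle is purely constant-tracking: the target $(10/9)^2=100/81$ in the incoherence conclusion forces $\|\widehat{\BA}_l-\widetilde{\BU}_l\|\le 1/10$ to hold exactly, so the $\sqrt{2}$ in the Frobenius sin-$\Theta$ inequality must line up precisely with the hypothesis's $10\sqrt{2}$. Once Step 1 is established with the right constant, Steps 2 and 3 follow by standard matrix-perturbation arguments.
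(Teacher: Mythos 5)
The paper does not actually prove this lemma --- it is imported verbatim from \cite{Completion} ("The following lemma from \cite{Completion}...") --- so there is no internal proof to compare against; your write-up effectively reconstructs the argument of that reference. It is correct: the row-wise reverse-triangle-inequality bound $\|\widehat{\bm{A}}_l-\widetilde{\bm{U}}_l\|_F\le\|\mathcal{P}_{\widetilde{\bm{U}}_l}-\mathcal{P}_{\bm{U}}\|_F$, the factorization $\widehat{\bm{U}}_l=\widehat{\bm{A}}_l(\widehat{\bm{A}}_l^*\widehat{\bm{A}}_l)^{-1/2}\bm{O}$ with $\bm{O}$ unitary giving the $(10/9)^2=100/81$ loss, and the two-term expansion of $\widehat{\bm{L}}_l-\widetilde{\bm{L}}_l$ (with $\|\widehat{\bm{B}}_l\|\le 1$ since trimming only shrinks rows) all go through, and the constants close with room to spare: $\|\widehat{\bm{L}}_l-\mathcal{G}\bm{y}\|_F\le(2\sqrt{2}+1.2)\kappa\,\|\widetilde{\bm{L}}_l-\mathcal{G}\bm{y}\|_F\le 8\kappa\,\|\widetilde{\bm{L}}_l-\mathcal{G}\bm{y}\|_F$. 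The one step you rightly flag as delicate --- the constant $\sqrt{2}$ in $\|\mathcal{P}_{\widetilde{\bm{U}}_l}-\mathcal{P}_{\bm{U}}\|_F\le\sqrt{2}\,\|\widetilde{\bm{L}}_l-\mathcal{G}\bm{y}\|_F/\sigma_{\min}(\mathcal{G}\bm{y})$ --- does hold, but you should derive it rather than cite a generic Wedin/sin-$\Theta$ bound, since the textbook version typically yields the constant $2$, which would only give $\|\widehat{\bm{A}}_l-\widetilde{\bm{U}}_l\|_F\le\sqrt{2}/10$ and break the $100/81$. The two-line justification, valid precisely because both matrices have rank exactly $r$, is: $(\mathcal{I}-\mathcal{P}_{\widetilde{\bm{U}}_l})\widetilde{\bm{L}}_l=0$ implies
\begin{align*}
\|(\mathcal{I}-\mathcal{P}_{\widetilde{\bm{U}}_l})\mathcal{P}_{\bm{U}}\|_F
\le\frac{\|(\mathcal{I}-\mathcal{P}_{\widetilde{\bm{U}}_l})\mathcal{G}\bm{y}\|_F}{\sigma_{\min}(\mathcal{G}\bm{y})}
=\frac{\|(\mathcal{I}-\mathcal{P}_{\widetilde{\bm{U}}_l})(\mathcal{G}\bm{y}-\widetilde{\bm{L}}_l)\|_F}{\sigma_{\min}(\mathcal{G}\bm{y})}
\le\frac{\|\widetilde{\bm{L}}_l-\mathcal{G}\bm{y}\|_F}{\sigma_{\min}(\mathcal{G}\bm{y})},
\end{align*}
and $\|\mathcal{P}_{\widetilde{\bm{U}}_l}-\mathcal{P}_{\bm{U}}\|_F=\sqrt{2}\,\|(\mathcal{I}-\mathcal{P}_{\widetilde{\bm{U}}_l})\mathcal{P}_{\bm{U}}\|_F$ because the two projections have equal rank. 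With that inserted, your proof is complete and self-contained.
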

\begin{proof}[Proof of Lemma~\ref{lem:resampling}]
Let us first assume that 
\begin{align*}
\ln\widetilde{\bm{L}}_{l}-\mathcal{G}\bm{y}\rn_F\leq \dfrac{\sigma_{\min}(\mathcal{G}\bm{y})}{256\kappa^2}.\numberthis\label{lem10_assumption}
\end{align*}
Then the application of  Lem.~\ref{lem:trim_not_increase} implies that
\begin{align*}
\ln\widehat{\BL}_l-\G\by\rn_F\leq 8\kappa\ln\widetilde{\BL}_l-\G\by\rn_F\quad\mbox{and}\quad\max\lcb\ln\widehat{\BU}^{(i,:)}\rn^2,\ln\widehat{\BV}^{(j,:)}\rn^2\rcb\leq\frac{100\mu_0c_sr}{81n}\numberthis\label{eq:after_trimming}
\end{align*}
by noting that $\ln\BU^{(i,:)}\rn^2\leq\frac{\mu_0c_sr}{n}\mbox{ and }\ln\BV^{(j,:)}\rn^2\leq\frac{\mu_0c_sr}{n}$ following from Lem.~\ref{lem:U_V}. Moreover, direct calculation gives
\begin{align*}
\left\|\mathcal{P}_{\widehat{\bm{U}}_{l}}\bm{H}_a\right\|_F^2
=\left\|\widehat{\bm{U}}_{l}^*\bm{H}_a\right\|_F^2
=\frac{1}{|\Gamma_a|}\sum_{i\in \Gamma_a}\left\|\left(\widehat{\bm{U}}_{l}\right)^{(i,:)}\right\|_2^2
\leq\frac{100\mu_0 c_s r}{81n},\numberthis\label{eq:U_hat_incoherence}
\end{align*}
where $\Gamma_a$ is the set of row indices for non-zero entries in $\bm{H}_a$ with cardinality $|\Gamma_a|=w_a$.  Similarly,
\begin{align*}
\left\|\mathcal{P}_{\widehat{\bm{V}}_{l}}\bm{H}_a\right\|_F^2
\leq\frac{100\mu_0 c_s r}{81n}.\numberthis\label{eq:V_hat_incoherence}
\end{align*}

Recall that $\by=\D\bx$ and $\G\by=\H\bx$.  Define $\widehat{\by}_l=\D\widehat{\bx}_l$.  Then  $\widehat{\by}_l=\G^*\widehat{\BL}_l$ and
\begin{align*}
\mathcal{P}_{\widehat{\mathcal{S}}_l}\mathcal{H}\left(\widehat{\bm{x}}_l+\widehat{p}^{-1}\mathcal{P}_{\Omega_{l+1}}\left(\bm{x}-\widehat{\bm{x}}_l\right)\right)=\P_{\S_l}\G\lb\widehat{\by}_l+\widehat{p}^{-1}\mathcal{P}_{\Omega_{l+1}}\left(\bm{y}-\widehat{\bm{y}}_l\right)\rb.
\end{align*}
Consequently,
\begin{equation*}
\begin{split}
\|\widetilde{\bm{L}}_{l+1}-\mathcal{G}\bm{y}\|_F 
&\leq 2\ln\mathcal{P}_{\mathcal{S}_l}\mathcal{G}\lb\widehat{\bm{y}}_{l}+\widehat{p}^{-1}\mathcal{P}_{\widehat{\Omega}_{l+1}}\lb\bm{y}-\widehat{\bm{y}}_l\rb\rb-\mathcal{G}\bm{y}\rn_F\cr
&\leq 2\ln\mathcal{P}_{\widehat{\mathcal{S}}_l}\mathcal{G}\bm{y}-\mathcal{G}\bm{y}\rn_F+2\ln\lb\mathcal{P}_{\widehat{\mathcal{S}}_l}\mathcal{G}-\widehat{p}^{-1}\mathcal{P}_{\widehat{\mathcal{S}}_l}\mathcal{G}\mathcal{P}_{\widehat{\Omega}_{l+1}}\rb\lb\widehat{\bm{y}}_l-\bm{y}\rb \rn_F\cr
&=2\ln\lb\mathcal{I}-\mathcal{P}_{\widehat{\mathcal{S}}_l}\rb\mathcal{G}\bm{y}\rn_F+2\ln\lb\mathcal{P}_{\widehat{\mathcal{S}}_l}\mathcal{G}\mathcal{G}^*-\widehat{p}^{-1}\mathcal{P}_{\widehat{\mathcal{S}}_l}\mathcal{G}\mathcal{P}_{\widehat{\Omega}_{l+1}}\mathcal{G}^*\rb\lb\widehat{\bm{L}}_{l}-\mathcal{G}\bm{y}\rb\rn_F\cr
&\leq 2\ln\lb\mathcal{I}-\mathcal{P}_{\widehat{\mathcal{S}}_l}\rb\lb\widehat{\bm{L}}_l-\mathcal{G}\bm{y}\rb\rn_F+2\ln\lb\mathcal{P}_{\widehat{\mathcal{S}}_l}\mathcal{G}\mathcal{G}^*\mathcal{P}_{\widehat{\mathcal{S}}_l}-\widehat{p}^{-1}\mathcal{P}_{\widehat{\mathcal{S}}_l}\mathcal{G}\mathcal{P}_{\widehat{\Omega}_{l+1}}\mathcal{G}^*\mathcal{P}_{\widehat{\mathcal{S}}_l}\rb\lb\widehat{\bm{L}}_l-\mathcal{G}\bm{y}\rb\rn_F\cr
&+2\ln\mathcal{P}_{\widehat{\mathcal{S}}_l}\mathcal{G}\lb\mathcal{I}-\widehat{p}^{-1}\mathcal{P}_{\widehat{\Omega}_{l+1}}\rb\mathcal{G}^*\left(\mathcal{I}-\mathcal{P}_{\widehat{\mathcal{S}}_l}\right)(\widehat{\bm{L}}_l-\mathcal{G}\bm{y})\rn_F\\
&:=I_5+I_6+I_7.\end{split}
\end{equation*}
The first item $I_5$ can be bounded as 
\begin{align*}
I_5\leq \frac{2\ln\widehat{\bm{L}}_l-\mathcal{G}\bm{y}\rn_F^2}{\sigma_{\min}(\mathcal{G}\bm{y})}\leq\frac{1}{2}\ln\widetilde{\bm{L}}_{l}-\mathcal{G}\bm{y}\rn_F,
\end{align*}
which follows from Lem.~\ref{lem:Riemannian}, the left inequality of \eqref{eq:after_trimming} and the assumption \eqref{lem10_assumption}. The application of Lem.~\ref{lem:RIPPS} together with \eqref{eq:U_hat_incoherence} and \eqref{eq:V_hat_incoherence} implies
\begin{align*}
I_6&\leq2\sqrt{\frac{3200\mu_0 c_sr\log(n)}{81\whm}}\ln\widehat{\bm{L}}_l-\mathcal{G}\bm{y}\rn_F\leq 16\kappa\sqrt{\frac{3200\mu_0 c_sr\log(n)}{81\whm}}\ln\widetilde{\bm{L}}_l-\mathcal{G}\bm{y}\rn_F
\end{align*}
with probability at least $1-n^2$.
To bound $I_7$, first note that 
\begin{equation*}
\begin{split}
\lb\mathcal{I}-\mathcal{P}_{\widehat{\mathcal{S}}_l}\rb\lb\widehat{\bm{L}}_l-\mathcal{G}\bm{y}\rb&=\lb\mathcal{I}-\mathcal{P}_{\widehat{\mathcal{S}}_l}\rb\lb-\mathcal{G}\bm{y}\rb
=\lb\bm{I}-\widehat{\bm{U}}_l\widehat{\bm{U}}_l^*\rb\lb-\mathcal{G}\bm{y}\rb\lb\bm{I}-\widehat{\bm{V}}_l\widehat{\bm{V}}_l^*\rb\cr
&=\lb\bm{U}\bm{U}^*-\widehat{\bm{U}}_l\widehat{\bm{U}}_l^*\rb\lb\widehat{\bm{L}}_l-\mathcal{G}\bm{y}\rb\lb\bm{I}-\widehat{\bm{V}}_l\widehat{\bm{V}}_l^*\rb\\
&=\lb\mathcal{P}_{\bm{U}}-\mathcal{P}_{\widehat{\bm{U}}_l}\rb\lb\mathcal{I}-\mathcal{P}_{\bm{V}}\rb\lb \widehat{\bm{L}}_l-\mathcal{G}\bm{y}\rb.
\end{split}
\end{equation*}
Therefore 
\begin{equation*}
\begin{split}
I_7&=2\ln\mathcal{P}_{\widehat{\mathcal{S}}_l}\mathcal{G}\lb\mathcal{I}-\widehat{p}^{-1}\mathcal{P}_{\widehat{\Omega}_{l+1}}\rb\mathcal{G}^*\left(\mathcal{I}-\mathcal{P}_{\widehat{\mathcal{S}}_l}\right)\lb\mathcal{P}_{\bm{U}}-\mathcal{P}_{\widehat{\bm{U}}_l}\rb\lb\mathcal{I}-\mathcal{P}_{\bm{V}}\rb\lb\widehat{\bm{L}}_l-\mathcal{G}\bm{y}\rb\rn_F\cr
&\leq 2\left\|\mathcal{P}_{\widehat{\mathcal{S}}_l}\mathcal{G}\lb\mathcal{I}-\widehat{p}^{-1}\mathcal{P}_{\widehat{\Omega}_{l+1}}\rb\mathcal{G}^*\left(\mathcal{I}-\mathcal{P}_{\widehat{\mathcal{S}}_l}\right)\lb\mathcal{P}_{\bm{U}}-\mathcal{P}_{\widehat{\bm{U}}_l}\rb\right\| \left\|\widehat{\bm{L}}_l-\mathcal{G}\bm{y}\right\|_F\\
&\leq 16\kappa\sqrt{\frac{16000\mu_0 c_s r\log(n)}{81\widehat{m}}}\left\|\widetilde{\bm{L}}_l-\mathcal{G}\bm{y}\right\|_F
\end{split}
\end{equation*}
with probability at least $1-n^2$,
where the last inequality follows from Lem.~\ref{lem:incoherence} and the left inequality of \eqref{eq:after_trimming}.
Putting the bounds for $I_5$, $I_6$ and $I_7$ together gives 
\begin{align*}
\|\widetilde{\bm{L}}_{l+1}-\mathcal{G}\bm{y}\|_F \leq\lb\frac{1}{2}+326\kappa\sqrt{\frac{\mu_0 c_s r\log(n)}{\widehat{m}}}\rb\left\|\widetilde{\bm{L}}_l-\mathcal{G}\bm{y}\right\|_F\leq \frac{5}{6}\left\|\widetilde{\bm{L}}_l-\mathcal{G}\bm{y}\right\|_F
\end{align*}
with probability at least $1-2n^{-2}$ provided $\whm\geq C\mu_0 c_s\kappa^2 r\log(n)$ for a sufficiently large universal constant $C$. Clearly on the same event, \eqref{lem10_assumption} also holds for the $(l+1)$-th iteration.

Since $\widetilde{\bm{L}}_0=\mathcal{T}_r\lb\widehat{p}^{-1}\mathcal{H}\mathcal{P}_{\Omega_0}\lb\bm{x}\rb\rb$, \eqref{lem10_assumption} is valid for $l=0$ with probability at least $1-n^2$ provides 
\begin{align*}
\whm\geq C\mu_0c_s\kappa^6r^2\log(n)
\end{align*}
for some numerical constant $C>0$. Taking the upper bound on the number of measurements completes the proof of Lem.~\ref{lem:resampling} by noting $\H\bx=\G\by$.
\end{proof}
\begin{proof}[Proof of Theorem~\ref{thm:resampling}]
The third condition \eqref{c3} in Thm.~\ref{thm:local} can be satisfied with probability at least $1-(2L+1)n^{-2}$ if we take $L=\left\lceil6\log\left(\frac{\sqrt{n}\log(n)}{16\varepsilon_0}\right)\right\rceil$. So the theorem can be proved by combining this result together with Lems.~\ref{lem:sampling} and \ref{lem:RIPPS}.
\end{proof}
\section{Conclusion and Future Directions}\label{sec:conclusion}
We have proposed two new algorithms IHT and FIHT to reconstruct  spectrally sparse signals from partial revealed entries via low rank Hankel matrix completion. While the empirical phase transitions of IHT and FIHT are similar to those of existing convex optimization approaches in the literature, IHT and FIHT are more computationally efficient. Theoretical recovery guarantees are established for FIHT under two different initialization strategies. The sampling complexity for FIHT with one step hard thresholding initialization is highly pessimistic when compared with the empirical observations, which suggests the possibility of improving this result in the future.

Though IHT and FIHT are implemented for fixed rank problems (i.e., the number of frequencies in the spectrally sparse signal is known a priori) in this paper, the common rank increasing or decreasing heuristics can be incorporated into them as well. When the number of frequencies is not known, we also suggest replacing the hard thresholding operator in IHT and FIHT with the soft thresholding operator or more complicated shrinkage operators. The theoretical guarantee analysis of these new variants is an interesting future  topic.

The numerical simulations in Sec.~\ref{sec:robust} show that both IHT and FIHT are very robust under additive noise.  As future work, we will extend our analysis to noisy measurements. It is also interesting to study  the Gaussian random sampling model for spectrally sparse signal reconstruction problems, and investigate whether  a new property analogous to D-RIP in \cite{candes2011compressed} can be  established for this model since low rank Hankel matrix reconstruction has similar algebraic structure with   compressed sensing under the tight frame analysis sparsity model as presented in Sec.~\ref{sec:connection}.
\section*{Acknowledgments}
KW acknowledges support from the NSF via grant {DTRA-DMS} 1322393.

\bibliography{hankel}
\bibliographystyle{siam}



\end{document}